\begin{document}
%
\title{Living without Beth and Craig:\\ Definitions and Interpolants
in the Guarded and Two-Variable Fragments}

\author{\IEEEauthorblockN{Jean Christoph Jung}
\IEEEauthorblockA{University of Hildesheim\\
jungj@uni-hildesheim.de}
 \and
 \IEEEauthorblockN{Frank Wolter}
 \IEEEauthorblockA{University of Liverpool\\
 wolter@liverpool.ac.uk}
 }


%

\IEEEoverridecommandlockouts
\IEEEpubid{\makebox[\columnwidth]{978-1-6654-4895-6/21/\$31.00~
\copyright2021 IEEE \hfill} \hspace{\columnsep}\makebox[\columnwidth]{ }}


\newtheorem{theorem}{Theorem}
\newtheorem{lemma}{Lemma}
\newtheorem{corollary}{Corollary}
\newtheorem{definition}{Definition}
\newtheorem{example}{Example}
\newtheorem{remark}{Remark}

\IEEEoverridecommandlockouts
\IEEEpubid{\makebox[\columnwidth]{978-1-6654-4895-6/21/\$31.00~
		\copyright2021 IEEE \hfill} \hspace{\columnsep}\makebox[\columnwidth]{ }}

\maketitle

\begin{abstract} In logics with the Craig interpolation property (CIP)
  the existence of an interpolant for an implication follows from the
  validity of the implication. In logics with the projective Beth
  definability property (PBDP), the existence of an explicit
  definition of a relation follows from the validity of a formula
  expressing its implicit definability. The two-variable fragment,
  FO$^{2}$, and the guarded fragment, GF, of first-order logic both
  fail to have the CIP and the PBDP. We show that nevertheless in both
  fragments the existence of interpolants and explicit definitions is
  decidable. In GF, both problems are \ThreeExpTime-complete in
  general, and \TwoExpTime-complete if the arity of relation symbols
  is bounded by a constant $c\geq 3$.  In FO$^{2}$, we prove a
  \coNTwoExpTime upper bound and a \TwoExpTime lower bound for both
  problems. Thus, both for GF and FO$^2$ existence of interpolants and
  explicit definitions are decidable but harder than validity (in case of FO$^2$
  under standard complexity assumptions).
%


%
\end{abstract}
%

%
\IEEEpeerreviewmaketitle

\section{Introduction}
A logic enjoys the \emph{Craig Interpolation Property} (CIP) 
if an implication $\varphi \Rightarrow \psi$ is valid if, and only if, 
there exists a formula $\chi$ using only
the common symbols of $\varphi$ and $\psi$ such that
$\varphi\Rightarrow \chi$ and $\chi \Rightarrow \psi$ are both valid.
The formula $\chi$ is then called
an interpolant for $\varphi \Rightarrow \psi$. The CIP is generally regarded as one of the most
important and useful results in formal logic, with numerous
applications~\cite{van2008many,DBLP:conf/cav/McMillan03,10.1007/978-3-540-78800-3_30,TenEtAl13,DBLP:series/synthesis/2016Benedikt}. 
One particularly interesting consequence of the CIP is the \emph{Projective Beth Definability Property} (PBDP), which states that 
if a relation is implicitly definable over symbols in a 
signature $\tau$, then it is explicitly definable over $\tau$. 


From an algorithmic viewpoint, the CIP and PBDP are of interest
because they reduce existence problems to validity checking: an
interpolant \emph{exists} if, and only if, an implication is valid and an explicit definition \emph{exists} if, and only if, a straightforward formula stating implicit definability
is valid. The interpolant and explicit definition existence problems are thus not harder than validity.

In this article, we investigate the interpolant and explicit definition existence problem for two fragments of first-order logic (FO) that fail to have the CIP and PBDP: the guarded fragment (GF) and the two-variable fragment
(FO$^{2}$) of FO. GF has been introduced as a generalization of modal logic
that enjoys many of its attractive algorithmic and model-theoretic properties, including decidability, the finite model property,
the tree-like model property, and preservation properties such as the
\L{}o\'s-Tarski preservation theorem~\cite{ANvB98,DBLP:journals/jsyml/Gradel99}. Since its introduction, the guarded fragment and variants of it have been investigated extensively~\cite{DBLP:conf/lics/GradelW99,DBLP:conf/lics/GradelHO00,DBLP:journals/jacm/Otto12,DBLP:journals/jacm/BaranyCS15}, not only as a natural generalisation 
of modal logic but also in databases and knowledge representation~\cite{DBLP:journals/corr/BaranyGO13,tocl2020}. 

While GF is a good generalization of modal logic in many respects, in
contrast to modal logic it neither enjoys the
CIP~\cite{DBLP:conf/lpar/HooglandMO99} nor the
PBDP~\cite{DBLP:conf/mfcs/BaranyBC13}. Note, however, that GF  
enjoys the (non-projective) Beth Definability Property (BDP) in which
the signature $\tau$ of the implicit and explicit definitions contains
all symbols except the relation to be defined~\cite{DBLP:conf/lpar/HooglandMO99}.

Fragments of first-order logic with at most $k\geq 1$ variables have been investigated in a variety of contexts, for example in finite-model theory~\cite{Henkin,DBLP:books/cu/O2017,grohe_1998}. The two-variable fragment FO$^{2}$ is of particular interest as it is decidable (and any $k$-variable fragment with $k\geq 3$ is undecidable) and also generalizes modal logic. In fact, satisfiability of FO$^{2}$ formulas is \NExpTime-complete~\cite{DBLP:journals/bsl/GradelKV97} and FO$^{2}$ shares with modal logic and GF the finite model property. In contrast to modal logic and GF, however, it does not enjoy any tree-like model property and is less robust under extensions~\cite{DBLP:conf/dimacs/Vardi96,DBLP:books/ws/phaunRS01/Gradel01,DBLP:journals/siglog/KieronskiPT18}. 
Failure of the CIP for FO$^{2}$ was first shown using algebraic
techniques~\cite{comer1969,Pigozzi71}; more recently, an alternative
model-theoretical
proof was given~\cite{DBLP:journals/ndjfl/MarxA98}.
In contrast to GF, FO$^{2}$ does not only not enjoy the PBDP but also not the BDP~\cite{NemetiBeth2,Andreka1}.

In this article, we aim to understand better the complexity of
deciding the existence of interpolants and explicit definitions for
logics that do not enjoy the CIP and PBDP. In addition, our motivation
for investigating these existence problems in GF and FO$^{2}$ stems
from the following applications.

\medskip \noindent \emph{Strong separability of labeled data under
ontologies.} There are several scenarios in which one aims to find a
logical formula that separates positive from negative examples given
in the form of labeled data items. Examples include concept learning
in description logic~\cite{DBLP:journals/ml/LehmannH10}, reverse
engineering of database queries, also known as query by example
(QBE)~\cite{martins2019reverse}, and generating referring expressions
(GRE), where the aim is to find a formula that separates a single
positive data item from all other data
items~\cite{DBLP:journals/coling/KrahmerD12}.
In~\cite{JLPW-KR20,JLPW-DL20} an attempt is made to provide a unifying
framework for these scenarios under the assumption that the data is
given by a relational database and additional background information
is available in the form of an ontology in first-order logic. A
natural version of separability then asks whether for an ontology
$\Omc$, a database 
$\Dmc$, a signature $\tau$ of relation symbols, and sets $P$ (of
positive examples) and $N$ (of negative examples) of tuples in $\Dmc$
of the same length whether there exists a formula $\varphi$ over
$\tau$ that separates $P$ from $N$ in the sense that $\Omc\cup
\Dmc\models \varphi(\abf)$ for all $\abf\in P$, and $\Omc\cup
\Dmc\models \neg \varphi(\bbf)$, for all $\bbf\in N$. For the
fundamental cases that $\Omc$ is in GF or FO$^{2}$ and one asks for a
separating formula in GF or FO$^{2}$, respectively, it is not
difficult to see that there is a polynomial time reduction of
separability to interpolant existence. Moreover, interpolants give
rise to separating formulas and vice versa.
	
\smallskip
\noindent
\emph{Explicit definitions of relation symbols under GF and FO$^{2}$-sentences.}
The computation of explicit definitions 
of relations under ontologies has been proposed to support ontology engineering~\cite{DBLP:conf/kr/CateCMV06,TenEtAl13,DBLP:conf/ekaw/GeletaPT16}. For example, such definitions can then be included in the ontology instead of less transparent general axioms. In this application the focus shifts from interpolants to the existence of explicit definitions over a signature.

%
%
%
%
	
	\medskip

The following theorem summarizes our results:
\begin{theorem} \label{lem:lower-bounds}
	(i) The explicit GF-definability and the GF-interpolant existence problems are 
	both \ThreeExpTime-complete in general, and \TwoExpTime-complete if the arity of
	relation symbols is bounded by a constant $c\geq 3$.

	(ii) The explicit FO$^{2}$-definability and the
	FO$^{2}$-interpolant existence problems are in \coNTwoExpTime and \TwoExpTime-hard. \TwoExpTime-hardness holds already for explicit FO$^{2}$-definability using any symbol except the defined one.
\end{theorem}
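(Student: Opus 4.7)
The proof strategy follows a standard two-step pattern: first reduce explicit definability to interpolant existence and characterize the latter model-theoretically, then turn the characterization into an algorithm. For the reduction, observe that $R$ has an explicit $\tau$-definition under $\varphi$ iff the implication $\varphi(R)\wedge R(\bar x) \Rightarrow \bigl(\varphi'(R')\Rightarrow R'(\bar x)\bigr)$ has a $\tau$-interpolant, where $\varphi'$ is obtained from $\varphi$ by renaming every non-$\tau$ symbol (including $R$) to a fresh primed copy; this reduction is polynomial and stays inside both GF and FO$^2$. The non-existence of an interpolant for $\varphi\Rightarrow\psi$ over a common signature $\tau$ is then characterized semantically: no interpolant exists iff there exist pointed models $(\mathfrak{A},\bar a)\models\varphi$ and $(\mathfrak{B},\bar b)\models\neg\psi$ whose $\tau$-reducts are \emph{guarded-bisimilar} at $\bar a, \bar b$ (for GF) or realize the same $\tau$-restricted FO$^2$-types (for FO$^2$). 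The non-trivial direction combines $\omega$-saturated models with the classical preservation theorems of GF under guarded bisimulation and of FO$^2$ under its analogous back-and-forth equivalence.

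For the upper bounds in GF, the plan is to exploit the tree-like model property: up to guarded bisimulation, every satisfiable GF-sentence has a model of guarded tree-width bounded by the maximal arity of its relation symbols. Hence a witnessing pair of bisimilar models can be taken to be tree-decomposed, and its existence can be decided by a two-way alternating parity tree automaton that traverses the two trees in lock-step, verifies $\varphi$ locally on the first and $\neg\psi$ locally on the second, and maintains a $\tau$-guarded back-and-forth relation between the node labels. The alphabet is indexed by guarded types and is of double-exponential size in general, single-exponential under bounded arity; the standard non-emptiness complexity of such automata then yields the \ThreeExpTime and \TwoExpTime upper bounds. For FO$^2$, where no tree-model property is available, one instead argues directly: a witnessing pair can always be collapsed, by quotienting under FO$^2$-type equivalence, to a pair of at most doubly-exponential size, so a nondeterministic guess-and-verify procedure gives the \coNTwoExpTime bound.

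The lower bounds come from encoding alternating space-bounded Turing machines. For GF, a reduction from alternating doubly-exponential-space (resp.\ exponential-space) ATMs produces the \ThreeExpTime (resp.\ \TwoExpTime) lower bounds: configurations are encoded as guarded trees equipped with counters, and implicit definability is exploited to force a twin structure to agree with the intended counter values position by position, which yields exactly the ``extra exponential'' over GF-satisfiability. For FO$^2$, a similar but more restrictive encoding from alternating exponential-space machines yields the \TwoExpTime lower bound on interpolant existence. The main obstacle, I expect, is the sharpened lower bound for FO$^2$ \emph{definability using all symbols but the defined one}: here every auxiliary predicate of the encoded sentence must also be available to the candidate explicit definition, so one has to design a coding in which implicit definability is transparent yet no explicit definition can exist short of internally simulating the entire ATM computation, and achieving this with only two variables and without guard constraints forces a delicate choice of encoding that prevents auxiliary predicates from offering short-cuts.
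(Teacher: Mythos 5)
Your high-level framework matches the paper's: reduce explicit definability to interpolant existence, characterize interpolant non-existence semantically as the existence of $\tau$-bisimilar (for GF, guarded-bisimilar; for FO$^2$, FO$^2(\tau)$-bisimilar) witnessing models (``joint $\Lmc(\tau)$-consistency''), obtain upper bounds from bounded witnesses, and prove lower bounds by encoding space-bounded ATMs. However, your proposed decision procedures diverge from the paper's and each contains a gap that, as written, would not carry through.

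For the GF upper bound the paper does not use tree automata at all: it uses a purely combinatorial \emph{mosaic} construction, where a $\tau$-mosaic is a set of $\Xi$-types that is $\tau$-uniform, closed under restrictions, and GF($\tau$)-bisimulation saturated, and the algorithm is a fixed-point elimination on the (triple-exponentially many) mosaics. Your automata idea of ``traversing the two trees in lock-step and maintaining a $\tau$-guarded back-and-forth relation between the node labels'' overlooks that a guarded bisimulation is not a bijective correspondence between the two tree decompositions: a single guarded set in $\Amf$ is in general related to \emph{many} guarded sets in $\Bmf$ (and vice versa), and these relatives live at unboundedly different depths. An automaton running on a single position pair cannot witness this; it would have to carry, at each step, a whole \emph{set} of types that are mutually bisimilar --- which is precisely what a mosaic is. So either your state space must already be the set of mosaics (collapsing your approach into the paper's), or the argument as sketched does not verify bisimilarity. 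The complexity count you give (``alphabet ... double-exponential in general'') corresponds to the number of types, not to the set of mosaics, so the triple-exponential bound does not yet fall out of your sketch.

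For FO$^2$ your claim that a witnessing pair ``can always be collapsed, by quotienting under FO$^2$-type equivalence, to a pair of at most doubly-exponential size'' is false as stated: identifying all elements of the same FO$^2$-type does not preserve FO$^2$ satisfaction (this is exactly the phenomenon that makes the Gr\"adel--Kolaitis--Vardi finite-model-property proof non-trivial, with its distinguished ``kings''). Moreover, the collapse must simultaneously preserve \emph{two} models and a bisimulation between them, which is even more delicate. The paper instead builds fresh models from scratch: it introduces pairs $(\Phi_1,\Phi_2)$ of type-sets (\emph{mosaics} generated by bisimulation classes of elements), distinguishes ``king mosaics'' from ``pawn mosaics,'' and assembles $\Bmf_1,\Bmf_2$ from a carefully chosen number of copies of types and pawn mosaics so that both Condition~1 (each copy realizes its 1-type) and Condition~2 (bisimilarity of copies sharing a mosaic) can be enforced by explicit link-type choices (P1)--(P5). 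That construction is not a quotient of the original models; the quotient route would need a separate, and currently missing, argument. Finally, for the sharpened FO$^2$ lower bound you correctly identify the difficulty but leave it open; the paper's fix is concrete: replace each auxiliary unary atom $E(x)$ by a definable formula $\chi_E(x)$ that reaches the ``special'' $A$-element via a fresh binary $R_E$ and an extra binary $N$, so that every symbol except $A$ can be placed in $\tau'$ while the encoding still forces the intended counters.
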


For GF, it follows that interpolant and explicit definition existence
are exactly one exponential harder than validity, both in general and
if the arity of relation symbols is bounded by a constant $c\geq
3$~\cite{DBLP:journals/jsyml/Gradel99}. If the arity of symbols is
bounded by two, the corresponding fragment of GF enjoys both CIP and 
PBDP~\cite{DBLP:journals/sLogica/HooglandM02} and so interpolant and
explicit definition existence are \ExpTime-complete. 
%
Explicit
GF-definability using any symbols except the defined one is polynomial
time reducible to validity since GF has the BDP. For FO$^{2}$, it
follows that all these problems are harder than validity, unless
\coNExpTime = \TwoExpTime. Finding tight complexity bounds remains an
open problem.

The proofs start with a straightforward model-theoretic characterization
of the non-existence of an interpolant for an implication
$\varphi\Rightarrow \psi$ by the existence of appropriate
bisimulations between models satisfying $\varphi$ and $\neg \psi$,
respectively. The \emph{guarded bisimulations} used for GF were
introduced in~\cite{ANvB98} to characterize the expressive power of GF
within FO, see also~\cite{goranko20075,DBLP:books/daglib/p/Gradel014}.
The FO$^{2}$-bisimulations used for FO$^{2}$ are a
variant of the well-known pebble games characterizing finite variable logics~\cite{DBLP:journals/jsyml/Barwise77,DBLP:journals/jcss/Immerman82}. 
For GF, we then employ a mosaic-based approach, using as mosaics sets of types over $\varphi,\neg \psi$ which can be satisfied by tuples that are guarded bisimilar. Constraints for sets of such
mosaics characterize when they can be linked together to construct,
simultaneously, models of $\varphi$ and $\neg\psi$ and a guarded bisimulation between them. The triple exponential upper bound then
follows from the observation that there are triple exponentially many
mosaics. If the arity of relation symbols is bounded by a constant, then there are only double exponentially many mosaics. The lower bounds are proved by a reduction of the
word problem for languages recognized by space-bounded alternating Turing
machines.

For FO$^{2}$ we show, using mosaics that are similar to those
introduced for GF, that if there are FO$^{2}$-bisimilar models
satisfying  FO$^{2}$-formulas $\varphi$, $\neg\psi$, then there are
such models of at most double-exponential size. The \coNTwoExpTime
upper bound follows immediately from this finite model property
result. The lower bound is again proved by reduction of the word
problem for languages recognized by space-bounded alternating Turing machines.

\section{Related Work}
\label{sec:rw}

The problem of deciding the existence of explicit definitions and interpolants has hardly been studied for logics without the PBDP and CIP, respectively.
Exceptions are linear temporal logic, LTL, for which the decidability of interpolant existence has been shown in~\cite{DBLP:journals/corr/PlaceZ14,henkell1,henkell2} 
and description logics with nominals and/or role inclusions for which
2\ExpTime-completeness has recently been shown~\cite{AJMOW-AAAI21}.
Our techniques are inspired by~\cite{AJMOW-AAAI21} but are
significantly more involved.

Query determinacy and rewritability in databases can also be regarded
as explicit definability
problems~\cite{10.1145/1265530.1265534,DBLP:journals/tods/NashSV10,DBLP:conf/icdt/Marcinkowski20},
but there the focus is mainly on database query languages such as (unions) of conjunctive queries. The importance of interpolants and explicit definitions for a large variety of database applications is discussed in~\cite{DBLP:series/synthesis/2011Toman,DBLP:series/synthesis/2016Benedikt}. 

%
The guarded negation fragment of FO (GNF) extends GF by adding,
in a careful way, unions of conjunctive
queries~\cite{DBLP:journals/jacm/BaranyCS15}. It is still decidable, has
the finite model property and the tree-like model property, and
enjoys various preservation
theorems~\cite{DBLP:journals/jacm/BaranyCS15,DBLP:journals/jsyml/BaranyBC18}.
Importantly, and in contrast to GF, GNF enjoys the CIP and the
PBDP~\cite{DBLP:journals/jsyml/BaranyBC18,DBLP:journals/tocl/BenediktCB16}.
Thus, the existence of Craig interpolants and explicit definitions
reduces to validity checking which is \TwoExpTime-complete in GNF and
even in \ExpTime if the arity of relation symbols is bounded by a
constant. Thus, the existence of interpolants and explicit definitions
is one exponential harder in GF than in GNF.

Also related is work on uniform interpolation. As GF and FO$^{2}$ do not enjoy
the CIP, they also do not enjoy the uniform interpolation property
(UIP). 
In fact, uniform interpolant existence is known to be undecidable both 
for GF and FO$^{2}$~\cite{JLMSW17}, which is in contrast to the
decidability results obtained in this article for interpolant existence. 
It is also in contrast to the decidability of uniform interpolant existence problems in many standard description logics~\cite{DBLP:conf/ijcai/LutzW11,DBLP:conf/kr/LutzSW12}.
We note that GF does enjoy a `modal variant' of both the CIP and the UIP, in which besides shared symbols all
	symbols that occur in guards are allowed in the
	interpolant~\cite{HM99,DBLP:journals/sLogica/HooglandM02,DBLP:journals/tcs/DAgostinoL15}.  
	
Recently, it has been shown in \cite{andrka2020twovariable} that FO$^{2}$ 
enjoys the weak Beth definability property which requires the relation
to be explicitly defined not only to be implicitly definable but also to
exist.
Also relevant for this work is the investigation
of interpolation and definability in modal logic in general~\cite{GabMaks} and in hybrid modal logic~\cite{DBLP:journals/jsyml/ArecesBM01,DBLP:journals/jsyml/Cate05}.

\section{Preliminaries}

Let $\tau$ range over relational signatures not containing function or constant symbols. Denote by FO$(\tau)$ the set of first-order (FO) formulas constructed from
atomic formulas $x=y$ and $R(\xbf)$, $R\in \tau$, 
using conjunction, disjunction, negation, and existential and
universal quantification.
The signature $\text{sig}(\varphi)$ of an FO-formula $\varphi$ is the set of relation symbols used in it.
As usual, we write $\vp(\xbf)$ to indicate that the free variables
in $\vp$ are all from $\xbf$ and call a formula without free variables
a \emph{sentence}. 
FO$(\tau)$ is interpreted in $\tau$-structures
$ \Amf=(\text{dom}(\Amf),(R^{\Amf})_{R\in \tau})$,
where $\text{dom}(\Amf)$ is the non-empty \emph{domain} of $\Amf$,
and each $R^{\Amf}$ is a relation over $\text{dom}(\Amf)$ whose arity
matches that of $R$. We often drop $\tau$ and simply speak of structures $\Amf$. 

In the \emph{guarded fragment}, GF of
FO~\cite{ANvB98,DBLP:journals/jsyml/Gradel99}, formulas are built from
atomic formulas $R(\xbf)$ and $x=y$ by applying the Boolean
connectives and \emph{guarded quantifiers} of the form $$ \forall
\ybf(\alpha(\xbf,\ybf)\rightarrow \varphi(\xbf,\ybf)) \text{ and }
\exists \ybf(\alpha(\xbf,\ybf)\wedge \varphi(\xbf,\ybf)) $$ where
$\varphi(\xbf,\ybf)$ is a guarded formula, and $\alpha(\xbf,\ybf)$ is
an atomic formula that contains all variables in $\xbf,\ybf$. The
formula $\alpha$
is called the \emph{guard of the quantifier}.  GF$(\tau)$ denotes the
set of all guarded formulas (also called GF-formulas) over
signature~$\tau$.  We regard $\forall\ybf(\alpha(\xbf,\ybf)\rightarrow
\varphi(\xbf,\ybf))$ as an abbreviation for $\neg\exists
\ybf(\alpha(\xbf,\ybf)\wedge\neg \varphi(\xbf,\ybf))$.  The
\emph{two-variable fragment}, FO$^{2}$, of FO consists of all formulas 
in FO 
using two distinct variables. 

Let $\Amf$ be structure. A pair $\Amf,\abf$ with $\abf$ a tuple in $\Amf$ is called a \emph{pointed structure}. It will be convenient to use the notation $[\abf]=\{a_{1},\ldots,a_{n}\}$ to denote the set of components of the tuple $\abf=(a_{1},\ldots,a_{n})\in \text{dom}(\Amf)^{n}$.
Similarly, for a tuple $\xbf = (x_{1},\ldots,x_{n})$ of variables we use $[\xbf]$ to denote the set $\{x_{1},\ldots,x_{n}\}$.

We next recall model-theoretic characterizations of when pointed structures cannot be distinguished in either GF or FO$^{2}$. 
We begin by introducing GF($\tau$)-bisimulations (often called guarded $\tau$-bisimulations)~\cite{DBLP:books/daglib/p/Gradel014}.
A set $G \subseteq \text{dom}(\Amf)$ is \emph{$\tau$-guarded} in $\mathfrak{A}$ if $G$ is a singleton or 
there exists $R\in \tau$ with $\Amf\models R(\abf)$ such that $G = [\abf]$. 
A tuple $\abf \in \text{dom}(\Amf)^{n}$ is
\emph{$\tau$-guarded} in $\mathfrak{A}$ if $[\abf]$ is a subset of some $\tau$-guarded set in~$\mathfrak{A}$. 

For tuples $\abf=(a_{1},\ldots,a_{n})$ in $\mathfrak{A}$
and $\bbf=(b_{1},\ldots,b_{n})$ in $\mathfrak{B}$ we call a mapping $p$ from $[\abf]$ to
$[\bbf]$ with $p(a_{i})=b_{i}$ for $1\leq i \leq n$ (written $p:\abf\mapsto \bbf$)
a \emph{partial $\tau$-isomorphism} if $p$ is an isomorphism from the $\tau$-reduct of $\mathfrak{A}_{|[\abf]}$ onto
$\mathfrak{B}_{|[\bbf]}$, where $\Amf_{|X}$ denotes the restriction of a structure $\Amf$ to a subset $X$ of its domain.

A set $I$ of partial $\tau$-isomorphisms $p: \abf \mapsto \bbf$ 
from $\tau$-guarded tuples $\abf$ in $\mathfrak{A}$ to $\tau$-guarded tuples $\bbf$ in $\Bmf$ is a 
\emph{GF($\tau$)-bisimulation} if the following hold for all 
$p: \abf \mapsto \bbf\in I$:
\begin{enumerate}
	
	\item[(i)] for every $\tau$-guarded tuple $\abf'$ in $\Amf$ there exists a $\tau$-guarded
	tuple $\bbf'$ in $\Bmf$ and $p': \abf'\mapsto \bbf'\in I$
	such that $p'$ and $p$ coincide on $[\abf]\cap [\abf']$.
	
	\item[(ii)] for every $\tau$-guarded tuple $\bbf'$ in $\Bmf$ there exists a $\tau$-guarded
	tuple $\abf'$ in $\Amf$ and $p': \abf'\mapsto \bbf'\in I$
	such that $p'^{-1}$ and $p^{-1}$ coincide on $[\bbf]\cap
	[\bbf']$.
	
\end{enumerate}
Assume that $\abf$ and $\bbf$ are (possibly not $\tau$-guarded) tuples in $\Amf$ and $\Bmf$. Then we say that the pointed structures $\Amf,\abf$ and $\Bmf,\bbf$ are \emph{GF($\tau$)-bisimilar},
in symbols $\Amf,\abf \sim_{\text{GF},\tau} \Bmf,\bbf$,
if there exists a partial $\tau$-isomorphism 
$p: \abf\mapsto \bbf$ and a GF($\tau$)-bisimulation $I$ such that Conditions~(i) and~(ii) hold for $p$. 

Next we introduce appropriate bisimulations for FO$^2$, which are
essentially a relational variant of 
the infinite $2$-pebble games which have been used to
characterize the expressive power of FO$^2$, see e.g.~\cite{DBLP:journals/jsyml/Barwise77,DBLP:journals/jcss/Immerman82}.
Given structures $\Amf,\Bmf$, a relation $S\subseteq
\text{dom}(\Amf)\times\text{dom}(\Bmf)$ is an
\emph{FO$^2(\tau)$-bisimulation between \Amf and \Bmf} if $S$ is
\emph{global}, that is, $\text{dom}(\Amf)\subseteq \{a\mid (a,b)\in
S\}$ and $\text{dom}(\Bmf)\subseteq \{b\mid (a,b)\in S\}$ and, for
every $(a,b)\in S$ the following conditions are satisfied:
\begin{itemize}
	

    \item[(i)] for every $a'\in\text{dom}(\Amf)$, there is a
    $b'\in \text{dom}(\Bmf)$ such that $(a,a')\mapsto (b,b')$ is a
    partial $\tau$-isomorphism between \Amf and \Bmf and $(a',b')\in
    S$;
    
    \item[(ii)] for every $b'\in\text{dom}(\Bmf)$, 
    there is a $a'\in \text{dom}(\Amf)$ such that 
    $(a,a')\mapsto (b,b')$ is a
    partial $\tau$-isomorphism between \Amf and \Bmf and $(a',b')\in
    S$.
\end{itemize}
For tuples $\abf=(a_1,\ldots,a_n),\bbf=(b_1,\ldots,b_n)$ of equal length
$n=0,1,2$, we write $\Amf,\abf\sim_{\text{FO}^2,\tau}\Bmf,\bbf$ iff
$\abf\mapsto\bbf$ is a partial $\tau$-isomorphism between \Amf and
\Bmf and there is an FO$^2(\tau)$-bisimulation $S$ between $\Amf$ and
\Bmf such that $(a_i,b_i)\in S$, for all $i\leq n$.

Now, let $\Lmc$ be either GF or FO$^{2}$. We write $\Amf,\abf
\equiv_{\Lmc,\tau} \Bmf,\bbf$
and call $\Amf,\abf$ and $\Bmf,\bbf$ \emph{$\Lmc(\tau)$-equivalent} if
$\Amf\models \varphi(\abf)$ iff $\Bmf\models \varphi(\bbf)$ holds for
all formulas $\varphi$ in $\Lmc(\tau)$. The following equivalences are
well-known~\cite{goranko20075,DBLP:books/daglib/p/Gradel014}.
\begin{lemma}\label{lem:guardedbisim}
	Let $\Lmc$ be either GF or FO$^{2}$. Let $\Amf,\abf$ and $\Bmf,\bbf$ be pointed structures and $\tau$ a signature. Then
	$$
	\Amf,\abf \sim_{\Lmc,\tau} \Bmf,\bbf \quad \text{ implies } \quad
	\Amf,\abf \equiv_{\Lmc,\tau} \Bmf,\bbf
	$$
	and, conversely, if $\Amf$ and $\Bmf$ are $\omega$-saturated, then
	$$
	\Amf,\abf \equiv_{\Lmc,\tau} \Bmf,\bbf \quad \text{ implies } \quad 
	\Amf,\abf \sim_{\Lmc,\tau} \Bmf,\bbf
	$$
\end{lemma}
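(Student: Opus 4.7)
The plan is to prove both implications by standard back-and-forth arguments, handling GF and FO$^2$ uniformly. For the direction $\sim_{\Lmc,\tau}\Rightarrow\equiv_{\Lmc,\tau}$, I would induct on $\varphi\in\Lmc(\tau)$: atomic formulas are handled by the partial $\tau$-isomorphism $p:\abf\mapsto\bbf$, and Boolean cases by the inductive hypothesis (IH). For quantifiers, condition (i) of the bisimulation handles $\exists$ and (ii) its dual. In GF, since the guard $\alpha(\xbf,\ybf)$ of a quantifier contains all variables involved, any witness for $\exists\ybf(\alpha\wedge\varphi)$ at $\abf$ yields a $\tau$-guarded tuple, and (i) produces a matching $\tau$-guarded tuple in $\Bmf$ whose partial $\tau$-isomorphism coincides with $p$ on shared indices; the IH then transports the witness. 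The FO$^2$ argument is literally the same, restricted to tuples of length at most two.

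For the converse, assume $\Amf,\Bmf$ are $\omega$-saturated and $\Amf,\abf\equiv_{\Lmc,\tau}\Bmf,\bbf$. In the GF case define
\[
I=\{\,p:\abf'\mapsto\bbf'\mid \abf',\bbf'\ \tau\text{-guarded},\ \Amf,\abf'\equiv_{\text{GF},\tau}\Bmf,\bbf'\,\},
\]
and in the FO$^2$ case define $S=\{(a,b)\mid\Amf,a\equiv_{\text{FO}^2,\tau}\Bmf,b\}$. That these contain the anchoring pair, are partial $\tau$-isomorphisms, and (for $S$) global follows directly from applying $\equiv_{\Lmc,\tau}$ to quantifier-free formulas and to existential sentences with one free variable, together with $\omega$-saturation.

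The core verification is the forth condition. Given $p:\abf'\mapsto\bbf'\in I$ and a $\tau$-guarded tuple $\mathbf{c}$ in $\Amf$ extending the shared portion of $\abf'$, consider the set $\Sigma$ of all GF$(\tau)$-formulas $\varphi(\xbf)$ with $\Amf\models\varphi(\mathbf{c})$, viewed as a type over the positions of $\mathbf{c}$ lying in $\abf'$. For any finite $\Sigma_0\subseteq\Sigma$, the guarded existential $\exists\ybf(\alpha\wedge\bigwedge\Sigma_0)$, with $\alpha$ a guard of $\mathbf{c}$ and $\ybf$ the new positions, is itself a GF-formula holding in $\Amf$ at the corresponding subtuple of $\abf'$, hence in $\Bmf$ at its $p$-image. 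By $\omega$-saturation of $\Bmf$, $\Sigma$ is realized by some tuple $\mathbf{d}$, which is $\tau$-guarded (the guard lies in $\Sigma_0$) and yields a partial $\tau$-isomorphism belonging to $I$. The FO$^2$ forth condition is handled identically with $\mathbf{c}$ replaced by a single element and the type living over one parameter; the bound of two variables is precisely why a one-variable existential suffices to witness the relevant formulas.

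The main obstacle is this $\omega$-saturation step in the GF case: one must express ``there is a $\tau$-guarded extension realizing a given finite conjunction of GF-formulas'' as a single GF-formula over the parameters already in play. The guarded existential quantifier is tailor-made for this, but it requires careful book-keeping so that the positions of $\mathbf{c}$ shared with $\abf'$ line up with variables quantified from the outside, ensuring that the realized witness $\mathbf{d}$ agrees with $\bbf'$ on the corresponding indices. This is where the requirement that the guard contain all variables of the quantified block is crucially used.
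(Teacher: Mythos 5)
The paper does not prove this lemma; it simply cites it as well known (Goranko--Otto, Gr\"adel--Otto), so there is no proof in the paper to compare against. Your back-and-forth argument via types and $\omega$-saturation is exactly the standard proof those references give, and it is essentially correct.

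Two small points worth tightening in the GF back direction. First, a $\tau$-guarded tuple $\cbf$ need not itself carry a guard atom; by definition $[\cbf]\subseteq[\ebf]$ for some $\ebf$ with $\Amf\models R(\ebf)$, $R\in\tau$. Your guarded existential $\exists\ybf(\alpha\wedge\bigwedge\Sigma_0)$ must therefore be built around $\ebf$ (with $\alpha=R(\cdot)$ over variables for $\ebf$, and $\ybf$ the fresh positions of $\ebf$), after which you restrict the realized tuple to the positions of $\cbf$; working directly with $\cbf$ as you wrote would in general give no atom to use as the guard. Second, the singleton $\tau$-guarded sets are handled with the degenerate guard $x=x$, which the syntax permits. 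With these adjustments the $\omega$-saturation step goes through: finite pieces of the GF$(\tau)$-type of $\ebf$ over the shared parameters are expressible as guarded existentials in GF$(\tau)$, hence transfer via $\equiv_{\text{GF},\tau}$, and saturation realizes the full type; atomic and guard formulas in the type guarantee that the realization is $\tau$-guarded and that the map is a partial $\tau$-isomorphism agreeing with $p$ on the shared part. The FO$^2$ case is, as you note, the same argument with one new element at a time, and globality of $S$ follows by the same type-realization with an empty parameter set.
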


%
%
%
%
%
\section{Interpolants and Explicit Definitions}
\label{sec:one}
Let $\Lmc$ be either GF or FO$^{2}$. We introduce $\Lmc$-interpolants
and explicit $\Lmc$-definitions and provide model-theoretic
characterizations of the existence of $\Lmc$-interpolants and explicit
$\Lmc$-definitions using $\Lmc$-bisimulations.

Let $\varphi(\xbf),\psi(\xbf)$ be $\Lmc$-formulas
with the same free variables $\xbf$.
We call an $\Lmc$-formula $\theta(\xbf)$ an \emph{$\Lmc$-interpolant
for $\varphi,\psi$} if $\text{sig}(\theta) \subseteq \text{sig}(\varphi)\cap \text{sig}(\psi)$, 
$\varphi(\xbf)\models \theta(\xbf)$, and  $\theta(\xbf)\models \psi(\xbf)$. 
We are interested in \emph{$\Lmc$-interpolant existence},
the problem to decide for given
$\varphi(\xbf),\psi(\xbf)$ in $\Lmc$ whether an $\Lmc$-interpolant for 
$\varphi(\xbf), \psi(\xbf)$ exists. Recall from the introduction that neither GF nor FO$^{2}$ enjoy the \emph{Craig Interpolation Property} (CIP) according to which an $\Lmc$-interpolant for $\Lmc$-formulas $\varphi(\xbf),\psi(\xbf)$ exists iff $\varphi(\xbf)\models \psi(\xbf)$.

Following Robinson's approach to interpolation and definability~\cite{AbRobinson}, we call $\Lmc$-formulas $\varphi(\xbf),\psi(\xbf)$ \emph{jointly
$\Lmc(\tau)$-consistent} if there exist pointed structures $\Amf,\abf$ and
$\Bmf,\bbf$ with $\Amf\models \varphi(\abf)$ and $\Bmf \models
\psi(\bbf)$ such that $\Amf,\abf \sim_{\Lmc,\tau} \Bmf,\bbf$. The
notion of joint consistency has been used to explore interpolation
properties in finite variable infinitary
logics~\cite{DBLP:journals/jsyml/BarwiseB99}
and (implicitly) to show the
lack of the CIP for GF~\cite{DBLP:journals/sLogica/HooglandM02}. Using Lemma~\ref{lem:guardedbisim} we show that
 interpolant existence can in fact be characterized via joint
consistency.
%
%
%
\begin{restatable}{lemma}{lemcharinterpolation}
\label{lem:gf-char-interpolation}
  Let $\Lmc$ be either FO$^{2}$ or GF. Let $\varphi(\xbf),\psi(\xbf)$
  be $\Lmc$-formulas and let
  $\tau= \text{sig}(\varphi) \cap \text{sig}(\psi)$. Then the following conditions are equivalent:
\begin{enumerate}
  	\item there does not
  exist an $\Lmc$-interpolant for $\varphi(\xbf),\psi(\xbf)$;
    \item $\varphi(\xbf),\neg \psi(\xbf)$ are jointly $\Lmc(\tau)$-consistent.
\end{enumerate}  %
\end{restatable}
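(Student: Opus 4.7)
The plan is to give a classical Robinson-style proof: the easy direction (2) $\Rightarrow$ (1) reduces to Lemma~\ref{lem:guardedbisim}, and the direction (1) $\Rightarrow$ (2) is obtained by a double application of compactness combined with $\omega$-saturation. Throughout, I rely on the fact that both FO$^2$ and GF are fragments of FO that are closed under finite conjunction, disjunction and negation, so that first-order compactness may be used freely inside~$\Lmc$.

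For (2) $\Rightarrow$ (1), assume pointed structures $\Amf,\abf\models\varphi$ and $\Bmf,\bbf\models\neg\psi$ with $\Amf,\abf\sim_{\Lmc,\tau}\Bmf,\bbf$. By Lemma~\ref{lem:guardedbisim} the two pointed structures are $\Lmc(\tau)$-equivalent. If some $\Lmc$-interpolant $\theta(\xbf)$ existed, then $\text{sig}(\theta)\subseteq\tau$ and $\Amf\models\theta(\abf)$ would yield $\Bmf\models\theta(\bbf)$ and hence $\Bmf\models\psi(\bbf)$, a contradiction.

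For (1) $\Rightarrow$ (2), the first step is to show that the set
\[
\Gamma(\xbf)\;=\;\{\neg\psi(\xbf)\}\cup\{\theta(\xbf)\in\Lmc(\tau)\mid\varphi(\xbf)\models\theta(\xbf)\}
\]
is FO-satisfiable. Otherwise, FO-compactness yields finitely many $\theta_1,\dots,\theta_n$ in this set with $\theta_1\wedge\cdots\wedge\theta_n\models\psi$; since $\Lmc$ is closed under finite conjunction, $\theta_1\wedge\cdots\wedge\theta_n$ would be an $\Lmc$-interpolant, contradicting (1). So pick $\Bmf,\bbf\models\Gamma$. Next, show that $\{\varphi(\xbf)\}\cup\{\theta(\xbf)\in\Lmc(\tau)\mid\Bmf\models\theta(\bbf)\}$ is FO-satisfiable as well: otherwise, compactness produces $\theta_1,\dots,\theta_n\in\Lmc(\tau)$ with $\Bmf\models\theta_i(\bbf)$ and $\varphi\models\neg(\theta_1\wedge\cdots\wedge\theta_n)$; but then $\neg(\theta_1\wedge\cdots\wedge\theta_n)\in\Lmc(\tau)$ (using closure of $\Lmc$ under conjunction and negation) belongs to $\Gamma$, forcing $\Bmf\models\neg(\theta_1\wedge\cdots\wedge\theta_n)(\bbf)$, which contradicts the choice of the $\theta_i$. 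Pick $\Amf,\abf$ realizing this second set.

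By construction $\Amf\models\varphi(\abf)$ and, since $\Lmc(\tau)$ is closed under negation, for every $\theta\in\Lmc(\tau)$ we have $\Bmf\models\theta(\bbf)$ iff $\Amf\models\theta(\abf)$, i.e.\ $\Amf,\abf\equiv_{\Lmc,\tau}\Bmf,\bbf$. Passing to $\omega$-saturated elementary extensions $\Amf^{\ast}\succeq\Amf$ and $\Bmf^{\ast}\succeq\Bmf$ preserves FO-theories and hence also $\varphi(\abf)$, $\neg\psi(\bbf)$ and $\Lmc(\tau)$-equivalence; the second part of Lemma~\ref{lem:guardedbisim} then upgrades equivalence to an $\Lmc(\tau)$-bisimulation, giving joint $\Lmc(\tau)$-consistency of $\varphi$ and $\neg\psi$. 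The main thing to double-check is the closure behaviour of $\Lmc$, in particular that the negation and finite conjunction of GF- resp.\ FO$^2$-formulas stay in the fragment (which they do, since the Boolean connectives are freely available in both fragments and do not interact with the variable count or with guardedness when applied at the top level).
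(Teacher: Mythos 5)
Your proof is correct and follows essentially the same Robinson-style route as the paper's: the easy direction is identical, and for (1) $\Rightarrow$ (2) the paper also combines compactness, $\omega$-saturated elementary extensions, and the second part of Lemma~\ref{lem:guardedbisim}. The only cosmetic difference is that the paper phrases the hard direction as a contrapositive (assuming no bisimilar pair exists and extracting an interpolant as a finite conjunction from the implied set $\Phi$), whereas you assume no interpolant and build the bisimilar models directly via a two-step compactness argument; these are just different packagings of the same idea.
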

The following example illustrates the introduced notions.
\begin{example} \label{ex:joint}\emph{Consider the GF-formulas
  $\varphi(x),\psi(x)$ given by
  \begin{align*}
    \varphi(x) & = \exists yz\, (G(x,y,z)\wedge R(x,y)\wedge
    R(y,z)\wedge R(z,x))
    \\
    \psi(x) & = A(x) \wedge \forall y\forall z\,(R(y,z)\rightarrow
    (A(y)\leftrightarrow \neg A(z)))
  \end{align*}
  Clearly, we have $\varphi(x)\models\neg \psi(x)$.
  Moreover, the models $\Amf,a$ of $\varphi(x)$ and $\Bmf,b$ of $\psi(x)$
  depicted in Fig.~\ref{fig:models} witness that $\varphi(x)$ and
  $\psi(x)$ are jointly $\text{GF}(\{R\})$-consistent; in fact, the
  witnessing GF-bisimulation contains $n\mapsto m$, for every
  $n\in \text{dom}(\Amf)$, $m\in \text{dom}(\Bmf)$,
  and
  $\abf\mapsto\bbf$, for all $\abf\in R^\Amf$, $\bbf\in R^\Bmf$.
  Lemma~\ref{lem:gf-char-interpolation} implies that there
  is no $\text{GF}$-interpolant for
$\varphi(x),\neg\psi(x)$.}\hfill$\dashv$
\end{example}

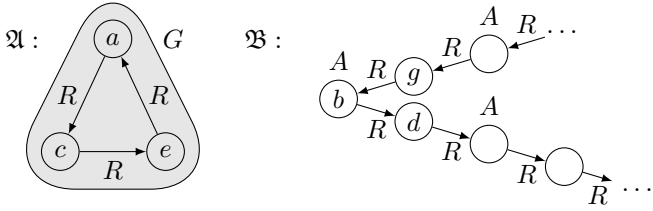
\begin{figure}[t]
  \centering
  \begin{tikzpicture}
    \draw [rounded corners=10mm,fill=gray!20] (0,0)--(3,0)--(1.5,3)--cycle;

    \node (labela) at (0.3,2) {$\Amf:$};
    \node at (2.3,2) {$G$};
    \node[circle,draw,minimum size=5mm,inner sep=0mm] (a) at (1.5,2.0) {$a$};
    \node[circle,draw,minimum size=5mm,inner sep=0mm] (b) at (.8,.5)
    {$c$};
    \node[circle,draw,minimum size=5mm,inner sep=0mm] (c) at (2.2,.5)
    {$e$};
    
    \draw[-latex] (a) to node[left] {$R$} (b);
    \draw[-latex] (b) to node[below] {$R$} (c);
    \draw[-latex] (c) to node[right] {$R$} (a);

    \node (labela) at (3.5,2) {$\Bmf:$};

    \node[circle,draw,label=above:{$A$},minimum size=5mm,inner
    sep=0mm] (d) at (4.5,1.2) {$b$};
    \node[circle,draw,minimum size=5mm,inner sep=0mm] (e) at (5.5,.9)
    {$d$};
    \node[circle,draw,label=above:{$A$},minimum size=5mm,inner
    sep=0mm] (f) at (6.5,.6) {\phantom{$g$}};
    \node[circle,draw,minimum size=5mm,inner sep=0mm] (g) at (7.5,.3)
    {$\phantom{h}$};
    \node (h) at (8.5,0) {$\dots$};

    \node[circle,draw,minimum size=5mm,inner sep=0mm] (e1) at (5.5,1.5)
    {$g$};
    \node[circle,draw,label=above:{$A$},minimum size=5mm,inner
    sep=0mm] (f1) at (6.5,1.8) {$\phantom{g}$};
    \node[minimum size=5mm,inner sep=0mm] (g1) at (7.5,2.1)
    {$\dots$};

    \draw[-latex] (d) to node[below] {$R$} (e);
    \draw[-latex] (e) to node[below] {$R$} (f);
    \draw[-latex] (f) to node[below] {$R$} (g);
    \draw[-latex] (g) to node[below] {$R$} (h);

    \draw[-latex] (e1) to node[above] {$R$} (d);
    \draw[-latex] (f1) to node[above] {$R$} (e1);
    \draw[-latex] (g1) to node[above] {$R$} (f1);
 
  \end{tikzpicture}
%
%
%
%
%
%
%
  \caption{Models for formulas in Example~\ref{ex:joint}}
  \label{fig:models}
\end{figure}

Let $\vp$ be an $\Lmc$-sentence, $\theta(\xbf)$ an $\Lmc$-formula, and
$\tau$ a signature. An $\Lmc(\tau)$-formula $\psi(\xbf)$ is an
\emph{explicit $\Lmc(\tau)$-definition of $\theta$ under $\vp$} if
$\varphi\models \forall \xbf(\theta(\xbf)\leftrightarrow \psi(\xbf))$.
We call $\theta$ \emph{explicitly $\Lmc(\tau)$-definable under
$\varphi$} if such an explicit $\Lmc(\tau)$-definition of $\theta$
under $\vp$ exists.  We call $\theta$ \emph{implicitly
$\Lmc(\tau)$-definable under $\varphi$} if
$\varphi\wedge\varphi'\models \forall \xbf
(\theta(\xbf)\leftrightarrow \theta'(\xbf))$, where $\varphi'$ and
$\theta'$ are obtained from $\varphi$ and $\theta$, respectively, by
renaming all non-$\tau$ symbols $R$ to fresh $R'$ of the same arity.
Obviously, explicit $\Lmc(\tau)$-definability implies implicit
$\Lmc(\tau)$-definability. Recall from the introduction that neither GF nor FO$^{2}$ enjoy the \emph{projective Beth definability property} (PBDP) according to which the converse implication holds.

We consider the problem of \emph{explicit
$\Lmc$-definability}, that is, the problem to decide for given
$\vp,\theta(\xbf),\tau$ whether there is an explicit
$\Lmc(\tau)$-definition of $\theta(\xbf)$ under $\vp$. We first
observe that explicit definition existence reduces to interpolant
existence.

\begin{restatable}{lemma}{lemdeftoint}
\label{lem:def-to-int}
  Let $\Lmc$ be either FO$^{2}$ or GF. There is a polynomial time
  reduction of explicit $\Lmc$-definability to
  $\Lmc$-interpolant existence. 
\end{restatable}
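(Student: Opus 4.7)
The plan is to apply the standard Robinson-style renaming trick. Given an instance $(\varphi,\theta(\xbf),\tau)$ of explicit $\Lmc$-definability, I would output the pair
$$ \varphi_1(\xbf) \;:=\; \varphi\wedge\theta(\xbf), \qquad \varphi_2(\xbf) \;:=\; \varphi' \rightarrow \theta'(\xbf), $$
where $\varphi'$ and $\theta'$ arise from $\varphi$ and $\theta$ by uniformly renaming every non-$\tau$ relation symbol $R$ to a fresh symbol $R'$ of the same arity. This is computable in polynomial time, and the common signature of $\varphi_1,\varphi_2$ equals $(\text{sig}(\varphi)\cup\text{sig}(\theta))\cap\tau$, since the primed symbols are fresh.

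Before proving correctness I would verify that $\varphi_1,\varphi_2$ remain in $\Lmc$. For $\Lmc=\text{GF}$ this holds because GF is closed under Boolean connectives and because renaming a relation symbol to one of the same arity preserves guardedness of every quantifier (a guard $R$ simply becomes the guard $R'$). For $\Lmc=\text{FO}^2$ it is immediate since no new variables are introduced. In particular, writing $\varphi'\rightarrow\theta'$ is legal in both fragments as a derived Boolean connective.

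For correctness I need to show: an explicit $\Lmc(\tau)$-definition of $\theta$ under $\varphi$ exists iff an $\Lmc$-interpolant for $(\varphi_1,\varphi_2)$ exists. For the forward direction, if $\psi(\xbf)\in\Lmc(\tau)$ satisfies $\varphi\models\theta\leftrightarrow\psi$, then $\psi$ is itself an interpolant: the implication $\varphi\wedge\theta\models\psi$ is immediate, and since $\psi$ uses only $\tau$-symbols it is unaffected by the primed renaming, so $\psi\models\varphi'\rightarrow\theta'$ follows from $\varphi\models\psi\rightarrow\theta$ by renaming. Conversely, let $\chi$ be an $\Lmc$-interpolant; by the signature observation above, $\text{sig}(\chi)\subseteq\tau$, so $\chi\in\Lmc(\tau)$. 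The first interpolant property gives $\varphi\wedge\theta\models\chi$, hence $\varphi\models\theta\rightarrow\chi$. For the other direction, take any $\Amf\models\varphi$ and any $\abf$ with $\Amf\models\chi(\abf)$; expand $\Amf$ over the extended signature by setting $(R')^\Amf:=R^\Amf$ for every renamed symbol. This expansion satisfies $\varphi'$, so by $\chi\models\varphi'\rightarrow\theta'$ we get $\theta'(\abf)$, which under our interpretation is $\theta(\abf)$. Hence $\varphi\models\chi\rightarrow\theta$, and $\chi$ is an explicit $\Lmc(\tau)$-definition of $\theta$ under $\varphi$.

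The argument is essentially syntactic and does not require the model-theoretic machinery of Lemma~\ref{lem:gf-char-interpolation}. The only mild obstacle is the first step: one has to inspect the definitions of GF and FO$^2$ carefully to see that the outputs of the reduction genuinely stay inside the fragment, with the key observation for GF being that the renaming map sends guards to guards.
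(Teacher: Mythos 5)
Your proof is correct and uses exactly the same reduction as the paper (the paper's proof is a one-liner asserting the equivalence for $\varphi\wedge\theta(\xbf)$ and $\varphi'\rightarrow\theta'(\xbf)$). Your spelled-out verification of the two directions, the signature bookkeeping, and the check that renaming preserves membership in GF and FO$^2$ are all sound.
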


%

Lemma~\ref{lem:def-to-int} suggests that there is a characterization
of explicit definability in terms of joint
$\Lmc(\tau)$-consistency as well. Indeed, we give this
characterization next. 

\begin{lemma} \label{lem:gf-char-explicit}
  Let $\Lmc$ be either FO$^{2}$ or GF. For every $\Lmc$-sentence
  $\varphi$, every $\Lmc$-formula $\theta(\xbf)$,
  and signature $\tau$, the following conditions are equivalent: 
  \begin{enumerate}

    \item there does not exist an explicit
      $\Lmc(\tau)$-definition of $\theta(\xbf)$ under
      $\varphi$; 

    \item $\varphi\wedge \theta(\xbf)$ and
      $\varphi\wedge\neg\theta(\xbf)$ are jointly
      $\Lmc(\tau)$-consistent. 
      
  \end{enumerate}
\end{lemma}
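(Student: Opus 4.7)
My plan is to prove the two directions separately, handling $(2) \Rightarrow (1)$ directly from Lemma~\ref{lem:guardedbisim} and reducing $(1) \Rightarrow (2)$ to the interpolation characterisation in Lemma~\ref{lem:gf-char-interpolation} via the same renaming trick that underlies Lemma~\ref{lem:def-to-int}.

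For $(2) \Rightarrow (1)$, I would argue contrapositively: suppose $\psi(\xbf) \in \Lmc(\tau)$ is an explicit $\Lmc(\tau)$-definition of $\theta$ under $\varphi$, so $\varphi \models \forall\xbf(\theta(\xbf) \leftrightarrow \psi(\xbf))$. Any pair of witnesses $\Amf,\abf \models \varphi \wedge \theta(\abf)$ and $\Bmf,\bbf \models \varphi \wedge \neg\theta(\bbf)$ to joint $\Lmc(\tau)$-consistency would then satisfy $\Amf \models \psi(\abf)$ and $\Bmf \models \neg\psi(\bbf)$, contradicting the $\Lmc(\tau)$-equivalence $\Amf,\abf \equiv_{\Lmc,\tau} \Bmf,\bbf$ forced by $\Amf,\abf \sim_{\Lmc,\tau} \Bmf,\bbf$ through Lemma~\ref{lem:guardedbisim}.

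For $(1) \Rightarrow (2)$, I will set up the standard renaming reduction. Let $\varphi', \theta'$ be obtained from $\varphi, \theta$ by renaming every non-$\tau$ relation symbol $R$ to a fresh symbol $R'$ of the same arity, and define $\alpha(\xbf) := \varphi \wedge \theta(\xbf)$ and $\beta(\xbf) := \varphi' \rightarrow \theta'(\xbf)$, so that $\text{sig}(\alpha) \cap \text{sig}(\beta) = \tau$. I claim $\theta$ is explicitly $\Lmc(\tau)$-definable under $\varphi$ iff there is an $\Lmc$-interpolant for $\alpha(\xbf),\beta(\xbf)$: any interpolant $\chi \in \Lmc(\tau)$ gives $\varphi \wedge \theta \models \chi$ directly, while $\chi \models \varphi' \rightarrow \theta'$ together with the fact that $\chi$ uses only $\tau$-symbols (and is therefore invariant under the renaming) yields $\varphi \wedge \chi \models \theta$; conversely, any explicit $\tau$-definition is trivially an interpolant by the same invariance argument. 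Applying Lemma~\ref{lem:gf-char-interpolation} to $\alpha,\beta$, non-existence of an explicit definition is equivalent to joint $\Lmc(\tau)$-consistency of $\varphi \wedge \theta(\xbf)$ and $\varphi' \wedge \neg\theta'(\xbf)$.

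The last step strips the primes. Given witnesses $\Amf,\abf \models \varphi \wedge \theta(\abf)$ and $\Bmf,\bbf \models \varphi' \wedge \neg\theta'(\bbf)$ with $\Amf,\abf \sim_{\Lmc,\tau} \Bmf,\bbf$, define $\Bmf^{\circ}$ on the same domain as $\Bmf$ by reinterpreting each unprimed non-$\tau$ symbol $R$ occurring in $\varphi$ as $(R')^{\Bmf}$ and leaving $\tau$-symbols untouched; then $\Bmf^{\circ} \models \varphi \wedge \neg\theta(\bbf)$, and since the $\tau$-reduct of $\Bmf^{\circ}$ coincides with that of $\Bmf$, the bisimulation carries over to $\Amf,\abf \sim_{\Lmc,\tau} \Bmf^{\circ},\bbf$. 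The symmetric transformation handles the converse direction needed for $(1) \Rightarrow (2)$ to be sound. The main obstacle is purely invariance bookkeeping: one must verify that both $\Lmc(\tau)$-formulas and $\Lmc(\tau)$-bisimulations depend only on the $\tau$-reduct of a structure. This is immediate from the definitions, but it is the crucial lever converting a one-sided interpolant entailment into the two-sided biconditional required by explicit definability.
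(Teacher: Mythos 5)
The paper states Lemma~\ref{lem:gf-char-explicit} without a separate proof, framing it in the text immediately before as a consequence of Lemma~\ref{lem:def-to-int} (the renaming reduction) together with Lemma~\ref{lem:gf-char-interpolation} (the bisimulation characterization of interpolant existence); your proposal reconstructs exactly that intended composition, with a direct one-line contrapositive for the easy direction $(2)\Rightarrow(1)$ and the prime-stripping step to convert joint consistency of $\varphi\wedge\theta$ and $\varphi'\wedge\neg\theta'$ into joint consistency of $\varphi\wedge\theta$ and $\varphi\wedge\neg\theta$. It is correct and takes essentially the same route as the paper; the only point glossed over (also glossed over in the paper's own Lemma~\ref{lem:def-to-int}) is that an explicit $\Lmc(\tau)$-definition could a priori mention symbols in $\tau\setminus(\text{sig}(\varphi)\cup\text{sig}(\theta))$, so strictly one should first eliminate those (replace their atoms by $\bot$) before treating the definition as an interpolant for $\alpha,\beta$, and symmetrically expand the witness structures by interpreting any unused $\tau$-symbols as empty to upgrade the $\Lmc(\text{sig}(\alpha)\cap\text{sig}(\beta))$-bisimulation to an $\Lmc(\tau)$-bisimulation.
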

Let us also illustrate the failure of the projective Beth definability
property in FO$^2$ using Lemma~\ref{lem:gf-char-explicit}.
\begin{example}\label{ex:fo2-beth}
  \emph{Consider the FO$^2$-sentence $\varphi$ given by
  \begin{align*}
    \varphi ={} & \forall xy\, ((Y(x)\wedge Y(y)) \rightarrow
    x=y) \wedge{} \\
    & \forall x\, (Z(x)\rightarrow \textstyle \bigvee_{i=0}^3
    (\varphi_i(x) \wedge \varphi'_{3-i}(x) ) )\wedge{} \\
    & \forall xy\, ((R(x,y) \wedge \neg Z(x))\rightarrow I(x))\wedge{}
    \\
    & \forall xy\, (R(x,y) \rightarrow (I(x)\leftrightarrow
    I(y)))\wedge{} \\
    & \forall xy\, (R(x,y)\rightarrow
    (I(x)\rightarrow (A(x)\leftrightarrow \neg A(y))))
  \end{align*}
  where $\varphi_i(x)$ (resp., $\varphi_i'(x)$) is an FO$^2$-formula
  expressing that there is an $R$-path of length $i$ to (resp., from)
  an element satisfying $Y$. Observe that $Z(x)$ is implicitly
  FO$^2(\{R\})$-definable under $\varphi$ since it is explicitly
  FO$(\{R\})$-definable under $\varphi$: $Z(x)$ is true at $\Amf,a$
  iff $a$ lies on a cycle of length three. In particular, the last
  three conjuncts of $\varphi$ imply that $\neg Z(x)$ cannot be
  satisfied on any node of a cycle of odd length. To see the lack of an
  explicit FO$^2(\{R\})$-definition, consider structures $\Amf'$
  and $\Bmf'$ obtained from \Amf, \Bmf in Figure~\ref{fig:models}:}
  \begin{itemize}

    \item \emph{$\Bmf'$ is the extension of \Bmf in which every node
      satisfies $I$;}

    \item \emph{$\Amf'$ is the disjoint union of $\Bmf'$ and the extension
      of \Amf in which every node satisfies $Z$ and $a$ satisfies
    $Y$.}

  \end{itemize}
  \emph{It can be verified that $\Amf',a$ is a model of $\varphi\wedge Z(x)$, that $\Bmf',b$ is a
  model of $\varphi\wedge\neg Z(x)$, and that
  $\Amf',a\sim_{\text{FO}^2,\{R\}}\Bmf',b$. By
  Lemma~\ref{lem:gf-char-explicit}, $Z(x)$ is not explicitly
  FO$^2(\{R\})$-definable under~$\varphi$.\hfill$\dashv$}
\end{example}
%
%
%
%
%

%
%
%
%

\section{Deciding Joint GF$(\tau)$-Consistency}\label{sec:gf}
We prove Theorem~\ref{lem:lower-bounds} (i).
As Lemma~\ref{lem:gf-char-interpolation} provides a reduction of
the complement of GF-interpolant existence to joint GF$(\tau)$-consistency, that is, the
problem of deciding whether given $\varphi(\xbf),\psi(\xbf)$ are
jointly GF$(\tau)$-consistent, we will prove the complexity upper bound for the latter problem. For the complexity lower bounds, we will also
consider joint GF$(\tau)$-consistency, but for an input of the form
given in Lemma~\ref{lem:gf-char-explicit}. This yields the respective
lower bounds for explicit definition existence; by
Lemma~\ref{lem:def-to-int}, they lift to interpolant existence.
\subsection{Upper Bounds}
To decide joint GF$(\tau)$-consistency we pursue a mosaic approach
based on types. Throughout the section, let
$\varphi(\xbf_0),\psi(\xbf_0)$ be the input to joint
GF$(\tau)$-consistency, for some signature $\tau$. Let $\Xi=\{\varphi(\xbf_{0}),\psi(\xbf_{0})\}$. 

We begin by defining an appropriate notion of type.
Let $\text{width}(\Xi)$ denote the maximal arity of any relation symbol
used in $\Xi$ and let $\text{fv}(\Xi)$ be the number of variables
in $\xbf_0$. Let $x_1,\ldots,x_{2n}$ be fresh variables, where $n:=\max{\{\text{width}(\Xi),\text{fv}(\Xi)\}}$. We use $\cl(\Xi)$ to
denote the smallest set of GF-formulas that is closed under taking
subformulas and single negation, and contains: 
\begin{itemize}
	\item $\Xi$, 
    \item all formulae $x=y$ for distinct variables $x,y$;
    \item all formulae $\exists \xbf R(\xbf\ybf)$, where $R$ is a relation symbol that occurs in $\Xi$ and $\xbf\ybf$ is a tuple of  variables.
\end{itemize}
Let $\Amf$ be a structure, $\abf$ a tuple of distinct elements
from the domain of \Amf, and $\xbf$ a tuple of distinct variables in $\{x_{1},\ldots,x_{2n}\}$ of the same length as $\abf$. Consider the bijection $v: \xbf \mapsto \abf$. Then the \emph{$\Xi$-type of $\abf$ in \Amf defined
through $v$} is  
\[
\text{tp}(\Amf,v: \xbf \mapsto \abf)= \{ \theta\mid \Amf\models_{v} \theta, \theta\in \cl(\Xi)[\xbf]\},
\]
where $\cl(\Xi)[\xbf]$ is obtained from $\cl(\Xi)$ by
substituting in any formula $\theta\in \cl(\Xi)$ the free
variables of $\theta$ by variables in $[\xbf]$ in all possible ways. 
Note that the assumption that $v$ is bijective entails
that $\neg (x=y)\in \text{tp}(\Amf,v: \xbf \mapsto \abf)$ for any
two distinct $x,y\in [\xbf]$. 
We drop $v$ (and both $v$ and $\xbf$) and write 
$\text{tp}(\Amf,\xbf \mapsto \abf)$
(and $\text{tp}(\Amf,\abf)$, respectively), whenever they are obvious 
from the context. Any $\Xi$-type of some $\abf$ through some $v:\xbf\mapsto \abf$ is 
called a \emph{$\Xi$-type} and simply denoted $t(\xbf)$. 
The set of all $\Xi$-types is denoted $T(\Xi)$.

We give a high-level description of our approach. To decide joint GF$(\tau)$-consistency of
$\varphi(\xbf_0),\psi(\xbf_0)$ we determine all sets $\Phi\subseteq
T(\Xi)$ using at most $n$ variables from $\{x_{1},\ldots,x_{2n}\}$
that can be satisfied in GF($\tau$)-bisimilar models in the
following sense: there are models $\Amf_{t}$, $t\in \Phi$, realizing
$t$ in tuples $\abf_{t}$ in $\text{dom}(\Amf_{t})$ through assignments $v_{t}$ such that for any $t_{1},t_{2}\in \Phi$,
$$
\Amf_{t_{1}},v_{t_{1}}(\xbf_{t_{1},t_{2}}) \sim_{\text{GF},\tau}
\Amf_{t_{2}},v_{t_{2}}(\xbf_{t_{1},t_{2}}),
$$
where $\xbf_{t_{1},t_{2}}$ are the shared free variables of $t_{1}$ and $t_{2}$.
Such sets $\Phi$ will be called \emph{$\tau$-mosaics}. Given the set of all $\tau$-mosaics 
one can check whether $\varphi(\xbf_{0}),\psi(\xbf_{0})$ are jointly GF$(\tau)$-consistent
by simply checking whether there are types $t_{1}(\xbf),t_{2}(\xbf)$
in a single $\tau$-mosaic $\Phi$ such that one can replace the variables $\xbf_{0}$ in $\varphi(\xbf_{0}),\psi(\xbf_{0})$ by variables in $[\xbf]$ in such a way that $\varphi'\in t_{1}(\xbf_{1})$, $\psi'\in
t_{2}(\xbf_{2})$ for the resulting formulas $\varphi',\psi'$.
Thus, in what follows we aim to determine the characteristic properties 
of $\tau$-mosaics and show that they can be enumerated in triple 
exponential time in general. If $\text{width}(\Xi)$ is fixed, we
perform a closer analysis of the set of mosaics and show that double
exponential time is sufficient. The characteristic properties
of $\tau$-mosaics consist of \emph{internal} properties that can be checked by inspecting a single set $\Phi$ of $\Xi$-types in isolation and one \emph{external} property stating the existence of other $\tau$-mosaics that ensure that $\tau$-mosaics can be attached to each other in such a way that GF($\tau$)-bisimilar models can be constructed.  
%
 
To formulate the properties of $\tau$-mosaics, we
require some notation.  The \emph{restriction $t(\xbf)_{|X}$} of a
$\Xi$-type $t(\xbf)$ to a set $X$ of variables is the set of
$\theta\in t(\xbf)$ with free variables among $X$. The
\emph{restriction $\Phi_{|X}$} of a set $\Phi$ of $\Xi$-types to $X$
is defined as $\{ t(\xbf)_{|X} \mid t(\xbf)\in \Phi\}$.  Types
$t(\xbf)$ and $t'(\xbf')$ \emph{coincide on $X$} if $t(\xbf)_{|X} =
t'(\xbf')_{|X}$ and sets $\Phi,\Phi'$ of $\Xi$-types \emph{coincide on
$X$} if $\Phi_{|X}= \Phi'_{|X}$. A variable $x$ is \emph{free} in a
mosaic $\Phi$ if $\Phi$ contains a type in which $x$ is free.

A formula $Q(\xbf)$ of the form $x=x$ or $\exists \ybf R(\xbf\ybf)$ with $R\in \tau$ is called a \emph{$\tau$-guard (for $\xbf$)}. It is called a \emph{strict $\tau$-guard} if it is of the form $x=x$ or $\ybf$ is empty, respectively. We call a set $\Phi\subseteq T(\Xi)$ a \emph{$\tau$-mosaic} if
it satisfies the following conditions:
\begin{itemize}

  \item $\Phi$ is \emph{$\tau$-uniform}: for all $\tau$-guards
    $Q(\zbf)$ and all $t(\xbf),s(\ybf)\in \Phi$ with $[\zbf]\subseteq
    [\xbf]\cap [\ybf]$, $Q(\zbf)\in t(\xbf)$ iff $Q(\zbf)\in
    s(\ybf)$;

    \item \emph{closed under restrictions}: if $t(\xbf)\in \Phi$
      and $X\subseteq [\xbf]$, then $t(\xbf)_{|X} \in \Phi$; 

    \item \emph{GF($\tau$)-bisimulation saturated}: for all
      $t(\xbf)\in \Phi$, all strict $\tau$-guards $Q(\ybf)\in
      t(\xbf)$, and all $t'(\zbf)\in \Phi$ with $[\zbf]\subseteq
      [\ybf]$, there is an $s(\ybf')\in \Phi$ such that
      $t'(\zbf)\subseteq s(\ybf')$ and $[\ybf']=[\ybf]$.
\end{itemize}        
Intuitively, $\tau$-uniformity reflects that GF($\tau$)-bisimulations
preserve all $\tau$-guards and GF($\tau$)-bisimulation saturatedness
reflects Condition~(i) for GF($\tau$)-bisimulations. 
Let us illustrate how to read off a mosaic from jointly consistent
structures.
\begin{example}\label{ex:internal}
  \emph{
  Let $\Amf,\Bmf$ be the structures from Fig.~\ref{fig:models}, set
  $\tau = \{R\}$, and $\Xi = \{\varphi(x),\psi(x)\}$ 
  with $\varphi,\psi$ as in Example~\ref{ex:joint}.
  Let $\Phi$ be the closure under restrictions of the set containing
  \[\text{tp}(\Amf,xyz\mapsto ace)\text{ and all types
  }\text{tp}(\Bmf,\xbf\mapsto \bbf)\]
  with $\xbf\in\{xy,yz,zx\}$ and $\bbf\in \{gb,bd\}$. 
  Thus, for
  example, $\Phi$ contains $\text{tp}(\Amf,xy\mapsto ac)$ as well.
  It can easily be verified that $\Phi$ is $\tau$-uniform. To illustrate $\text{GF}(\tau)$-bisimulation
  saturation, consider the types
  $t(x,y)=\text{tp}(\Bmf,xy\mapsto bd)$ and $t'(x)=\text{tp}(\Amf,x\mapsto
  a)$, and the strict
  $\tau$-guard $R(x,y)$ contained in $t(x,y)$. Then
  GF$(\tau)$-bisimulation saturatedness is witnessed by the type
  $s(x,y)=\text{tp}(\Amf,xy\mapsto ac)\in \Phi$. \hfill$\dashv$}
\end{example}
In addition to the internal properties above, we have to ensure that $\tau$-mosaics can be linked together. The next two conditions state when this is the
case. 
We say that $\tau$-mosaics $\Phi_{1},\Phi_{2}$ are \emph{compatible} 
if for $\{i,j\}=\{1,2\}$:
\begin{enumerate} 

  \item for every $t(\xbf)\in\Phi_i$ there is an
    $s(\ybf)\in \Phi_j$ such that $t(\xbf)$ and $s(\ybf)$ coincide
    on $[\xbf]\cap[\ybf]$; 

  \item if there are $t(\xbf)\in \Phi_{i}$ and
    $s(\ybf)\in \Phi_{j}$ and a $\tau$-guard $Q(\zbf)\in t(\xbf)$ with
    $[\zbf]\subseteq [\xbf]\cap [\ybf]$, then $\Phi_{i}$ and
    $\Phi_{j}$ coincide on $[\zbf]$.

\end{enumerate}

Note that compatibility is a reflexive and symmetric relation.
Let $\mathcal{M}$ be a set of $\tau$-mosaics. We call 
$\Phi\in \mathcal{M}$ \emph{existentially saturated in $\mathcal{M}$} 
if for every $t(\xbf)\in \Phi$ and every formula $\exists \ybf (R(\xbf',\ybf)\wedge
\lambda(\xbf',\ybf))\in t(\xbf)$ there is a $\Phi'\in \Mmc$
such that $\Phi,\Phi'$ are compatible and 
$R(\xbf',\ybf')\wedge\lambda(\xbf',\ybf')\in t'(\zbf)$ for some
$t'(\zbf)\in \Phi'$ which coincides with $t(\xbf)$ on $[\xbf] \cap
[\zbf]$.
$\mathcal{M}$ is called \emph{existentially saturated} if every $\Phi\in \mathcal{M}$ is existentially saturated in $\mathcal{M}$.

\begin{example}\label{ex:external}
  \emph{Let $\Mmc=\{\Phi\}$ with $\Phi$ as in Example~\ref{ex:internal}.  We
  claim that $\Mmc$ is existentially saturated. Clearly every existentially quantified formula in (any restriction of) $\text{tp}(\Amf,xyz\mapsto ace)$ is
  "realized" in $\text{tp}(\Amf,xyz\mapsto ace)$ itself. Consider now, for example,
  $\exists z' R(z,z') \in t(y,z):=\text{tp}(\Bmf,yz\mapsto bd)$. Then the type $\text{tp}(\Bmf,zx \mapsto gb)$ coincides with $t(y,z)$ on $\{z\}$ and contains $R(z,x)$, as required.\hfill$\dashv$}
  %
\end{example}

It should be clear that the set of existentially saturated sets of
$\tau$-mosaics is closed under unions. Thus, the union of all
existentially  saturated sets of $\tau$-mosaics is again existentially
saturated. This set can be obtained by a purely syntactic elimination
procedure, starting with the set of all $\tau$-mosaics with at most
$n$ free variables from $\{x_{1},\ldots,x_{2n}\}$. We fine-tune and
analyze this procedure below to obtain our two complexity upper bounds for joint GF($\tau$)-consistency. To this end, we prove three lemmas about existentially saturated sets of $\tau$-mosaics. The first lemma states that $\tau$-mosaics that are
contained in an existentially saturated set behave in the way announced 
in the high-level overview of the proof. 
 
\begin{lemma}\label{lem:construct}
Assume $\mathcal{M}$ is an existentially saturated set of $\tau$-mosaics and let $t_{1}(\xbf_{1}),t_{2}(\xbf_{2})\in \Psi\in \mathcal{M}$. Then there are pointed models $\Amf_{1},\abf_{1}$
and $\Amf_{2},\abf_{2}$ and $v_{i}:\xbf_{i}\mapsto \abf_{i}$ 
such that
 \begin{itemize}
 \item 	$\Amf_{i}\models t_{i}(\abf_{i})$, $i=1,2$, and
 \item $\Amf_{1},v_{1}([\xbf_{1}]\cap [\xbf_{2}]) \sim_{\text{GF},\tau} \Amf_{2},v_{2}([\xbf_{1}]\cap [\xbf_{2}])$.
 \end{itemize}
\end{lemma}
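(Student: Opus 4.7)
The plan is to build $\Amf_1$ and $\Amf_2$ simultaneously by a tree unfolding, exploiting existential saturation of $\mathcal{M}$ to fulfil every existential formula on both sides, and $\tau$-uniformity together with compatibility to keep the two constructions $\text{GF}(\tau)$-bisimilar. The tree $T$ has nodes labeled by triples $(\Phi,\mu^1,\mu^2)$ with $\Phi\in\mathcal{M}$ and each $\mu^i$ an assignment of the free variables of $\Phi$ to pairwise distinct elements in a growing domain for $\Amf_i$. The root is $(\Psi,\mu_0^1,\mu_0^2)$ with fresh, pairwise distinct choices on each side; set $v_i$ to be $\mu_0^i$ restricted to $[\xbf_i]$ and $\abf_i := v_i(\xbf_i)$.

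Inductively, at a node $n=(\Phi_n,\mu_n^1,\mu_n^2)$, for every $t(\xbf)\in\Phi_n$ and every existential formula $\exists\ybf\,(R(\xbf',\ybf)\wedge\lambda(\xbf',\ybf))\in t(\xbf)$, existential saturation of $\mathcal{M}$ yields a compatible $\Phi'\in\mathcal{M}$ and a type $t'(\zbf)\in\Phi'$ containing $R(\xbf',\ybf')\wedge\lambda(\xbf',\ybf')$ and coinciding with $t(\xbf)$ on $[\xbf]\cap[\zbf]$. Add a child of $n$ labeled by $\Phi'$, with $\mu^i$ extending $\mu_n^i$ on $[\xbf]\cap[\zbf]$ and sending the remaining free variables of $\Phi'$ to fresh elements in $\Amf_i$. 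Let $\Amf_i$ collect all elements ever introduced on side $i$, and declare $R^{\Amf_i}(\cbf)$ to hold iff $\cbf=\mu_n^i(\zbf)$ for some node $n$, tuple $\zbf$ of distinct variables, and $t\in\Phi_n$ with $R(\zbf)\in t$. Consistency across nodes is ensured because shared variables between adjacent nodes lie in a $\tau$-guarded tuple on which $\tau$-uniformity and compatibility force agreement on $\tau$-atoms.

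Set $I:=\{\mu_n^1(\xbf)\mapsto\mu_n^2(\xbf) : n\in T,\ \xbf \text{ is $\tau$-guarded in } \Phi_n\}$. By $\tau$-uniformity each $p\in I$ is a partial $\tau$-isomorphism, and the forth-and-back conditions split into two cases: any target $\tau$-guarded tuple either lies within the current mosaic $\Phi_n$, in which case $\text{GF}(\tau)$-bisimulation saturation of $\Phi_n$ supplies the matching type, or inside a child mosaic, in which case the construction has already installed mirrored children on both sides. Combined with the partial $\tau$-isomorphism from $v_1([\xbf_1]\cap[\xbf_2])$ to $v_2([\xbf_1]\cap[\xbf_2])$ (obtained by applying $\tau$-uniformity to the restrictions of $t_1, t_2$ inside $\Psi$, which are members of $\Psi$ by closure under restrictions), this yields the required $\sim_{\text{GF},\tau}$ relation.

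The main obstacle, and the technical core of the argument, is a truth lemma proved by induction on formula complexity, asserting that $\Amf_i\models t(\mu_n^i(\xbf))$ for every node $n$, every $t(\xbf)\in\Phi_n$, and $i=1,2$. Atoms are handled by the interpretation step and positive existentials by the expansion step. The delicate case is negated existentials: one must exclude spurious $\tau$-guarded witnesses in $\Amf_i$. This uses the observation that every $\tau$-guarded tuple of $\Amf_i$ arises inside a single node's mosaic — shared elements between adjacent nodes only appear through deliberately reused variables, governed by compatibility — so any would-be witness already lives in some $\Phi_m$ whose closure under restrictions, $\tau$-uniformity, and chain of compatibilities back to $\Phi_n$ force it to be consistent with the type $t$. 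Applying the truth lemma at the root to $t_1, t_2\in\Psi$ gives $\Amf_i\models t_i(\abf_i)$, completing the argument.
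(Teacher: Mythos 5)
There is a genuine gap at the heart of your construction, specifically in the truth lemma you state as the "technical core." You claim $\Amf_i\models t(\mu_n^i(\xbf))$ for \emph{every} $t(\xbf)\in\Phi_n$, but a single assignment $\mu^i_n$ cannot realize all types of a mosaic at one tuple. A $\tau$-mosaic $\Phi$ is only required to be $\tau$-uniform, i.e.\ its types agree on $\tau$-guards; two types $t_1(\xbf),t_2(\xbf)\in\Phi$ with the same free variables may disagree on any non-$\tau$ atom $R(\zbf)$, say $R(\zbf)\in t_1$ and $\neg R(\zbf)\in t_2$. Your definition of $R^{\Amf_i}$ is existential over $t\in\Phi_n$, so $R^{\Amf_i}(\mu^i_n(\zbf))$ gets set to true, and the truth lemma instance for $t_2$ fails already at the atomic level. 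This is not a cosmetic issue: the mosaics being nontrivial sets of mutually disagreeing types is exactly what makes them carry bisimulation information, so you cannot simply add a constraint that all types of a mosaic coincide on atoms.

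The paper's construction avoids this by keeping the tree and the type separate. It builds one tree $T_i$ per type $\hat t_i$ in $\widehat\Psi=\Psi_{|\emptyset}$ (so possibly many structures, not two); each node of $T_i$ is a \emph{sequence} of pairs $(t_j(\ybf_j),\Phi_j)$ whose tail records a single designated type, and only that type is interpreted into the bag. The mosaic in the tail is then used to define the bisimulation across structures: $H_{i,j}$ links tuples at nodes $\sigma\in T_i$ and $\sigma'\in T_j$ whose tails share the \emph{same mosaic} $\Phi$ but need not share a position in any common tree. This extra flexibility is also needed for the back-and-forth argument (the paper constructs new, non-mirrored paths $\rho\in T_j$ using GF($\tau$)-bisimulation saturatedness and closure under restrictions), so your "mirrored children" bisimulation, defined only between $\mu^1_n$ and $\mu^2_n$ at the same node $n$, is a second weak point: it is too rigid to satisfy condition (i) for GF($\tau$)-bisimulations when the requested $\tau$-guarded tuple lives in a different branch.
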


\begin{proof}
  Let $\Psi\in \mathcal{M}$. We assume w.l.o.g. that $\mathcal{M}$ 
  is \emph{closed under restrictions} in the sense that for any $\Phi\in \mathcal{M}$ and subset $X$
  of the free variables of $\Phi$, $\Phi_{|X}\in \mathcal{M}$.
  (If it is not closed under restrictions simply add all $\Phi_{|X}$ with $\Phi\in \mathcal{M}$ to $\mathcal{M}$. The resulting set is still existentially saturated.)
  Define
  $\widehat{\Psi}:=\Psi_{|\emptyset}$, that is, $\widehat{\Psi}$
  contains all $\Xi$-types in $\Psi$ without free variables.  By
  closure under restrictions of $\mathcal{M}$, we have
  $\widehat{\Psi}\in \mathcal{M}$. Assume
  $\widehat{\Psi}=\{\hat{t}_{1},\ldots,\hat{t}_{m}\}$. 
We construct structures $\Amf_{i}$, $i=1,\ldots,m$, with $\Amf_{i}$
satisfying $\hat{t}_{i}$.  For the construction, it is useful to
employ notation for tree decompositions.  A \emph{tree decomposition}
of a structure $\mathfrak{A}$ is a triple $(T,E,\text{bag})$ with
$(T,E)$ a tree and $\text{bag}$ a function that assigns to every $t\in
T$ a set $\text{bag}(t)\subseteq \text{dom}(\mathfrak{A})$ such that
\begin{enumerate}

  \item $\mathfrak{A} = \bigcup_{t\in T}\mathfrak{A}_{|\text{bag}(t)}$;

  \item $\{t \in T\mid a\in \text{bag}(t)\}$ is connected in $(T,E)$,
    for every $a\in \text{dom}(\mathfrak{A})$.

\end{enumerate} 
%
We construct the structures $\Amf_i$, $i=1,\ldots,m$ by giving 
a tree decomposition $(T_i,E_i,\text{bag}_i)$ of $\Amf_{i}$. To this end,
we define $(T_i,E_i,\text{bag}_i)$ and structures $\text{Bag}_{i}(t)$ with domain $\text{bag}_{i}(t)$, $t\in T_{i}$, and then show that $(T_i,E_i,\text{bag}_i)$ is a tree decomposition of the union $\Amf_{i}$ of all $\text{Bag}_{i}(t)$, $t\in T_{i}$.
We start with the definition of $(T_i,E_i)$. Let $T_i$ be the set of all sequences
\[
\sigma_{n}= (t_{0}(\ybf_0),\Phi_{0}),\ldots,(t_{n}(\ybf_{n}),\Phi_{n})
\]
such that $t_0=\hat t_i$ (thus $\ybf_{0}$ is empty), $\Phi_0=\widehat{\Psi}$,
$t_j(\ybf_j)\in\Phi_j\in \mathcal{M}$ for all $j\leq n$, and for all $j<n$: 
\begin{itemize}

  \item $\Phi_{j},\Phi_{j+1}$ are compatible, and

  \item $t_{j}(\ybf_{j})$ and $t_{j+1}(\ybf_{j+1})$ coincide on
    $[\ybf_{j}]\cap [\ybf_{j+1}]$.

\end{itemize}
Let $E_i$ be the induced prefix-order on $T_i$.  We call
$(t_{n}(\ybf_{n}),\Phi_{n})$ the \emph{tail} of $\sigma_{n}$.  It
remains to define the functions $\text{bag}_{i}$ and $\text{Bag}_{i}$.
We give an inductive definition with the aim to achieve the following:
for all $\sigma_{n}\in T_{i}$ of the form above the $\Xi$-type
$t_{n}(\ybf_{n})$ is satisfied in $\Amf_{i}$ under a canonical
assignment $v_{\sigma_{n}}$ into the set $\text{bag}_{i}(\sigma_{n})$.
For the construction, it is important to note that we have $\neg
(x=y)\in t$ for any two distinct free variables $x,y$ in any
$\Xi$-type $t$. Thus we can essentially use (copies of) the variables
$\ybf_n$ to define $\text{bag}_{i}(\sigma_n)$.


For the inductive definition, start by setting
$\text{bag}_{i}(\sigma_{0})=\emptyset$ and $v_{\sigma_0}=\emptyset$ for
$\sigma_{0}= (\hat{t}_{i},\Phi_{0})$. In the inductive step,
assume that $\text{bag}_{i}$, $v_{\sigma_{n-1}}$, and $\text{Bag}_{i}$  have been defined on $\sigma_{n-1}$, where
$$
\sigma_{n-1}= (t_{0}(\ybf_{0}),\Phi_{0}),\ldots,(t_{n-1}(\ybf_{n-1}),\Phi_{n-1}).
$$
Then $\text{bag}_{i}(\sigma_{n})$ contains 
\begin{itemize}

  \item fresh copies $y'$ of the variables $y\in [\ybf_{n}]\setminus
    [\ybf_{n-1}]$ and 

  \item $v_{\sigma_{n-1}}(y)$ for every $y\in[\ybf_{n}]\cap [\ybf_{n-1}]$,
  
\end{itemize}
and $v_{\sigma_{n}}(y)$ is defined as the copy $y'$ of $y$ for $y\in
[\ybf_{n}]\setminus [\ybf_{n-1}]$ and by setting $v_{\sigma_{n}}(y):=
v_{\sigma_{n-1}}(y)$ for $y\in[\ybf_{n}]\cap [\ybf_{n-1}]$. Finally,
we define $\text{Bag}_{i}(\sigma_{n})$ by interpreting 
any relation symbol $R$ in such a way that the atomic formulas in
$t_{n}(\ybf_{n})$ are satisfied under $v_{\sigma_{n}}$, that is,
such that $\text{Bag}_{i}(\sigma_{n})$ satisfies 
$R(v_{\sigma_n}(\ybf))$ iff $R(\ybf) \in t_n(\ybf_n)$.

Let $\Amf_{i}$ be the union of all $\text{Bag}_{i}(t)$, $t\in T_{i}$. 
It is easy to see that $(T_{i},E_i,\text{bag}_{i})$ is a tree decomposition 
of $\Amf_{i}$. In fact, in the inductive step above, $t_n(\ybf_n)$ and
$t_{n-1}(\ybf_{n-1})$ coincide on $[\ybf_n]\cap [\ybf_{n-1}]$. Thus,
the interpretation of any relation symbol $R$ coincides 
on the intersection of $\text{bag}_{i}(\sigma_n)$ and $\text{bag}_{i}(\sigma_{n-1})$. We proceed to show that the GF($\tau$)-bisimulation mentioned in Lemma~\ref{lem:construct} indeed
exists. To this end, we prove the following auxiliary claim. 
We prove the following in the appendix:

\medskip
\noindent
{\bf Claim 1.}
{\it For all $i,j$ with $1\leq i, j\leq m$,
  we have: \begin{enumerate}
  \item For every $\sigma\in T_{i}$ with
    $\text{tail}(\sigma)=(t(\ybf),\Phi)$, we have $\Amf_{i}\models
    t(v_{\sigma}(\ybf))$;

  \item Let $H_{i,j}$ be the set of all mappings
    $p_{\sigma,\sigma',\zbf}$, where 
    \begin{itemize}

      \item $\sigma\in T_i,\sigma'\in T_j$, 
	$\text{tail}(\sigma)=(t(\ybf),\Phi)$, and
	$\text{tail}(\sigma')=(t'(\ybf'),\Phi)$; 

      \item $\zbf$ is a tuple with $[\zbf] \subseteq [\ybf]\cap
	[\ybf']$ and $v_{\sigma}(\zbf)$ is $\tau$-guarded in $\Amf_i$
	$($or, equivalently, $v_{\sigma'}(\zbf)$ is $\tau$-guarded in $\Amf_j$$)$;

      \item $p_{\sigma,\sigma',\zbf}: v_{\sigma}(\zbf) \mapsto
	v_{\sigma'}(\zbf)$.  

    \end{itemize} 
    Then $H_{i,j}$ is a GF($\tau$)-bisimulation between $\Amf_{i}$
    and $\Amf_{j}$. 
\end{enumerate}
}
To complete the proof of Lemma~\ref{lem:construct}, assume that $\hat{t}_{i}
\subseteq t_{i}(\xbf_{i})$ for $i=1,2$ (the case $\hat{t}_{i}
\subseteq t_{1}(\xbf_{1}), t_{2}(\xbf_{2})$ for some $i$ is similar). Take $\rho_{i}=(\hat{t}_{i},\widehat{\Psi})\cdot(t_{i}(\xbf_{i}),\Psi)\in T_{i}$,
for $i=1,2$. Consider the tuples $\abf_{i}:= v_{\rho_{i}}(\xbf_{i})$.
By Claim~1, $\Amf_{i}\models t_{i}(\abf_{i})$.
Also by Claim 1, for any tuple $\zbf$ with $[\zbf] \subseteq [\xbf_{1}]\cap [\xbf_{2}]$ and 
such that $v_{\rho_{1}}(\zbf)$ is $\tau$-guarded in $\Amf_1$ or
$\Amf_2$,
we have $p_{\rho_{1},\rho_{2},\zbf}: v_{\rho_{1}}(\zbf)\mapsto
v_{\rho_{2}}(\zbf)\in H_{1,2}$. As any two $p_{\rho_{1},\rho_{2},\zbf}$ coincide on the intersection of their domains, we have  
$\Amf_{1},v_{\rho_{1}}([\xbf_{1}]\cap [\xbf_{2}]) \sim_{\text{GF},\tau} \Amf_{2},v_{\rho_{2}}([\xbf_{1}]\cap [\xbf_{2}])$, as required.
\end{proof}
We next show how to read off an existentially saturated set of mosaics from
jointly consistent structures, as illustrated in Example~\ref{ex:internal}.
We make sure that all mosaics except a single mosaic $\Psi$ use only
width$(\Xi)$ many free variables and that also in $\Psi$ only at most two types use more variables.
\begin{lemma}\label{lem:readoff}
  Let $\Amf_{1},\abf_{1}$ and $\Amf_{2},\abf_{2}$ be pointed
  structures with $\abf_{1}$ and $\abf_{2}$ tuples with pairwise
  distinct elements of length $m\leq \text{fv}(\Xi)$ and let $\tau$ be
  a signature. Consider assignments $\xbf_{0} \mapsto \abf_{i}$ with
  $[\xbf_{0}]\subseteq \{x_{0},\ldots,x_{2n}\}$. If $\Amf_{1},\abf_{1}
  \sim_{\text{GF},\tau} \Amf_{2},\abf_{2}$, then there exists an existentially saturated set $\mathcal{M}$ of $\tau$-mosaics and some $\Psi\in \Mmc$ such that
  \begin{itemize} 

    \item all $\Phi\in \mathcal{M}$ with $\Phi\neq \Psi$ use at most
      $\text{width}(\Xi)$ many free variables; 


   \item there exist types $t_1(\xbf_0),t_2(\xbf_0)\in \Psi$ such that
      $t_i(\xbf_0)=\text{tp}(\Amf_{i},\xbf_{0} \mapsto \abf_{i})$ for
      $i=1,2$ and
      all types $t(\ybf)\in \Psi\setminus\{t_1(\xbf_0),t_2(\xbf_0)\}$
      use at most $\text{width}(\Xi)$ free variables among
      $[\xbf_0]$.

  \end{itemize}
\end{lemma}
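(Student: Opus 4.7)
The plan is to read off an existentially saturated set of $\tau$-mosaics directly from the given GF$(\tau)$-bisimilar structures. The key observation is that a GF$(\tau)$-bisimulation $I$ between $\Amf_1, \abf_1$ and $\Amf_2, \abf_2$ naturally pairs $\tau$-guarded tuples in $\Amf_1$ with $\tau$-guarded tuples in $\Amf_2$, and each such pair gives rise to a mosaic that pools the types realised on the tuple (and its sub-tuples) in both structures.

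Concretely, I would consider the family of pairs consisting of all partial $\tau$-isomorphisms $p:\cbf_1\mapsto\cbf_2 \in I$ with $[\cbf_i]$ $\tau$-guarded in $\Amf_i$, plus the root pair $(\abf_1, \abf_2)$. For each such pair $p$, fix a tuple $\xbf$ of distinct variables in bijection with $[\cbf_1]$ via some $v_1$, and set $v_2 := p \circ v_1$. Define the mosaic $\Phi_p$ to consist of the types $\text{tp}(\Amf_i, \ybf \mapsto v_i(\ybf))$ for $i \in \{1,2\}$ and $\ybf$ ranging over sub-tuples of $\xbf$ with distinct components. The distinguished mosaic $\Psi$ is the one for $(\abf_1, \abf_2)$ with $\xbf = \xbf_0$; in particular, $t_i(\xbf_0) = \text{tp}(\Amf_i, \xbf_0 \mapsto \abf_i) \in \Psi$ as required. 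Let $\Mmc$ be the collection of all these $\Phi_p$.

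Verification of the mosaic axioms is then routine: $\tau$-uniformity follows because $p$ is a partial $\tau$-isomorphism, so $\tau$-guards on common variables behave identically on both sides; closure under restrictions is built in; GF$(\tau)$-bisimulation saturatedness follows by applying the forth-property (i) of $I$, extending a $\tau$-guarded tuple on one side to its bisimilar counterpart on the other, whose type is then pooled into the same $\Phi_p$. For existential saturation of $\Mmc$, an existential formula $\exists\ybf(R(\xbf',\ybf)\wedge\lambda(\xbf',\ybf))$ in a type of $\Phi_p$ is witnessed in, say, $\Amf_1$ by some $\tau$-guarded tuple $\cbf_1'$; property~(i) of $I$ then provides a matching $\cbf_2'$ in $\Amf_2$, and the new mosaic $\Phi_{p'}$ is compatible with $\Phi_p$ and realises the required type.

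The variable bound for $\Phi_p \neq \Psi$ is immediate: $\tau$-guarded tuples have at most $\text{width}(\Xi)$ distinct components, giving at most $\text{width}(\Xi)$ free variables. For $\Psi$, the two exempted types $t_1, t_2$ use all of $[\xbf_0]$, while I would ensure that the remaining types included in $\Psi$ come from $\tau$-guarded sub-tuples of $\abf_i$ and hence use at most $\text{width}(\Xi)$ variables among $[\xbf_0]$. The main technical obstacle is arranging the global variable assignments coherently: whenever the same variable is reused across two mosaics (which is needed for them to be compatible under the definition), that variable must represent bisimilarity-matched elements on both sides. This calls for a systematic, rather than ad hoc, choice of variable bijections across the whole family~$\Mmc$, carefully indexed by the matched pairs in $I$ and by shared components with the root pair.
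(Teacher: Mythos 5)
Your construction has a genuine gap in the existential-saturation step, and the skinnier mosaics you build make that gap hard to close.

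\textbf{The $\tau$-guardedness error.} You index mosaics $\Phi_p$ by partial $\tau$-isomorphisms $p:\cbf_1\mapsto\cbf_2\in I$, so every mosaic is anchored on $\tau$-guarded tuples. But an existential formula $\exists\ybf\,(R(\xbf',\ybf)\wedge\lambda)$ in a $\Xi$-type is guarded by a relation symbol $R$ that ranges over \emph{all} of $\text{sig}(\Xi)$, not just over $\tau$. So the witness tuple $\cbf_1'$ in $\Amf_1$ is $R$-guarded but, when $R\notin\tau$, it need not be $\tau$-guarded. In that case the forth-property of the GF$(\tau)$-bisimulation $I$ is silent: $I$ only pairs $\tau$-guarded tuples, so there simply is no matching $\cbf_2'$ in $\Amf_2$ to give you $\Phi_{p'}$. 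Your sentence ``witnessed in, say, $\Amf_1$ by some $\tau$-guarded tuple $\cbf_1'$'' is false as stated, and this breaks the existential-saturation verification. The paper sidesteps this by indexing mosaics $\Phi_{\abf,\xbf}$ by an \emph{arbitrary} tuple $\abf$ of distinct elements of length at most $\text{width}(\Xi)$ together with a variable tuple $\xbf$ --- not by bisimulation pairs --- so witness tuples guarded by non-$\tau$ symbols still have a home.

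\textbf{The type-pooling difference.} Even setting that aside, the paper's mosaic $\Phi_{\abf,\xbf}$ collects the types of \emph{all} tuples $\bbf$ (in either structure) that are GF$(\tau)$-bisimilar to the relevant restrictions of $\abf$, whereas your $\Phi_p$ contains only the two types $\text{tp}(\Amf_1,\ybf\mapsto v_1(\ybf))$ and $\text{tp}(\Amf_2,\ybf\mapsto v_2(\ybf))$ per subtuple $\ybf$. The paper's pooling is what makes compatibility condition~1 (``for every type in $\Phi_i$ there is a coinciding type in $\Phi_j$'') hold robustly after only a mild choice of the new tuple of variables $\ybf'$ with $[\xbf']=[\xbf]\cap[\ybf']$; you instead require the global variable assignments across all mosaics to be mutually consistent, which you yourself flag as an unresolved ``main technical obstacle.'' This is not a minor bookkeeping issue: it is exactly the coherence that the pooled mosaics make unnecessary. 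So the proposal is a different route, but one with a concrete hole (the $\tau$-guard restriction) and a frankly acknowledged unfinished step (coherent variable assignments). You would need to (a) allow mosaics for tuples guarded by arbitrary symbols in $\Xi$, and (b) either pool bisimilar types as the paper does or actually carry out the global assignment argument.
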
 
\begin{proof} Assume w.l.o.g. that $\Amf_{1}$ and
  $\Amf_{2}$ are disjoint. For any tuples $\bbf_{1}$ in $\Amf_{i}$ and
  $\bbf_{2}$ in $\Amf_{j}$ with $i,j\in \{1,2\}$, we use
  $\text{tp}(\xbf_{1}\mapsto \bbf_{1})$ to denote
  $\text{tp}(\Amf_{i},\xbf_{1} \mapsto \bbf_{1})$ and we write
  $\bbf_{1}\sim_{\text{GF},\tau} \bbf_{2}$ if
  $\Amf_{i},\bbf_{1}\sim_{\text{GF},\tau} \Amf_{j},\bbf_{2}$. For any tuple $\abf$ of distinct elements in either $\Amf_{1}$ or $\Amf_{2}$, and any tuple $\xbf$ with $[\xbf]\subseteq \{x_{1},\ldots,x_{2n}\}$ such that $v: \xbf \mapsto \abf$ is a
  bijection, let $\Phi_{\abf,\xbf}$ be the set of all types
  $\text{tp}(v':\xbf_{|Y} \mapsto \bbf)$ with $Y\subseteq [\xbf]$
  and $\bbf$ in either $\Amf_{1}$ or $\Amf_{2}$ such that
  $v(\xbf_{|Y}) \sim_{\text{GF},\tau} v'(\xbf_{|Y})$.
    
  Let $\mathcal{M}$ contain all such $\Phi_{\abf,\xbf}$ with $\abf$ of length at most $\text{width}(\Xi)$ and $\xbf$ from
    $\{x_{1},\ldots,x_{2n}\}$. Moveover, if $m>\text{width}(\Xi)$, then
    add $\widehat{\Phi}_{\abf_{1},\xbf_{0}}$ to $\mathcal{M}$,
    where $\widehat{\Phi}_{\abf_{1},\xbf_{0}}$ is obtained from 
    $\Phi_{\abf_{1},\xbf_{0}}$ by removing all $t$ distinct from 
    $t_{1}(\xbf_{0})$ and $t_{2}(\xbf_{0})$ using more than $\text{width}(\Xi)$ many free variables. 
    
    We show in the appendix that $\mathcal{M}$ is as required.
\end{proof}
 
It follows from Lemmas~\ref{lem:construct} and~\ref{lem:readoff} that
the following two conditions are equivalent, where 
$\mathcal{M}'$ is the maximal existentially saturated set of
$\tau$-mosaics using at most $\text{width}(\Xi)$ free
variables.
\begin{enumerate}[label=\arabic*.]

  \item $\varphi(\xbf_0),\psi(\xbf_0)$ are jointly GF$(\tau)$-consistent;

  \item There exists a $\tau$-mosaic $\Psi$ and $\Xi$-types $t_1(\xbf),t_2(\xbf)\in \Psi$ such that $\mathcal{M}=\{\Psi\}\cup \Mmc'$ 
  is existentially saturated and:
    \begin{enumerate}

      \item $t_1(\xbf),t_2(\xbf)$ have $\text{fv}(\Xi)$
      many free variables
	and one can replace the variables in $[\xbf_{0}]$ by
	variables in $\xbf$ such that $\varphi'\in t_{1}(\xbf)$,
	$\psi'\in t_{2}(\xbf)$ for the resulting formulas
	$\varphi',\psi'$;

      \item all $\Xi$-types $t(\ybf)\in
	\Psi\setminus\{t_1(\xbf),t_2(\xbf)\}$ use at most
	$\text{width}(\Xi)$ free variables among $[\xbf]$;

    \end{enumerate}

\end{enumerate}
Hence, it suffices to provide an algorithm deciding Condition~2.
\begin{lemma}\label{lem:complexity}
  On input $\varphi(\xbf_0),\psi(\xbf_0)$, Condition~2 can be 
%
  decided in time triple exponential in the size of
  $\varphi(\xbf_{0}),\psi(\xbf_{0})$ in general, and double
  exponential in the size of $\varphi(\xbf_{0}),\psi(\xbf_{0})$ if
  $\text{width}(\Xi)$ is bounded by a constant.
\end{lemma}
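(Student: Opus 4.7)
The plan is to decide Condition~2 by first computing the maximal existentially saturated set $\Mmc'$ of $\tau$-mosaics whose types use only variables among a fixed tuple of length $\text{width}(\Xi)$, and then searching for a special mosaic $\Psi$ extending $\Mmc'$ with two distinguished wide types $t_1(\xbf),t_2(\xbf)$ witnessing Condition~2(a).

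For the first stage, fix a tuple $\ybf$ of distinct variables of length $\text{width}(\Xi)$ from $\{x_1,\ldots,x_{2n}\}$. I would start with the collection of all subsets of the set of $\Xi$-types whose free variables are contained in $[\ybf]$, keep only those satisfying the three internal properties ($\tau$-uniformity, closure under restrictions, and GF$(\tau)$-bisimulation saturatedness), and then iteratively discard any mosaic that is not existentially saturated in the current collection. Each internal check depends only on the candidate itself and is polynomial in its size; each existential-saturation check inspects every type $t(\xbf')\in\Phi$ and every existential formula $\exists\zbf(R(\xbf'',\zbf)\wedge\lambda(\xbf'',\zbf))\in t(\xbf')$ and searches the current collection for a compatible mosaic providing a witness. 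Since the collection only shrinks, the iteration reaches a fixpoint after a number of rounds bounded by its starting size, producing $\Mmc'$.

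For the second stage, fix a tuple $\xbf$ of length $\text{fv}(\Xi)$. I would enumerate all candidate pairs $(t_1(\xbf),t_2(\xbf))$ of $\Xi$-types over $[\xbf]$ such that Condition~2(a) is syntactically witnessed, namely for some substitution of $\xbf_0$ by variables in $[\xbf]$ the resulting formulas $\varphi',\psi'$ lie in $t_1(\xbf)$ and $t_2(\xbf)$ respectively. For each such pair I would form $\Psi$ by taking $\{t_1,t_2\}$ together with all their restrictions to subsets of $[\xbf]$ (forced by closure under restrictions) and all small-width types from $\Mmc'$, suitably renamed to use variables in $[\xbf]$, that are needed to restore $\tau$-uniformity and GF$(\tau)$-bisimulation saturatedness. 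Finally I would verify that $\Psi$ is existentially saturated in $\{\Psi\}\cup\Mmc'$; since $\Mmc'$ is already existentially saturated and adding $\Psi$ only supplies additional witnesses, it is enough to check that the existential formulas in $t_1$ or $t_2$ admit a compatible witness inside $\{\Psi\}\cup\Mmc'$. If some candidate passes all these checks, accept; otherwise reject.

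The complexity follows by bounding the number of $\tau$-mosaics. A careful count of $\cl(\Xi)[\ybf]$ for $|\ybf|=\text{width}(\Xi)$ gives triple exponentially many mosaics in $|\varphi|+|\psi|$ in general, and only double exponentially many when $\text{width}(\Xi)$ is bounded by a constant (because then both $|\cl(\Xi)|$ and the number of admissible substitutions remain polynomial in $|\Xi|$, so the number of consistent types stays single exponential and the number of mosaics double exponential). The enumeration of pairs $(t_1,t_2)$ is dominated by the number of $\Xi$-types over $[\xbf]$, which is bounded by the same quantity up to a polynomial factor, and each verification step is polynomial in the total number of mosaics. The main obstacle is carrying out the counting tightly in the bounded-arity case; in particular, for each pair $(t_1,t_2)$ one must argue that the small-width restrictions of $t_1,t_2$ are already realised by types in $\Mmc'$ (which follows from the maximality of $\Mmc'$), so that the wide mosaic $\Psi$ does not force an extra exponential blowup beyond the count of $\Mmc'$.
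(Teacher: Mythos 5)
Your first stage and the overall mosaic-counting argument match the paper's proof closely. The divergence is in the second stage, and there is a genuine gap there.

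The paper, for each admissible pair $t_1(\xbf),t_2(\xbf)$, \emph{enumerates all} $\tau$-mosaics $\Psi$ satisfying Condition~2(b) and accepts if any one of them makes $\{\Psi\}\cup\Mmc'$ existentially saturated. You instead propose to \emph{construct a single} candidate $\Psi$ by throwing in $t_1,t_2$, their restrictions, and ``small-width types from $\Mmc'$ that are needed to restore $\tau$-uniformity and GF$(\tau)$-bisimulation saturatedness.'' This is not well defined: the set of small-width types needed to repair the internal conditions is a genuine choice, not a forced closure, and different choices of $\Psi$ can differ on whether $\{\Psi\}\cup\Mmc'$ ends up existentially saturated. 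There is no argument that your particular $\Psi$ is the one that works when some valid $\Psi$ exists, so the algorithm could incorrectly reject. The paper's enumeration sidesteps this by not committing to any particular $\Psi$.

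The second problem is the claimed shortcut in the verification. You argue that because $\Mmc'$ is already existentially saturated, and the small-width types of $\Psi$ already appear in mosaics of $\Mmc'$, it suffices to check the existential formulas in $t_1$ and $t_2$. That inference does not go through. Existential saturation of $\Psi$ requires, for each $t(\xbf)\in\Psi$ and each $\exists\ybf(R(\xbf',\ybf)\wedge\lambda)\in t(\xbf)$, a $\Phi'\in\{\Psi\}\cup\Mmc'$ that is \emph{compatible with $\Psi$}. Compatibility is relative to the ambient mosaic: enlarging $\Phi$ to $\Psi$ by adding $t_1,t_2$ can destroy compatibility with the $\Phi'\in\Mmc'$ that used to serve as the witness for a small-width type $t$ when $t$ lived inside $\Phi$ (the new types in $\Psi$ may have no partner in $\Phi'$ and may introduce new $\tau$-guards triggering the second compatibility condition). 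So you must check all types of $\Psi$, not just $t_1,t_2$. This part is easily repaired, but the deterministic construction of $\Psi$ is the substantive gap.
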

%
\begin{proof}
  %
%
%
%
  %
%
  First determine $\mathcal{M}'$ by exhaustively removing $\tau$-mosaics that are not existentially saturated from the list of all $\tau$-mosaics with at most width$(\Xi)$ free variables. It can be verified that the fixpoint is existentially saturated.
  Next we proceed as follows: for every pair $t_1(\xbf),t_2(\xbf)$ of $\Xi$-types  
  that satisfies Condition~2(a) enumerate all $\tau$-mosaics $\Psi$
  satisfying Condition~2(b), that is, $t_1(\xbf),t_2(\xbf)\in
  \Psi$ and all types in $\Psi$ except $t_1(\xbf),t_2(\xbf)$ use
  at most $\text{width}(\Xi)$ free variables among $[\xbf]$. Accept
  if at least one $\{\Psi\} \cup \mathcal{M}'$ is existentially saturated.
  Reject otherwise.
  
  Correctness of the algorithm is straightforward, so it remains to
  analyze its run time. For this purpose, let $r$ be the number of
  subformulas (of formulas) in $\Xi$ and $\ell\geq 0$.  Observe that a
  subformula with $\ell$ free variables has at most $(2n)^{\ell}$
  instantiations with variables from $x_1,\ldots,x_{2n}$. Since for
  every such instantiated formula either the formula itself or its
  negation is contained in any type, there are at most $2^{r
    (2n)^{\ell}}$ many types with $\ell$ free variables. Thus, there
    are only double exponentially many choices for
    $t_{1}(\xbf),t_2(\xbf)$ and $\Psi$. Moreover, the set of all 
    $\tau$-mosaics with at most width$(\Xi)$ free variables
    is of size triple exponential in the size of
    $\varphi(\xbf_{0}),\psi(\xbf_{0})$ in general, and double
    exponential in the size of $\varphi(\xbf_{0}),\psi(\xbf_{0})$ if
    $\text{width}(\Xi)$ is bounded by a constant. The upper bounds now follow from the observation that checking whether
    some $\Phi$ is existentially saturated in some set $\mathcal{M}_{0}$ of mosaics can be done in time polynomial in the size of $\Mmc_{0}$. 
%
\end{proof}

From the equivalence of Conditions~1 and~2, and
Lemma~\ref{lem:complexity} we finally obtain that joint
GF$(\tau)$-consistency is in \ThreeExpTime in general, and in
\TwoExpTime if the arity of relation symbols is bounded by a constant.
%
%

\subsection{Lower Bounds}
We reduce the word problem for languages recognized
by exponentially
and double exponentially space bounded alternating Turing machines, respectively. An \emph{alternating Turing machine (ATM)} is a tuple
$M=(Q,\Theta,\Gamma,q_0,\Delta)$ where $Q=Q_{\exists}\uplus
Q_{\forall}$ 
is the set of states that consists of \emph{existential states}
in~$Q_{\exists}$ and \emph{universal states} in~$Q_{\forall}$.
Further, $\Theta$ is the input alphabet and $\Gamma$ is the tape
alphabet that contains a \emph{blank symbol} $\Box \notin \Theta$,
$q_0\in Q_{\forall}$ is the \emph{starting state}, and the
\emph{transition relation} $\Delta$ is of the form
$\Delta\subseteq Q\times \Gamma\times Q\times \Gamma \times \{L,R\}.$
The set $\Delta(q,a):=\{(q',a',M)\mid (q,a,q',a',M)\in\Delta\}$ must
contain exactly two or zero elements for every $q\in Q$ and $a \in
\Gamma$. Moreover, the state $q'$ must be from $Q_\forall$ if $q \in
Q_\exists$ and from $Q_\exists$ otherwise, that is, existential and
universal states alternate. 

We use the following (slightly
non-standard) acceptance condition. 
A \emph{configuration} of an ATM is a word $wqw'$ with
\mbox{$w,w'\in\Gamma^*$} and $q\in Q$. 
We say that $wqw'$ is \emph{existential} if~$q$ is, and likewise for
\emph{universal}.  \emph{Successor configurations} are defined in the
usual way.  Note that every configuration has either zero or two successor
configurations.  A \emph{computation tree} of an ATM $M$ on
input $w$ is an infinite tree whose nodes are labeled
with configurations of $M$ such that
\begin{itemize}

  \item the root is labeled with the initial configuration
    $q_0w$;

  \item if a node is labeled with an existential configuration 
    $wqw'$, then it has a single successor and this successor is labeled 
    with a successor configuration of $wqw'$;

  \item if a node is labeled with a universal configuration
    $wqw'$, then it has two successors and these successors are
    labeled with the two successor configurations of~$wqw'$.

\end{itemize}
An ATM $M$ \emph{accepts} an input $w$ if there is a computation tree
of $M$ on $w$. Note that, starting from
the standard ATM acceptance condition defined via accepting states, this can be achieved by assuming that
exponentially (resp., double exponentially) space bounded ATMs
terminate on every input and then
modifying them to enter an infinite loop from the accepting state.
It is well-known that there are $2^n$-space
bounded and $2^{2^n}$-space bounded ATMs for which the recognized
language is
\TwoExpTime-hard and \ThreeExpTime-hard,
respectively~\cite{chandraAlternation1981}. 

\subsubsection{Bounded Arity}
%
For didactic reasons, we start with showing \TwoExpTime-hardness for the
bounded arity case. Let $M$ be a $2^n$-space bounded ATM and $w$ an input.
The idea of the reduction is as follows. We set 
\begin{align*}
%
%
\tau & = \{R,S,X,Z,B_\forall,B_\exists^1,B_\exists^2\}\cup \{A_\sigma\mid \sigma\in \Gamma\cup
(Q\times \Gamma)\},
\end{align*}
where $R,S$ are binary relation symbols, and the remaining symbols are
unary.  
We aim to construct $\varphi$ such that $M$ accepts $w$ iff
$\varphi\wedge A(x)$ and $\varphi\wedge \neg A(x)$ are jointly
GF$(\tau)$-consistent.  The sentence $\varphi$ is a conjunction of
several GF-sentences, which are, except for one, also 
FO$^2$-sentences. The first conjunct, $\varphi_0$ below, is this
exception and enforces that
every element satisfying $A$ is involved in a three-element $R$-loop
(similar to Example~\ref{ex:joint}):  
\begin{align*}
  \varphi_0 & = \forall x\big(A(x)\rightarrow \exists yz(G(x,y,z)\wedge
  X(x)\wedge \neg X(y)\wedge{} \\ 
  & \hspace{2.5cm} 
  \neg X(z) \wedge R(x,y)\wedge R(y,z)\wedge R(z,x)\big)
\end{align*}
%
Now, if $\varphi\wedge A(x)$ and $\varphi\wedge \neg A(x)$ are jointly
GF$(\tau)$-consistent,
 there exist models \Amf and \Bmf of $\varphi$ and elements $a,b$ such that $a\in A^\Amf$, $b\notin A^\Bmf$, and
$\Amf,a\sim_{\text{GF},\tau} \Bmf,b$. If the latter holds, then from $a\in A^\Amf$ and
$\varphi_0$ it follows that $b$ has an infinite outgoing path $\rho$
along $R$ on which every third element satisfies $X$ and is guarded
$\tau$-bisimilar to $a$.  Let us call these elements the
\emph{$X$-elements}.  As guarded bisimilarity is an equivalence
relation, all $X$-elements are actually guarded $\tau$-bisimilar.
The other conjuncts of $\varphi$
will enforce that along the $X$-elements on $\rho$, a
counter counts modulo $2^{n}$ using relation symbols not in $\tau$.
Moreover, in every $X$-element of $\rho$ starts an infinite tree along symbol $S$ that is
supposed to mimick the computation tree of $M$. Along this tree, two
counters are maintained:
\begin{itemize}
	
  \item one counter starting at $0$ and counting modulo $2^n$ to
    divide the tree in subpaths of length $2^n$; each such path of
    length $2^n$ represents a configuration;

  \item another counter starting at the value of the counter along
    $\rho$ and also counting modulo $2^{n}$.

\end{itemize}
To link successive configurations we use the fact that all
$X$-elements on $\rho$ are guarded $\tau$-bisimilar and thus each
$X$-element is the starting point of trees along $S$ with
identical $\tau$-decorations. As on the $m$th such tree the second
counter starts at all nodes at distances $k\cdot 2^{n}-m$, for all
$k\geq 1$, we are in the position to coordinate all positions at all
successive configurations.

In detail, let $w=a_0,\dots,a_{n-1}$ be an input to $M$ of length $n$.
We will be using unary symbols $A_i, U_i, V_i$, $1\leq i\leq n$ to
represent the aforementioned binary counters; we will refer to them
with $A$-counter, $U$-counter, and $V$-counter, respectively. 

The sentences below enforce that the $A$-counter along the $R$-path
$\rho$ is incremented (precisely) at every $X$-element. To
avoid that the counter is started at $a$ (which would lead to a
contradiction), we use an additional symbol $I\notin\tau$ that is
satisfied along the entire path and acts as an additional
guard:\footnote{Throughout, we assume that $\wedge$ has higher
  precedence than $\rightarrow$. Moreover, some formulas are not syntactically guarded but can
easily be rewritten.}
\begin{align*}
  \forall xy \big(R(x,y) & \rightarrow (\neg A(x)\wedge X(x) \rightarrow I(x)
  \big) \\
 \forall xy \big(R(x,y) & \rightarrow (I(x)\leftrightarrow
 I(y))\big)\\
 \forall xy \big(R(x,y) \wedge I(x) \wedge \neg X(y) & \rightarrow \text{Eq}(x,y))\big) \\
 \forall xy \big(R(x,y) \wedge I(x) \wedge
 X(y) & \rightarrow \text{Succ}(x,y))\big)
%
%
\end{align*}
Here, atoms $\text{Eq}(x,y)$ and
$\text{Succ}(x,y)$ are abbreviations for formulas that express that
the $A$-counter value at $x$ equals (resp., is the predecessor
of) the $A$-counter value at $y$, that is: 
\begin{align*}
  \text{Eq}(x,y) & =  \textstyle\bigwedge_{i}A_i(x)\leftrightarrow A_i(y) \\
  \text{Succ}(x,y) & =  \textstyle\bigvee_{i}\Big( A_i(y)\wedge \neg A_i(x)
  \wedge
  \bigwedge_{j<i} (\neg A_j(y) \wedge A_j(x))\\
& \hspace{1.5cm} 
  \textstyle\wedge\bigwedge_{j>i} (A_j(y) \leftrightarrow A_j(x)) \Big)
\end{align*}
Now, we start a tree along $S$ from all $X$-elements on the infinite
$R$-path. Along the path, we maintain the $U$- and $V$-counter, which
are initialized to $0$ and the value of the $A$-counter, respectively:
\begin{align*}
  \forall x \exists yS(x,y) &\\
  \forall xy\big(S(x,y) & \rightarrow (I(x)\leftrightarrow I(y)\big)\\
  \forall x\big(I(x)\wedge X(x) & \rightarrow \text{Min}_U(x)\big)\\
  \forall x \big(I(x)\wedge X(x) & \rightarrow \textstyle\bigwedge_{i}
  (V_i(x)\leftrightarrow A_i(x))\big)
\end{align*}
Here, $\text{Min}_U(x)$ is an abbreviation for the formula that
expresses that the $U$-counter is $0$ at~$x$; we use similar
abbreviations such as $\text{Max}_V(x)$ below.  The $U$ and
$V$-counters are incremented along $S$ analogously to how the
$A$-counter is incremented along $R$, but on every $S$-step; we omit
details. Configurations of $M$ are represented between two consecutive
elements having $U$-counter value $0$. We next enforce the structure
of the computation tree (recall that $q_0\in Q_{\forall}$). 
\begin{align*}
  \forall x\big(I(x)\wedge X(x) & \rightarrow B_\forall(x)\big) \\
  \forall xy\big(S(x,y) \wedge I(x)\wedge \neg\text{Max}_U(x) & \rightarrow
  (B_\forall(x) \leftrightarrow B_{\forall}(y))\big) \\
  \forall xy\big(S(x,y) \wedge I(x) \wedge \neg\text{Max}_U(x)
  & \rightarrow\textstyle\bigwedge_{j=1}^2
  (B_\exists^j(x) \leftrightarrow B_{\exists}^j(y))\big)\\
   \forall xy\big(S(x,y) \wedge I(x) \wedge \text{Max}_U(x) & \rightarrow
   (B_\forall(x) \leftrightarrow \neg B_{\forall}(y))\big)\\
  \forall x\big( I(x)\wedge \text{Max}_U(x) & \rightarrow \exists y(S(x,y)\wedge
  Z(y))\wedge{} \\
  & \phantom{ {}\rightarrow{} } \exists y(S(x,y)\wedge \neg Z(y)) \big) \\
  \forall x\big( I(x)\wedge \neg B_{\forall}(x) & \rightarrow (B_\exists^1(x)
  \leftrightarrow \neg B_{\exists}^2(x))\big)
\end{align*}
These sentences enforce that all nodes which represent a
configuration satisfy exactly one of $B_\forall, B_{\exists}^1,
B_{\exists}^2$, indicating the kind of configuration and, if
existential, also a choice of the transition function, indicated in
the superscript of $B_{\exists}^j$. The symbol
$Z\in \tau$ enforces the branching. 

We next set the initial configuration, for input
$w=a_0,\dots,a_{n-1}$.  Below, we use $\forall y ( S^i(x,y)\rightarrow
\psi(y))$ to abbreviate the GF-formula that enforces $\psi$ at all
elements $y$ that are reachable in $i$ steps via $S$ from $x$.
\begin{align*}
  %
  & \forall x (I(x)\wedge X(x)\rightarrow A_{q_0,a}(x)) \\
  & \forall x\big(I(x)\wedge X(x) \rightarrow \forall
  y(S^k(x,y)\rightarrow A_{a_k}(y))\big),\quad  0<k<n \\
  & \forall x\big(I(x)\wedge X(x) \rightarrow \forall
  y(S^{n}(x,y)\rightarrow \text{Blank}(y))\big)\\
  & \forall x (\text{Blank}(x) \rightarrow A_{\Box}(x)) \\
  & \forall x (\text{Blank}(x)\wedge \neg\text{Max}_U(x)\rightarrow
  \forall y(S(x,y)\rightarrow \text{Blank}(y)))
\end{align*}
We next coordinate consecutive configurations, focusing on cells that
are not at the border of a configuration; these corner cases can be
dealt with accordingly.  To this end, we associate with $M$ functions
$f_i$, $i\in\{1,2\}$ that map the content of three consecutive cells
of a configuration to the content of the middle cell in the $i$-th
successor configuration (assuming an arbitrary order on the
$\Delta(q,a)$). Moreover, for each triple
$(\sigma_1,\sigma_2,\sigma_3)\in (\Gamma\cup(Q\times\Gamma))^3$, we
fix a GF-formula $\psi_{\sigma_1,\sigma_2,\sigma_3}(x)$ that is
satisfied at an element $a$ of the computation tree iff $a$ is labeled
with $A_{\sigma_2}$, $a$ has an $S$-predecessor labeled with
$A_{\sigma_1}$, and $a$ has an $S$-successor labeled with
$A_{\sigma_3}$. Now, in each configuration, we synchronize elements
with $V$-counter $0$, by including for every
$\vec{\sigma}=(\sigma_1,\sigma_2,\sigma_3)$ and $i\in \{1,2\}$ the
following sentences:
\begin{align*}
  \forall x\big(& I(x)\wedge \text{Min}_V(x)\wedge
  \neg\text{Min}_U(x)\wedge\neg \text{Max}_U(x) \wedge B_{\forall}(x)
  \rightarrow \\
  &\ \ (\psi_{\vec\sigma}(x) \rightarrow A_{f_1(\vec\sigma)}^1(x) \wedge
  A_{f_2(\vec\sigma)}^2(x)) \big) \\
  \forall x\big(& I(x)\wedge \text{Min}_V(x)\wedge
  \neg\text{Min}_U(x)\wedge\neg \text{Max}_U(x) \wedge
  B_{\exists}^i(x) \rightarrow \\
  &\ \ (\psi_{\vec\sigma}(x) \rightarrow A_{f_i(\vec\sigma)}^i(x)) \big)
\end{align*}
The unary symbols $A^i_{\sigma}$ are used as markers (not in $\tau$)
and are propagated along $S$ for $2^n$ steps, exploiting the
$V$-counter.  The superscript $i\in\{1,2\}$ determines the successor
configuration that the symbol is referring to. After crossing the end
of a configuration, the symbol $\sigma$ is propagated using further
unary symbols $A_{\sigma}'$ (the superscript is not needed anymore
because the branching happens at the end of the configuration, based
on $Z$):
\begin{align*}
  %
  %
  %
  \forall x\big(\neg \text{Max}_U(x)\wedge A_{\sigma}^i(x) &
  \rightarrow \forall y(S(x,y)\rightarrow A_{\sigma}^i(y)) \big) \\
  \forall x\big(\text{Max}_U(x) \wedge B_{\forall}(x) \wedge
  A^1_{\sigma}(x) & \rightarrow{} \\
  & \hspace{-1cm} \forall y(S(x,y) \rightarrow(Z(y)
  \rightarrow A'_{\sigma}(y))) \big)\\
  \forall x\big(\text{Max}_U(x) \wedge B_{\forall}(x)\wedge
  A^2_{\sigma}(x) & \rightarrow{} \\
  & \hspace{-1cm}
  \forall y(S(x,y)\rightarrow(\neg Z(y)
  \rightarrow A'_{\sigma}(y))) \big)\\
  \forall x\big(\text{Max}_U(x) \wedge B_{\exists}^i(x)\wedge
  A^i_{\sigma}(x) & \rightarrow \forall y(S(x,y)\rightarrow
  A'_{\sigma}(y))\big) \\ 
  \forall x\big(\neg \text{Max}_V(x)\wedge A'_\sigma(x) &\rightarrow
  \forall y(S(x,y)\rightarrow A'_\sigma(x))\big) \\
  \forall x\big(\text{Max}_V(x)\wedge A'_\sigma(x) &\rightarrow
  \forall y(S(x,y)\rightarrow A_\sigma(x))\big)
\end{align*}
For those $(q,a)$ with $\Delta(q,a)=\emptyset$, we add the sentence
\[\forall x\; \neg A_{q,a}(x) \]
%
to ensure that such halting states are never reached. Correctness of
the reduction is established in the appendix.

\begin{restatable}{lemma}{lemcorrectgf}\label{lem:correct2exp}
  $M$ accepts the input $w$ iff there exists models $\Amf,\Bmf$ of $\varphi$
  and elements $a\in A^\Amf$, $b\notin A^\Bmf$ 
  such that $\Amf,a\sim_{\text{GF},\tau}\Bmf,b$.
\end{restatable}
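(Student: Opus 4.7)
The proof proceeds by both directions of the biconditional. For the ($\Rightarrow$) direction, assume $M$ has an infinite accepting computation tree $T$ on $w$. Construct $\Bmf$ with a bi-infinite $R$-path $\ldots, b_{-1}, b_0, b_1, \ldots$ on which $I$ holds everywhere and $X$ holds exactly at the $b_{3k}$; set the $A$-counter at $b_{3k}$ to $k \bmod 2^n$ and propagate it via $\text{Eq}$/$\text{Succ}$ at intermediate nodes. From each $X$-element $b_{3k}$, attach a copy of $T$ as an $S$-tree, starting the $U$-counter at $0$ and the $V$-counter at $k \bmod 2^n$, and labelling nodes with $A_\sigma$, $B_\forall$, $B_\exists^j$, $Z$, and the $A^i_\sigma$ and $A'_\sigma$ markers as prescribed by $T$ and the propagation sentences. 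Build $\Amf$ by extending $\Bmf$ with two fresh elements $c, d$, adding the atoms $G(a,c,d)$, $R(a,c)$, $R(c,d)$, $R(d,a)$ where $a := b_0$, and declaring $A(a)$. Since $A, G \notin \tau$, the $\tau$-reducts of $\Amf$ and $\Bmf$ agree outside $\{c, d\}$, and the identity on shared elements (together with matching guarded images for the tuples involving $c, d$, using $b_1, b_2$) yields a GF$(\tau)$-bisimulation $\Amf, a \sim_{\text{GF},\tau} \Bmf, b$. All conjuncts of $\varphi$ are satisfied by construction.

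For the ($\Leftarrow$) direction, assume $\Amf, \Bmf \models \varphi$ with $a \in A^\Amf$, $b \notin A^\Bmf$, and $\Amf, a \sim_{\text{GF},\tau} \Bmf, b$. Applying $\varphi_0$ yields a $\tau$-guarded $R$-3-cycle on $a$ (with the $G$-guard invisible to $\tau$-bisimilarity); repeatedly invoking the forth clause produces an infinite outgoing $R$-path $b = b_0, b_1, b_2, \ldots$ in $\Bmf$ whose every third element $b_{3k}$ is $X$-labeled and bisimilar to $a$, hence pairwise bisimilar by transitivity. The $I$-propagation and $\text{Eq}/\text{Succ}$ sentences force the $A$-counter to strictly increment by $1$ modulo $2^n$ across consecutive $X$-elements, so the path cannot collapse into a 3-cycle and every residue modulo $2^n$ is realised by some $b_{3k}$. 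From each $X$-element grows an $S$-tree whose $U$-counter partitions it into consecutive blocks of length $2^n$ and whose $V$-counter starts at the local $A$-counter value; the $B_\forall$/$B_\exists^j$/$Z$ sentences impose the correct alternating binary branching at block boundaries, and the initial-configuration sentences guarantee that the first $2^n$ $S$-nodes encode the configuration $q_0 w \Box \cdots \Box$.

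For any cell position $j$ within a configuration, one may pick an $X$-element $b_{3k}$ whose $V$-counter hits $0$ at relative depth $j$; at such a position the transition-enforcement sentences together with the $A^i_\sigma$-to-$A'_\sigma$-to-$A_\sigma$ marker propagation dictate the cell contents of the two successor configurations. By pairwise bisimilarity of the $S$-trees rooted at $X$-elements, these enforced contents are consistently transferred to the corresponding cell in every other $X$-tree, yielding a globally coherent computation tree of $M$ on $w$. The sentence $\forall x\, \neg A_{q,a}(x)$ for every $(q,a)$ with $\Delta(q,a) = \emptyset$ ensures this tree never enters a halting configuration, so it is infinite and witnesses that $M$ accepts $w$. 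The main technical obstacle is precisely this $V$-counter synchronization: one must verify that the marker chain of length $2^n$ from a $V = 0$ position in one $X$-tree ends up at the expected relative position in the next configuration, and that bisimilarity between $X$-trees with differing $V$-offsets forces the same successor contents at every cell of every configuration rather than merely at scattered cells where $V = 0$.
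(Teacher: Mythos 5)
Your $(\Leftarrow)$ direction follows the paper's argument closely and is essentially correct: the forced infinite outgoing $R$-path, the pairwise $\tau$-bisimilarity of the $X$-elements and their $S$-trees, the use of the $A$/$V$-counter offset to cover every relative cell position, and the transfer via bisimilarity are all the right ingredients.

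There is, however, a genuine error in the $(\Rightarrow)$ construction: you set $a := b_0$, i.e.\ you place the $A$-element \emph{on} the bi-infinite $R$-path. That makes $\Amf$ fail to satisfy $\varphi$. Since $\Amf$ extends $\Bmf$, the sentence $\forall xy\,(R(x,y)\to(\neg A(x)\wedge X(x)\to I(x)))$ still forces $I$ at every $X$-element of the path other than $a$ itself, and the propagation sentence $\forall xy\,(R(x,y)\to(I(x)\leftrightarrow I(y)))$ then forces $I(a)$ as well because $a$ has $R$-neighbours $b_{-1}$ and $b_1$ on the path. Once $I(a)$ holds, $I$ propagates to your fresh $c,d$, and going around the cycle $a\to c\to d\to a$ yields $\text{Eq}(a,c)$, $\text{Eq}(c,d)$ and $\text{Succ}(d,a)$, so the $A$-counter value at $a$ equals its own successor modulo $2^n$ --- a contradiction. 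The entire purpose of the $I$-flag in $\varphi$ is to switch the counter \emph{off} at the $A$-element, and that only works if that element is not on the $R$-path. The paper therefore introduces three fresh elements $a_0,a_1,a_2$ disjoint from the path, makes $a_0$ satisfy $A$ and $X$ but \emph{not} $I$, attaches an $S$-tree at $a_0$ carrying only $\tau$-labels, and lets the bisimulation unfold the cycle onto the path via $(a_i,a_{i+1})\mapsto(b_{i+3k},b_{i+3k+1})$. If you detach $a$ from the path in the same way, the rest of your construction goes through.
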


\subsubsection{General Case}
We reduce the word problem of $2^{2^n}$-space bounded ATMs using  
the very same idea as in the previous section.
However, we need double exponential counters instead of the single
exponential counters for $A,U,V$ above. These counters are encoded in a way
similar to the \TwoExpTime-hardness proof for satisfiability in the
guarded fragment~\cite{DBLP:journals/jsyml/Gradel99}. The mentioned
encoding is based on \emph{pairs of elements}, so we ``lift'' the
above reduction to pairs of elements and consequently double the arity of
all involved symbols. More precisely, we set 
\begin{align*}
%
%
\tau & = \{R,S,X,Z,B_\forall,B_\exists^1,B_\exists^2\}\cup \{A_\sigma\mid \sigma\in \Gamma\cup
(Q\times \Gamma)\},
\end{align*}
where $R,S$ are 4-ary relation symbols, and the remaining symbols are
binary. The sentence $\varphi$ is a conjunction of
several sentences. The first conjunct, $\varphi_0$ below, enforces that
every pair of elements satisfying $A$ is involved in a three-element
$R$-loop as follows:\footnote{We omit commas and/or parentheses when no confusion
can arise.}
\begin{align*}
  \varphi_0& =\forall xx'\big(Axx'\rightarrow \exists
  yy'zz'(Gxx'yy'zz'\wedge
  Xxx' \wedge{} \\
  & \neg Xyy'\wedge\neg Xzz'\wedge Rxx'yy'\wedge
  Ryy'zz'\wedge Rzz'xx')\big) 
\end{align*}
As above, we aim to construct $\varphi$ such that $M$ accepts $w$ iff
there exist models \Amf and \Bmf of $\varphi$ and pairs $\abf,\bbf$ such that $\abf\in A^\Amf$, $\bbf\notin A^\Bmf$, and
$\Amf,\abf\sim_{\text{GF},\tau} \Bmf,\bbf$. If the latter holds 
then from $\abf\in A^\Amf$ and
$\varphi_0$ it follows that $\bbf$ has an infinite outgoing ``path'' $\rho$
along $R$ on which every third pair of elements satisfies $X$ and is guarded
$\tau$-bisimilar to $\abf$. Let us call these pairs the
\emph{$X$-pairs}. Observe that all $X$-pairs are guarded
$\tau$-bisimilar.

The main difference to the reduction above is the realization of the
counters, so we will concentrate on this and leave the
(straightforward) remainder of the proof to the reader. For realizing
the $A$-counter, we use an $n$-ary relation symbol $D$ and associate
a counter to every pair of elements $(a,a')$ as follows. We assume the
order $a<a'$ which induces an order $<$ on tuples $\abf\in
\{a,a'\}^n$. Thus, every tuple $\abf\in \{a,a'\}^n$ corresponds to a number
$r(\abf)<2^n$, the rank of $\abf$ according to $<$. Now the sequence
of truth values on all these tuples in $D$ can be viewed as the
binary representation of a number $<2^{2^n}$.

The $A$-counter along the $R$-path $\rho$ is enforced by the
following sentences:
\begin{align*}
  \forall xx'yy'\big(Rxx'yy'
  & \rightarrow (\neg Axx' \wedge Xxx'\rightarrow  Ixx') \big) \\
  \forall xx'yy' \big(Rxx'yy' & \rightarrow (Ixx'\leftrightarrow
  Iyy')\big)\\
  %
%
  \forall xx'yy' \big(Rxx'yy' \wedge Ixx' 
  & \rightarrow (\neg Xyy'\rightarrow \text{Eq}(xx'yy'))\big) \\
  \forall xx'yy' \big(Rxx'yy' \wedge Ixx' 
  & \rightarrow (Xyy'\rightarrow \text{Succ}(xx'yy'))\big)
\end{align*}
Again, the $I$ acts as an additional guard that disables the counting
at $\abf$.  It remains to define the formulas $\text{Eq}(xx'yy')$ and
$\text{Succ}(xx'yy')$. We show in the appendix that we can axiomatize
a $(4n+4)$-ary predicate $E$ such that, for pairs $a,a'$ and
$b,b'$ where $b,b'$ represents a successor node of $a,a'$, and for
$\abf,\abf'\in\{a,a'\}^n$ and $\bbf,\bbf'\in \{b,b'\}^n$, we have
\begin{equation*}
  E(\abf\abf'aa'\bbf\bbf'bb')\text{\quad iff\quad }
  r(\abf)=r(\bbf)\text{ and }r(\abf')=r(\bbf'). 
\end{equation*}
Then the formulas $\text{Eq}$ and $\text{Succ}$ can be defined as
follows:
\begin{align*}
  \text{Eq}(xx'yy') & =
  \forall\xbf\ybf\xbf'\ybf'\big(E\xbf\xbf'xx'\ybf\ybf'yy'
  \rightarrow(D\xbf\leftrightarrow D\ybf)\big)\\
  \text{Succ}(xx'yy') &= \exists \xbf\ybf \big(E\xbf\xbf xx'\ybf\ybf
  yy' \wedge \neg D\xbf\wedge D\ybf\\
  & \hspace{-1cm} \wedge \forall
  \xbf'\ybf'\big(E\xbf\xbf'xx'\ybf\ybf'yy'\rightarrow \\
  & (\text{less}(\xbf'\xbf xx')\rightarrow D\xbf'\wedge \neg
  D\ybf' )\wedge {} \\
  & 
  (\text{less}(\xbf\xbf'xx')\rightarrow (D\xbf'\leftrightarrow
  D\ybf' ))\big) \big)
\end{align*}
where, for $\xbf=x_0\ldots x_{n-1}$ and $\xbf'=x_0'\ldots x_{n-1}'$,
we have 
\begin{align*}
  \text{less}(\xbf\xbf'xx')=\textstyle\bigvee_{i<n}\big(x_i'=x'\wedge
  x_i=x\wedge \bigwedge_{j>i} x_j=x_j' \big). 
\end{align*}
Thus, $\text{less}(\xbf\xbf'xx')$ compares the positions of
$\xbf$ and $\xbf'$ according to the order $x<x'$. Moreover,
$\text{Eq}(xx'yy')$ is true iff the counters stipulated by 
$x,x'$ and $y,y'$ have precisely the same bits set. Finally, 
$\text{Succ}(xx'yy')$ asserts the existence of a position $k$ such that
\textit{(i)} in the
counter stipulated by $x,x'$ bit $k$ is set to $0$ while in the
counter stipulated by $y,y'$ bit $k$ is set to $1$, \textit{(ii)} on
all positions $k'$ less than $k$, the bits in the former counter
are $1$ while the bits in the latter are $0$, and
\textit{(iii)} on all positions $k'$ greater than $k$ the counters
agree on their bits.

Having the adapted counters available, the proof then proceeds along
the lines of the proof given for the bounded arity case, always
replacing single elements/variables with pairs of elements/variables
as exemplified above. 

\section{Deciding Joint FO$^{2}(\tau)$-consistency}\label{sec:fo2}

We prove Theorem~\ref{lem:lower-bounds} (ii).  We proceed similarly to
the proof for GF by proving a N\TwoExpTime upper bound for joint
FO$^{2}(\tau)$-consistency and then applying
Lemma~\ref{lem:gf-char-interpolation} to obtain a \coNTwoExpTime upper
bound for FO$^{2}$-interpolant existence.  For the complexity lower
bound we consider joint FO$^{2}(\tau)$-consistency for an input
of the form given in Lemma~\ref{lem:gf-char-explicit}.

\subsection{Upper Bound}	
We show the N\TwoExpTime upper bound for joint FO$^{2}(\tau)$-consistency by proving that if two FO$^{2}$-formulas are jointly FO$^{2}(\tau)$-consistent, then there exist FO$^{2}(\tau)$-bisimilar models
satisfying the formulas of at most double exponential size:
\begin{theorem}
	If $\varphi(\xbf_{0}),\psi(\xbf_{0})$ are jointly
	FO$^{2}(\tau)$-consistent, then there are pointed models $\Bmf_{1},\bbf_{1}$ and $\Bmf_{2},\bbf_{2}$ of at most double exponential size such that $\Bmf_{1}\models \varphi(\bbf_{1})$, $\Bmf_{2}\models \psi(\bbf_{2})$ and $\Bmf_{1},\bbf_{1} \sim_{\text{FO}^2,\tau} \Bmf_{2},\bbf_{2}$.
\end{theorem}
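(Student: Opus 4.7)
The plan is to adapt the mosaic-based approach developed for GF in the preceding section, exploiting the fact that FO$^{2}$-types involve at most two free variables and are therefore of only single-exponential size in $|\Xi|$, where $\Xi=\{\varphi,\psi\}$.

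First, I would fix the closure $\cl(\Xi)$ of $\Xi$ under subformulas and negation, restricted to formulas with at most two free variables. A 1-type $t(x)$ (resp.\ 2-type $t(x,y)$) is a maximal consistent subset of $\cl(\Xi)$ with the indicated free variables; there are at most $2^{O(|\Xi|)}$ of each kind. I would then define an FO$^{2}(\tau)$-\emph{mosaic} as a set $M$ of 1-types that is \emph{$\tau$-uniform} (any two $t,t'\in M$ coincide on their $\tau$-restrictions) and that represents a single FO$^{2}(\tau)$-bisimulation equivalence class: the 1-types in $M$ are exactly those realized by some element inside that class. Since $|M|\leq 2^{O(|\Xi|)}$, the number of mosaics is at most doubly exponential in $|\Xi|$.

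Second, I would impose back-and-forth saturation conditions on sets $\mathcal{M}$ of mosaics: for every $M\in\mathcal{M}$, every $t\in M$, and every existentially quantified formula $\exists y\,\chi(x,y)\in t$, there is a mosaic $M'\in\mathcal{M}$ and a 2-type extending $t$ by some $t'\in M'$ that realizes $\chi$, with the atomic $\tau$-information on the pair coinciding across $M$ and $M'$. A read-off lemma then shows that jointly FO$^{2}(\tau)$-consistent $\omega$-saturated models (which exist by Lemma~\ref{lem:guardedbisim}) give rise to such a saturated $\mathcal{M}$: for each bisimulation equivalence class, collect all 1-types realized by elements of either structure lying in that class. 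In particular, the 1-types of the pointed tuples $\abf_{1},\abf_{2}$ occur together in a distinguished mosaic $M^{\star}$.

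Third, from a saturated $\mathcal{M}$ with distinguished $M^{\star}$, I would build finite $\Bmf_{1},\Bmf_{2}$ as follows: for each pair $(M,t)$ with $t\in M\in\mathcal{M}$, introduce one representative element in each $\Bmf_{i}$ carrying the 1-type $t$; declare the bisimulation $S$ to consist of those pairs whose mosaic-coordinate coincides. For every pair of elements, assign a 2-type that is consistent with the two 1-types and that supplies every existential witness demanded by the saturation of $\mathcal{M}$; here one iterates through the existentials of each 1-type and picks, inside the appropriate mosaic, a partner element whose 2-type can be configured to realize the witness. The total number of elements is at most $|\mathcal{M}|\cdot\max_{M}|M|\leq 2^{2^{O(|\Xi|)}}$, which yields the double-exponential bound.

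The main obstacle is the third step: we must globally assign a 2-type to every ordered pair of elements so that the atomic $\tau$-information simultaneously (i) induces the correct back-and-forth behaviour of $S$ on both sides, (ii) realizes the existential requirements of every 1-type on both sides, and (iii) remains globally consistent. This is where the two-variable restriction becomes decisive, since 2-types on distinct pairs can be chosen independently, but the bookkeeping must be carried out at the mosaic level so that bisimilar elements on the two sides always see matching $\tau$-witnesses. The technique resembles the one-point extension argument in the Gr\"adel-Kolaitis-Vardi proof of the exponential model property for FO$^{2}$, interleaved with the mosaic-level coordination needed to preserve FO$^{2}(\tau)$-bisimilarity between $\Bmf_{1}$ and $\Bmf_{2}$.
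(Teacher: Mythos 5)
Your high-level plan — read off a set of mosaics (each recording the 1-types realized inside a single $\mathrm{FO}^{2}(\tau)$-bisimulation class) from the $\omega$-saturated witnessing models, then rebuild small structures from the mosaics — is in the spirit of the paper. But the third step contains a genuine gap that makes the construction fail as stated. You propose introducing \emph{one} representative element per pair $(M,t)$, but this cannot work: a single element $e$ of 1-type $t$ may carry several existential demands $\exists y\,\chi_1(x,y)$, $\exists y\,\chi_2(x,y),\ldots$ whose only available target is the same 1-type $t'$ inside the same mosaic $M'$, while the required 2-types are mutually inconsistent (e.g.\ $\exists y(R(x,y)\wedge B(y))$ vs.\ $\exists y(\neg R(x,y)\wedge B(y))$). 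With a single element of type $(M',t')$ available, the ordered pair $(e,f)$ cannot be assigned a 2-type satisfying both. The Gr\"adel--Kolaitis--Vardi argument you invoke does \emph{not} use one element per type precisely because of this collision problem; it uses several fresh copies of each non-king (``pawn'') type so that every demand can be routed to an element that is not yet committed. You identify the bookkeeping as ``the main obstacle'' but then leave it unresolved, and the resolution is not a routine patch.

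Two further structural ingredients of the paper's proof are missing from your sketch. First, the paper's mosaics are \emph{pairs} $(\Phi_1,\Phi_2)$ of 1-type sets, tagged by which original model the types come from; your single-set mosaics lose the information needed to decide which types populate $\Bmf_1$ versus $\Bmf_2$ and to define the cross-model bisimulation correctly. Second, and more importantly, the construction must distinguish \emph{king mosaics} (bisimulation classes that are the unique class with their type-profile) from \emph{pawn mosaics}, and within king mosaics must distinguish \emph{$i$-king} types (realized by a unique element in the class) from \emph{$i$-pawn} types. Kings must appear exactly once (their 1-type can enforce uniqueness, as in Example~\ref{ex:fo2-beth}), while pawns must be replicated $k_1$ or $k_1\cdot k_2$ times (with $k_1,k_2$ already double exponential) so that the link-type assignments (P1)--(P5) can be carried out without collision while preserving, per copy of a mosaic, bisimilarity of all its elements. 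Your bound $|\mathcal{M}|\cdot\max_M|M|$ is of the right order of magnitude but is reached by a construction that produces too few elements; the genuine construction needs the extra factor of copies, and correctness depends on the king/pawn case analysis that your proposal omits entirely.
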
	
The remainder of this section is devoted to the proof. We first simplify the input
formulas. Generalizing~\cite{DBLP:journals/bsl/GradelKV97}, we show in
the appendix that one can assume w.l.o.g. that the input formulas only
use relation symbols of arity at most two. Then one can easily extend
the normal form for FO$^{2}$ sentences provided
in~\cite{DBLP:journals/bsl/GradelKV97} to the following normal form
for formulas: for any FO$^{2}$-formula $\chi(\xbf)$ only using
relation symbols of arity at most two one can construct in polynomial time an FO$^{2}$-formula $\chi'(\xbf)$ of the form
$$
\textstyle R_{0}(\xbf) \wedge \forall x \forall y \alpha \wedge \bigwedge_{i=1}^{m}\forall x \exists y \beta_{i}(x,y),
$$
where $R_{0}$ is a relation symbol and $\alpha$ and $\beta_{i}$ are quantifier-free such that all relations symbols in $\chi'(\xbf)$ have arity at most two and 
\begin{enumerate}
	\item $\chi'\models\chi$;
	\item every model of $\chi$ can be expanded to a model of $\chi'$.
\end{enumerate}
In what follows we can thus assume that the input formulas $\varphi(\xbf_{0}),\psi(\xbf_{0})$ are of this form. Let $\Xi=\{\varphi(\xbf_{0}),\psi(\xbf_{0})\}$. We use
$\text{cl}(\Xi)$ to denote the closure under single negation of the set of all subformulas of $\varphi$ and $\psi$ with at most the variable $x$ free and all formulas of the form $R(x)$ and $R(x,x)$ with $R$ a unary or, respectively, binary relation symbol in $\varphi,\psi$.
The \emph{1-type $t_{\Amf}(a)$ realized in a pointed structure $\Amf,a$} is defined as
$$
t_{\Amf}(a) := \{ \chi(x) \mid \Amf \models \chi(a), \chi\in \text{cl}(\Xi)\}
$$
A \emph{1-type} $t$ is any subset of $\text{cl}(\Xi)$ such that there exists  a pointed structure $\Amf,a$ with $t= t_{\Amf}(a)$.
A \emph{link-type} $l$ contains $x\not= y$ and for any binary relation symbol $R$ in $\Xi$ either $R(x,y)$ or $\neg R(x,y)$ and $R(y,x)$ or $\neg R(y,x)$. The \emph{link-type $l_{\Amf}(a,b)$ realized in a pointed structure $\Amf,a,b$ with $a\not=b$} contains $R(x,y)$ iff $\Amf \models R(a,b)$ and it contains $R(y,x)$ iff $\Amf\models R(b,a)$.

For a pair $(l,s)$ with $l$ a link-type and $s$ a 1-type we say that nodes $d,d'$ \emph{satisfy} $(l,s)$ in $\Amf$ if $l=l_{\Amf}(d,d')$ and $s=t_{\Amf}(d')$. 
%
%

Now assume that $\varphi$ and $\psi$ are jointly FO$^{2}(\tau)$-consistent. Then we find pointed models $\Amf_{1},\abf_{1}$ and $\Amf_{2},\abf_{2}$ satisfying $\varphi$ and $\psi$, respectively, such that $\Amf_{1},\abf_{1} \sim_{\text{FO}^2,\tau} \Amf_{2},\abf_{2}$. We extract from $\Amf_{1}$ and $\Amf_{2}$ new pointed models $\Bmf_{1},\bbf_{1}$ and $\Bmf_{2},\bbf_{2}$ which still witness joint FO$^{2}(\tau)$-consistency of $\varphi$ and $\psi$ but which are of at most double exponential size in $\varphi$ and $\psi$.
In what follows we assume that $\text{dom}(\Amf_{1}) \cap \text{dom}(\Amf_{2})=\emptyset$.
We write $d \sim_{\text{FO}^{2},\tau}e$ if there are $i,j\in \{1,2\}$ 
with 
$d\in \text{dom}(\Amf_{i})$, $e\in \text{dom}(\Amf_{j})$, and $\Amf_{i},d \sim_{\text{FO}^{2},\tau}\Amf_{j},e$. 

A \emph{mosaic} $m$ is a pair $(\Phi_{1},\Phi_{2})$ with $\Phi_{1},\Phi_{2}$  sets of 1-types. The \emph{mosaic} $m(d)=(\Phi_{1},\Phi_{2})$ \emph{generated by} $d\in \text{dom}(\Amf_{1})\cup \text{dom}(\Amf_{2})$ is defined by setting
$$
\Phi_{j}= \{ t_{\Amf_{j}}(e) \mid e\in \text{dom}(\Amf_{j}),
d \sim_{\text{FO}^{2},\tau}e\},
$$ 
for $j=1,2$. The set $\mathcal{M}$ of all mosaics generated in $\Amf_{1},\Amf_{2}$ is then defined as
$$
\mathcal{M}= \{m(d) \mid d \in \text{dom}(\Amf_{1})\cup \text{dom}(\Amf_{2})\}.
$$
Observe that since FO$^{2}(\tau)$-bisimulations are global, $\mathcal{M}=\{m(d) \mid d \in \text{dom}(\Amf_{i})\}$, for $i=1,2$. The set $\mathcal{K}\subseteq \mathcal{M}$ of \emph{king mosaics} is defined
as the set of all $m(d)\in \mathcal{M}$ such that for all $e$ with $m(d)=m(e)$ we have $d \sim_{\text{FO}^{2},\tau}e$. Let $\mathcal{C}=\mathcal{M}\setminus \mathcal{K}$ be the set of \emph{pawn mosaics}. If $m(d)=(\Phi_{1},\Phi_{2})$ is a king mosaic, then call any $t\in \Phi_{i}$ such that there exists exactly one $e$ with $t=t_{\Amf_{i}}(e)$ and $d \sim_{\text{FO}^{2},\tau} e$ an \emph{$i$-king} in $(\Phi_{1},\Phi_{2})$. Any $t\in \Phi_{i}$
that is not an $i$-king in $(\Phi_{1},\Phi_{2})$ is called an \emph{$i$-pawn} in $(\Phi_{1},\Phi_{2})$. All $t\in \Phi_{i}$ with $(\Phi_{1},\Phi_{2})$
a pawn mosaic are called \emph{$i$-pawns} in $(\Phi_{1},\Phi_{2})$. Note
that we generalize a few notions introduced in the single exponential size model property proof for FO$^{2}$ presented in~\cite{DBLP:journals/bsl/GradelKV97}. In that proof, 1-types that are realized exactly once in a model played a special roles and were called kings. Here we generalize that notion to king mosaics and kings within king mosaics.

\medskip
We are now in the position to define the domains of $\Bmf_{1},\Bmf_{2}$ as follows. Let $s$ be the size of the input $\varphi(\xbf_{0}),\psi(\xbf_{0})$.
Then the number of mosaics is bounded by
$m_{\varphi,\psi}=2^{2^{s+1}}$.
Let $k_{1}= 2^{4s}\times m_{\varphi,\psi}$ and let $k_{2}=2^{3s} \times k_{1}^{2}$. 

%

Take $k_{1}$ many copies $(t,1),(t,2),\ldots,(t,k_{1})$ of every 1-type $t$
and take $k_{2}$ many copies $(m,1),(m,2),\ldots,(m,k_{2})$ of every pawn mosaic $m$. Then the domain $\text{dom}(\Bmf_{i})$ of $\Bmf_{i}$ contains, for $i=1,2$: 
\begin{enumerate}
	\item \emph{new $i$-kings} $(t,m)$, for $m\in \mathcal{K}$ and $t$ an $i$-king in $m$;
	\item \emph{semi $i$-pawns} $((t,1),m),\ldots,((t,k_{1}),m)$ for $m \in \mathcal{K}$ and $t$ an $i$-pawn in $m$;
	\item \emph{full $i$-pawns} $((t,1),(m,j)),\ldots,((t,k_{1}),(m,j))$, for $m$ a pawn mosaic, $t$ an $i$-pawn in $m$, and $1\leq j \leq k_{2}$.
\end{enumerate}
Observe that $\text{dom}(\Bmf_{1})\cup \text{dom}(\Bmf_{2})$ is of double exponential size in $\varphi,\psi$. To simplify notation we 
\begin{itemize}
	\item denote copies of types $t$ by $t'$ and copies of pawn mosaics $m$ by $m'$;
	\item often regard a king mosaic $m$ as a copy $m'$ of itself and an $i$-king $t$ in a king mosaic $m$ as a copy $t'$ of itself.
\end{itemize}
We aim to construct $\Bmf_{1}$ and $\Bmf_{2}$ such that the following two conditions hold (where, as announced, $t'$ and $m'$ also range over $i$-kings and king mosaics, respectively):
\begin{enumerate}
\item Any pair $(t',m')$ realizes the 1-type of which $t'$ is a copy. More precisely, for $i=1,2$, if $(t',m')\in \text{dom}(\Bmf_{i})$ and $t'$ is a copy of 1-type $t$ and $\beta(x)\in \text{cl}(\Xi)$, then 
$$
\Bmf_{i} \models \beta(t',m') \quad \Leftrightarrow \quad \beta(x) \in t.
$$
\item For any copy $m'$ of a mosaic, all $(t',m')$ are FO$^{2}(\tau)$-bisimilar. More precisely, for all $t_{1}',t_{2}', m'$ such that $(t_{1}',m')\in \text{dom}(\Bmf_{i})$ and $(t_{2}',m')\in \text{dom}(\Bmf_{j})$
for some $i,j\in \{1,2\}$: $\Bmf_{i},(t_{1}',m') \sim_{\text{FO}^{2},\tau}\Bmf_{j},(t_{2}',m')$.
\end{enumerate}
We first define the interpretation of relation symbols on singleton subsets of $\text{dom}(\Bmf_{i})$ in the obvious way by setting $(t',m')\in R^{\Bmf_{i}}$
iff $R(x)\in t$, for $R$ unary, and $((t',m'),(t',m'))\in R^{\Bmf_{i}}$
iff $R(x,x)\in t$, for $R$ binary.
It thus remains to define the link-types
$l_{\Bmf_{i}}((t_{1}',m_{1}'),(t_{2}',m_{2}'))$ between distinct nodes $(t_{1}',m_{1}')$ and $(t_{2}',m_{2}')$ in $\Bmf_{i}$, $i=1,2$. To this end,
we will carefully associate
\begin{itemize}

  \item with every copy $m'$ of a mosaic a \emph{generator}
    $g\in \text{dom}(\Amf_{1}) \cup \text{dom}(\Amf_{2})$ such
    that $m'$ is a copy of $m=m(g)$;

  \item with every node $(t',m')\in \text{dom}(\Bmf_{i})$ a \emph{witness} $d\in \text{dom}(\Amf_{i})$ for $(t',m')$ such that
    $d\sim_{\text{FO}^{2},\tau} g$ for the generator $g$ of $m'$ and $t'$ is a copy of $t_{\Amf_{i}}(d)$. 
\end{itemize} 
If $(t_{1}',m_{1}')$ and $(t_{2}',m_{2}')$ contain a new $i$-king, then we will define $l_{\Bmf_{i}}((t_{1}',m_{1}'),(t_{2}',m_{2}'))$ as $l_{\Amf_{i}}(d_{1},d_{2})$ for the selected witnesses $d_{1}$ and $d_{2}$ for 
$(t_{1}',m_{1}')$ and $(t_{2}',m_{2}')$, respectively. For $i$-pawns,
$l_{\Bmf_{i}}((t_{1}',m_{1}'),(t_{2}',m_{2}'))$ will be defined using `global' constraints and will not in general be the induced link-type from $\Amf_{i}$.
We now give the detailed construction.

For king mosaics $m$ we simply select as its generator any $g$ with
$m=m(g)$ and for new $i$-kings $(t,m)$ we take the unique
$d\sim_{\text{FO}^{2},\tau} g$ with $t=t_{\Amf_{i}}(d)$ as its
witness. The definition of link-types between new $i$-kings is then as announced: if $(t_{1},m_{1})$ and $(t_{2},m_{2})$ are new distinct $i$-kings, then set $l_{\Bmf_{i}}((t_{1},m_{1}),(t_{2},m_{2})):= l_{\Amf_{i}}(d_{1},d_{2})$ for the witnesses $d_{1},d_{2}$ for $(t_{1},m_{1})$ and $(t_{2},m_{2})$, respectively.

{\bf Link-types between new $i$-kings and semi $i$-pawns}.
Assume $d$ is the witness for an $i$-king $(t,m)$ and $(d,d')$
satisfies $(l,s)$ for a link-type $l$ and 1-type $s$, where $d'$ is a node realizing an $i$-pawn and $m(d')$ is a king-mosaic. Then we aim to ensure that the link-type realized by $((t,m),(s',m(d'))$ equals $l$, for some copy $s'$ of $s$. To obtain these link-types we carefully choose the witnesses $d'$ for semi-$i$-pawns and then take, as announced, the link-type between the selected witnesses as given by $\Amf_{i}$. 

\smallskip
\noindent
(P1) Do the following for all new $i$-kings $(t,m)$: if $(s,n)$ is a
	  pair such that $n$ is a king mosaic and $s$ an $i$-pawn in
	  $n$, $l$ is a link-type, and $d$ is the witness of
	  $(t,m)$ such that $(d,d')$ satisfies $(l,s)$ for some $d'$ with
	  $n=m(d')$, then pick a copy $s'$ of $s$, pick such a $d'$ as
	  the witness for $(s',n)$, and set
	  $l_{\Bmf_{i}}((t_{\Amf_{i}}(e),m(e)),(s',n)):=
	  l_{\Amf_{i}}(e,d')$, for all witnesses $e$ for new
	  $i$-kings.

\smallskip
Note that there are sufficiently many fresh copies $s'$ of
1-types $s$ as $k_{1}\geq m_{1}m_{2}$, where $m_{1}$ is the
number of new $i$-kings and $m_{2}$ is the number of link-types. 
For any pair $(s',n)$ with $n$ a king mosaic and $s'$ a copy of an $i$-pawn $s$ in $n$ not selected according to (P1), pick any $d'$ with $n=m(d')$ such that $s'$ is a copy of $t_{\Amf_{i}}(d')$ as the witness for $(s',n)$ and
let $l_{\Bmf_{i}}((t_{\Amf_{i}}(e),m(e)),(s',n)):= l_{\Amf_{i}}(e,d')$, for all witnesses $e$ for new $i$-kings.

{\bf Link-types between semi $i$-pawns.} For any king mosaics 
$m_{1}, m_{2}$ and $i$-pawns 
$t_{1}\in m_{1}$ and $t_{2}\in m_{2}$ 
define 
\begin{align*}
  L_{i}((t_{1},m_{1}),(t_{2},m_{2}))  ={} &
\{ l_{\Amf_{i}}(d_{1},d_{2}) \mid d_{1}\not=d_{2}, \\
      &  \;\;\; t_{1}= t_{\Amf_{i}}(d_{1}), m_{1}=m(d_{1}),\\
      &  \;\;\; t_{2}= t_{\Amf_{i}}(d_{2}), m_{2}=m(d_{2})\}.	
    \end{align*}
(Note that $(t_{1},m_{1})= (t_{2},m_{2})$ is possible.)
Using the fact that the number of copies of any 1-type exceeds $4\times 2^{2s}$,
it is straightforward to define the link-types $l_{\Bmf_{i}}((t_{1}',m_{1}),(t_{2}',m_{2}))$, where $t_{1}'$ and $t_{2}'$ are copies of $t_{1}$ and $t_{2}$, in such a way that the following holds:

\smallskip
\noindent
(P2) If $t_{1}'$ is a copy of $t_{1}$ and $l$ a link-type, then there exists a copy $t_{2}'$ of $t_{2}$ such that $l=l_{\Bmf_{i}}((t_{1}',m_{1}),(t_{2}',m_{2}))$ iff $l\in L_{i}((t_{1},m_{1}),(t_{2},m_{2}))$.   

\smallskip
{\bf Selecting generators for pawn mosaics.}
To define link-types for pairs of nodes that include full $i$-pawns,
we first fix the generators of copies of pawn mosaics as
follows:

\smallskip
(M) Do the following for all $(t',n)$ which are either new $i$-kings or semi $i$-pawns: if $l$ is a link-type, $s$ a 1-type,  
$d$ is the witness for $(t',n)$, and $(d,d')$ satisfies $(l,s)$ for some $d'$ such that $d'\not\sim_{\text{FO}^{2},\tau}g$ for any $g$ generating a king mosaic, then take such a $d'$ and a copy $m'$ of the pawn mosaic $m$ generated by $d'$ and select as generator of $m'$ any $g'$ with $m(g')=m$ and  $d'\sim_{\text{FO}^{2},\tau} g'$. 

\smallskip
Note that there are sufficiently many copies of pawn mosaics as $k_{2}\geq 2m_{1}m_{2}$, where $m_{1}$ is the number of new $i$-kings and semi $i$-pawns and $m_{2}$ is the number of link-types.
For any copy $m'$ of a pawn mosaic $m$ for which no generator has yet been selected in (M) choose an arbitrary $g$ with $m=m(g)$ as a generator.

{\bf Link-types between new $i$-kings and full $i$-pawns.}
These link-types are now defined similarly to the link-types between new $i$-kings and semi $i$-pawns.

\smallskip
\noindent 
(P3) Do the following for all new $i$-king $(t,m)$: if $(s,n)$ is a pair such that $n'$ is a copy of the pawn mosaic $n$ and $s$ is an $i$-pawn in $n$,
$l$ is a link-type, $d$ is the witness of $(t,m)$, $g$ the generator of $n'$,
and $(d,d')$ satisfies $(l,s)$ for some $d'$ with $d'\sim_{\text{FO}^{2},\tau}g$, then pick a copy $s'$ of $s$, pick such a $d'$ as the witness for $(s',n')$, and set $l_{\Bmf_{i}}((t_{\Amf_{i}}(e),m(e)),(s',n')):= l_{\Amf_{i}}(e,d')$, for all witnesses $e$ for new $i$-kings.  

\smallskip
As in (P1), there are sufficiently many copies for this to work and for any full $i$-pawn $(s',n')$ not yet selected, pick any $d'$ with $d'\sim_{\text{FO}^{2},\tau}g$ for the generator $g$ of $n'$ and $s'$ a copy of $t_{\Amf_{i}}(d')$ as the witness for $(s',n')$ and
let $l_{\Bmf_{i}}((t_{\Amf_{i}}(e),m(e)),(s',n)):= l_{\Amf_{i}}(e,d')$, for all witnesses $e$ for new $i$-kings.

\medskip

{\bf Link-types between semi $i$-pawns and full $i$-pawns.}
For any king mosaic $m$, $i$-pawn $t\in m$, copy $n'$ of
a pawn mosaic $n$, and any $i$-pawn $s\in n$, let $g$ be the generator of 
$n'$ and set
\begin{align*}
  L_{i}((t,m),(s,n')) ={} & 
      \{ l_{\Amf_{i}}(d_{1},d_{2}) \mid t= t_{\Amf_{i}}(d_{1}), m=m(d_{1}), \\
        &\;\;\; s= t_{\Amf_{i}}(d_{2}), d_{2} \sim_{\text{FO}^{2},\tau} g\}
\end{align*}
Similarly to (P2), it is now straightforward to define link-types $l_{\Bmf_{i}}((t',m),(s',n'))$ in such a way that the following holds:

\smallskip
\noindent
(P4) (a) If $t'$ is a copy of $t$ and $l$ a link-type, then there exists a copy $s'$ of $s$ such that  $l=l_{\Bmf_{i}}((t',m),(s',n'))$ iff $l\in L_{i}((t,m),(s,n'))$.
	
(b) If $s'$ is a copy of $s$ and $l$ a link-type, then there exists a copy $t'$ of $t$ such that  $l=l_{\Bmf_{i}}((t',m),(s',n'))$ iff $l\in L_{i}((t,m),(s,n'))$.

\smallskip

{\bf Link-types between full $i$-pawns.} For any pawn mosaics $m_{1},m_{2}$ and $i$-pawns $t_{1}\in m_{1}$ and $t_{2}\in m_{2}$, define $L_{i}((t_{1},m_{1}),(t_{2},m_{2}))$ in exactly the same way as in the definition of link-types between semi $i$-pawns. Then one can define 
the link-types $l_{\Bmf_{i}}((t_{1}',m_{1}'),(t_{2}',m_{2}'))$ in such a way that the following holds:

\smallskip
\noindent
(P5) If $m_{1}'$ and $m_{2}'$ are copies of $m_{1}$ and $m_{2}$, $t_{1}'$ is a copy of $t_{1}$, and $l$ is a link-type, then there exists a copy $t_{2}'$ of $t_{2}$ such that $l=l_{\Bmf_{i}}((t_{1}',m_{1}'),(t_{2}',m_{2}'))$ iff $l\in L_{i}((t_{1},m_{1}),(t_{2},m_{2}))$.   

\smallskip
This finishes the definition of $\Bmf_{1}$ and $\Bmf_{2}$. It is not difficult to show that Conditions~1 and 2 above hold. Assume w.l.o.g. that $\abf_{1}=(a_{11},a_{12})$ and $\abf_{2}=(a_{21},a_{22})$ with $a_{11}\not=a_{12}$. Then, $a_{21}\not=a_{22}$ as otherwise
$\Amf_{1},\abf_{1} \not\sim_{\text{FO}^2,\tau} \Amf_{2},\abf_{2}$.
It is straightforward to ensure in the construction of $\Bmf_{1}$ and $\Bmf_{2}$ above that
$a_{11},a_{12},a_{21},a_{22}$ are witnesses for domain elements 
$(t_{11},m_{1}),(t_{12},m_{2})$ of $\Bmf_{1}$ and $(t_{21},m_{1}),(t_{22},m_{2})$ of $\Bmf_{2}$ and that
$l_{\Bmf_{i}}((t_{i1},m_{1}),(t_{i2},m_{2}))=l_{\Amf_{i}}(a_{i1},a_{i2})$,
	for $i=1,2$. Then we have that 
	$\Bmf_{1}\models \varphi((t_{11},m_{1}),(t_{12},m_{2}))$ and $\Bmf_{2}\models \psi((t_{21},m_{1}),(t_{22},m_{2}))$, by Condition~1. 
By Condition~2, 
$\Bmf_{1},(t_{11},m_{1}),(t_{12},m_{2})\sim_{\text{FO}^2,\tau} \Bmf_{2},(t_{21},m_{1}),(t_{22},m_{2})$.

\subsection{Lower Bound}

The lower bound proof is essentially a modification of the lower bound
for (the bounded arity case for) GF. In fact, it is also a reduction from the word problem of
exponentially space bounded ATMs which uses the same signature $\tau$. 
Again, we aim to construct an FO$^2$-sentence $\varphi'$ such that such an ATM $M$
accepts input $w$ iff $\varphi'\wedge A(x)$ and $\varphi'\wedge \neg A(x)$ are
jointly FO$^2(\tau)$-consistent. The sentence $\varphi'$ is obtained
from the sentence $\varphi$ constructed for GF by replacing the first
conjunct 
$\varphi_0$ with $\varphi_0'$ (recall that all other conjuncts are 
already in $\text{FO}^2$). Recall that 
$\varphi_0$ enforced a cycle of length three using a ternary relation,
which is impossible in FO$^2$. Instead,
we proceed similar to Example~\ref{ex:fo2-beth}. Indeed, $\varphi_0'$
enforces that every element satisfying $A$ is involved
in such a cycle:
\begin{align*}
  \varphi_0' ={} & 
  \forall x\, (Y(x)\rightarrow X(x) \wedge \varphi_3(x)\wedge
    \forall y\,(Y(y)\rightarrow x=y)) \wedge{} \\
    & \forall x\, (A(x)\rightarrow Y(x) )
\end{align*}
where $\varphi_3$ is as in Example~\ref{ex:fo2-beth}, that is, it
enforces the
existence of a path of length three to an element satisfying
$Y$, which is enforced to be a singleton.
%
Now, if $\varphi'\wedge A(x)$ and $\varphi'\wedge \neg A(x)$ are jointly
FO$^2(\tau)$-consistent, there exist models \Amf and \Bmf of
$\varphi'$
and elements $a,b$ such that $a\in A^\Amf$, $b\notin A^\Bmf$, and
$\Amf,a\sim_{\text{FO}^2,\tau} \Bmf,b$. If the latter holds, then from
$a\in A^\Amf$ and $\varphi_0'$ it follows that $b$ has an infinite
outgoing path $\rho$ along $R$ on which every third element satisfies
$X$. As FO$^2(\tau)$-bisimilarity is an
equivalence relation, all these elements satisfying $X$ are actually
FO$^2(\tau)$-bisimilar. Now, the synchronization of the successor
configurations works in the very same way as for GF; we prove
correctness in the appendix.
\begin{restatable}{lemma}{lemcorrectfotwo}\label{lem:correct2exp-fo2}
  $M$ accepts the input $w$ iff there exists models $\Amf,\Bmf$ of
  $\varphi'$
  and elements $a\in A^\Amf$, $b\notin A^\Bmf$ 
  such that $\Amf,a\sim_{\text{FO}^2,\tau}\Bmf,b$.
\end{restatable}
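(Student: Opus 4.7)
The argument is an adaptation of the correctness proof of Lemma~\ref{lem:correct2exp}. Since $\varphi'$ differs from $\varphi$ only in the first conjunct and all remaining conjuncts are already FO$^2$, the entire encoding of the ATM computation (the $R$-path $\rho$, the $A$-, $U$-, $V$-counters, the $S$-trees, and the markers $B_\forall, B_\exists^j, Z, A_\sigma, A_\sigma^i, A_\sigma'$) carries over verbatim. Only the role of $\varphi_0'$ in place of $\varphi_0$ requires attention.

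For the forward direction, assume $M$ accepts $w$. Construct $\Bmf$ as in the GF proof, but with a bi-infinite $R$-path $\dots, b_{-1}, b_0, b_1, \dots$ whose $X$-elements are precisely the $b_{3i}$; attach to each $X$-element an infinite $S$-tree coding the accepting computation, set $I$ true everywhere on this structure, and put $Y^\Bmf = A^\Bmf = \emptyset$ so that $\varphi_0'$ is vacuously satisfied. Obtain $\Amf$ by taking a disjoint copy of $\Bmf$ and adding a fresh $R$-3-cycle $a \to a_1 \to a_2 \to a$ together with an $S$-tree attached at $a$ whose $\tau$-labels copy those of the $S$-trees at the $b_{3i}$; put $Y^\Amf = A^\Amf = \{a\}$, $X^\Amf \ni a$, and leave $I^\Amf$ false on the new 3-cycle and the new $S$-tree. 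Then $\Amf \models \varphi'$: the new first conjunct $\varphi_0'$ is witnessed by $a$ as the unique element of $Y^\Amf$, satisfying $X(a)$ and $\varphi_3(a)$ via the 3-cycle; the $I$-gated conjuncts (in particular the $A$-counter conjuncts, which would be inconsistent around a 3-cycle) do not fire on the new part because $I$ is false there; and the non-$I$-gated conjuncts are satisfied by copying the appropriate $\tau$- and non-$\tau$-labels. The FO$^2(\tau)$-bisimulation pairs $a$ with every $b_{3i}$, $a_j$ with every $b_{3i+j}$ for $j \in \{1,2\}$, corresponding $S$-tree nodes to each other, and nodes in $\Amf$'s copy of $\Bmf$ with their originals in $\Bmf$; the back-and-forth conditions hold because the $\tau$-structure around the 3-cycle is that of an arbitrary 3-window along $\rho$, because the bi-infiniteness of $\rho$ supplies the required $R$-predecessors of $X$-points (needed to match the $R$-edge $a_2 \to a$), and because $Y, A, I \notin \tau$.

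For the backward direction, assume $\Amf, a \sim_{\text{FO}^2, \tau} \Bmf, b$ with $a \in A^\Amf$, $b \notin A^\Bmf$, and $\Amf, \Bmf \models \varphi'$. From $A(a)$ and $\varphi_0'$ we obtain $Y(a)$, $X(a)$, $\varphi_3(a)$, and $|Y^\Amf| = 1$, so the length-3 $R$-path witnessing $\varphi_3(a)$ returns to $a$, yielding an $R$-3-cycle at $a$. Applying the forth condition of the FO$^2(\tau)$-bisimulation repeatedly around this cycle, using $R, X \in \tau$, produces an infinite outgoing $R$-path $b = b_0, b_1, b_2, \dots$ in $\Bmf$ on which $X$ holds precisely at every third position; since FO$^2(\tau)$-bisimilarity is an equivalence relation, the $X$-points along this path are pairwise FO$^2(\tau)$-bisimilar. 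The remaining conjuncts of $\varphi'$ now force $I$ along the path, the $A$-counter to increment modulo $2^n$ at the $X$-points, and the initial configuration plus the computation to be encoded on the $S$-trees rooted at the $b_{3i}$. From here the extraction of an accepting computation tree of $M$ proceeds verbatim as in the proof of Lemma~\ref{lem:correct2exp}: the $V$-counter offset relative to the $A$-counter synchronises corresponding cells across successive configurations via the identification of bisimilar $X$-points.

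The main obstacle is the interaction between the $R$-3-cycle forced by $\varphi_0'$ at the $A$-element of $\Amf$ and the $A$-counter conjuncts of $\varphi'$, which are inconsistent around any short $R$-cycle. This is resolved in the forward direction by the observation that every counter conjunct is gated by $I$, with $I \notin \tau$, so that setting $I^\Amf$ false on the new 3-cycle does not disturb the bisimulation; in the backward direction, the dual task of reconstructing the 3-periodic infinite $R$-path in $\Bmf$ uses only $\tau$-information, and succeeds because $X \in \tau$ marks the periodicity correctly.
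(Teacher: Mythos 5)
Your proof is correct and mirrors the paper's own argument, down to the role of the $I$-gating, the bi-infinite $R$-path supplying $R$-predecessors at the $X$-points, and the use of FO$^{2}(\tau)$-bisimilarity as an equivalence relation to transport the computation-tree encoding. The only cosmetic difference is that you build two disjoint models $\Amf$ and $\Bmf$, whereas the paper constructs a single model containing both $a_0$ and $b_0$ and implicitly takes $\Bmf=\Amf$; both expositions leave minor bookkeeping (e.g.\ the exact closure of the listed pairs to a global bisimulation, $S$-successors for the non-$X$ elements of the $3$-cycle) to the reader.
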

To prove the second part of Theorem~\ref{lem:lower-bounds} (ii) we replace in $\varphi'$ every occurrence of any formula of the form $E(x)$ and $E(y)$ for a unary symbol $E\in \text{sig}(\varphi')\setminus (\tau \cup \{A\})$ by the formula 
$$
\chi_{E}(x) = \exists y(R_{E}(x,y) \wedge \exists x (N(y,x) \wedge \exists y (N(x,y)\wedge A(y))))
$$
and the formula $\chi_{E}(y)$ obtained from $\chi_{E}(x)$ by swapping $x$ and $y$, respectively. Here $R_{E}$, $E\in \text{sig}(\varphi')\setminus (\tau \cup \{A\})$, and $N$ are fresh binary relation symbols. An analogue of Lemma~\ref{lem:correct2exp-fo2} is proved in the appendix for the resulting formula $\varphi''$ and the signature $\tau'$ containing all relation symbols in $\varphi''$ except $A$.

%
%
%

\section{Conclusion}

We have shown tight complexity bounds for interpolant and explicit
definition existence in GF and \coNTwoExpTime/\TwoExpTime upper and,
respectively, lower bounds for FO$^2$. Many questions remain to be
explored. First we conjecture that these problems are
actually \coNTwoExpTime-complete in FO$^{2}$. Then it would be of interest to
determine the size of interpolants/explicit definitions in GF and
FO$^{2}$ if they exist. Note that recently the size and computation of
interpolants in GNF has been studied in
depth~\cite{DBLP:journals/tocl/BenediktCB16}. In contrast to GF, GNF
enjoys CIP and PBDP and it is not difficult to show using the
complexity lower bound proof given above that in GF minimal
interpolants/explicit definitions are, in the worst case, at least by
one exponential larger than in GNF. 

There are many logics without the CIP and PBDP for which the complexity of interpolant and explicit definition existence remain to be explored, examples include the extension of FO$^{2}$ with counting, FO$^{2}$ without equality, the extension of GF with constants, and the Horn fragment of GF introduced in~\cite{DBLP:conf/lics/JungPWZ19}.

\section*{Acknowledgment}
Frank Wolter was supported by EPSRC grant EP/S032207/1.



\cleardoublepage

\cleardoublepage
\appendix
	
\section*{Proofs for Section~\ref{sec:one}}

\medskip
\noindent
\lemcharinterpolation*

\noindent
\begin{proof} 
	$(\Leftarrow)$ Assume there is an \Lmc-interpolant
	$\theta(\xbf)$ and let $\Amf,\Bmf$ be structures and $\abf,\bbf$
	be tuples such that $\Amf\models\varphi(\abf)$ and
	$\Bmf\models\neg\psi(\bbf)$.  Suppose further that $\Amf,\abf\sim_{\Lmc,\tau}\Bmf,\bbf$. Since $\varphi(\xbf)\models \theta(\xbf)$, we have
	$\Amf\models\theta(\abf)$. By Lemma~\ref{lem:guardedbisim}, we
	obtain $\Bmf\models \theta(\bbf)$. Finally, as
	$\theta(\xbf)\models \psi(\xbf)$, we obtain
	$\Bmf\models\psi(\bbf)$, a contradiction.

	$(\Rightarrow)$ Suppose that for all structures $\Amf,\Bmf$ and
	tuples $\abf,\bbf$ such that $\Amf\models\varphi(\abf)$ and
	$\Bmf\models\neg\psi(\bbf)$ we have $\Amf,\abf \not\sim_{\Lmc,\tau}\Bmf,\bbf$. Let $\Phi$ be defined by
	taking
	\[\Phi= \{\varphi'(\xbf)\in \Lmc(\tau)\mid
	\varphi(\xbf)\models\varphi'(\xbf)\}.\]
	Clearly, $\varphi(\xbf)\models\Phi$.  We claim that also
	$\Phi\models\psi(\xbf)$. To see this, let $\Bmf,\bbf$ such that
	$\Bmf\models\Phi(\bbf)$. Let $\Bmf'$ be an
	$\omega$-saturated elementary extension of $\Bmf$ and let $\Amf,\abf$ be an 
	$\omega$-saturated pointed structure realizing $\{
	\chi(\xbf)\in \Lmc(\tau)\mid \Bmf\models \chi(\bbf)\}\cup \{\varphi\}$ in $\abf$
	($\Amf,\abf$ exists by compactness and the definition of $\Phi$). By definition of $\Phi$ and
	Lemma~\ref{lem:guardedbisim}, we have
	$\Amf,\abf\sim_{\Lmc,\tau}\Bmf',\bbf$. By the initial assumption, we
	cannot have $\Bmf'\models\neg\psi(\bbf)$ and thus
	$\Bmf\models\psi(\bbf)$. By compactness, there is a finite subset
	$\Phi'$ of $\Phi$ such that $\Phi'\models\psi(\xbf)$. The
	conjunction of the formulas in $\Phi'$ is the required interpolant.
\end{proof}	

\lemdeftoint*

\noindent \begin{proof} 
  Assume $\vp$, $\theta(\xbf)$, and $\tau$ are given. Then
  $\theta(\xbf)$ is explicitly definable under $\vp$ iff there exists
  an $\Lmc$-interpolant for $\vp \wedge \theta(\xbf),\vp' \rightarrow
  \theta'(\xbf)$, where $\vp'$ and $\theta'$ are obtained from
  $\varphi$ and $\theta$, respectively, by renaming all non-$\tau$
  symbols $R$ to fresh $R'$ of the same arity. 	
\end{proof}

\section*{Proofs for Section~\ref{sec:gf}}

\medskip
\noindent
{\bf Claim 1.}
{\it For all $i,j$ with $1\leq i, j\leq m$,
we have: \begin{enumerate}
	\item For every $\sigma\in T_{i}$ with
	$\text{tail}(\sigma)=(t(\ybf),\Phi)$, we have $\Amf_{i}\models
	t(v_{\sigma}(\ybf))$;
	
	\item Let $H_{i,j}$ be the set of all mappings
	$p_{\sigma,\sigma',\zbf}$, where 
	\begin{itemize}
		
		\item $\sigma\in T_i,\sigma'\in T_j$, 
		$\text{tail}(\sigma)=(t(\ybf),\Phi)$, and
		$\text{tail}(\sigma')=(t'(\ybf'),\Phi)$; 
		
		\item $\zbf$ is a tuple with $[\zbf] \subseteq [\ybf]\cap
		[\ybf']$ and $v_{\sigma}(\zbf)$ is $\tau$-guarded in $\Amf_i$
		$($or, equivalently, $v_{\sigma'}(\zbf)$ is $\tau$-guarded in $\Amf_j$$)$;
		
		\item $p_{\sigma,\sigma',\zbf}: v_{\sigma}(\zbf) \mapsto
		v_{\sigma'}(\zbf)$.  
		
	\end{itemize} 
	Then $H_{i,j}$ is a GF($\tau$)-bisimulation between $\Amf_{i}$
	and $\Amf_{j}$. 
\end{enumerate}
}

\medskip
\noindent
\begin{proof}
For Point~1, we prove by induction that, for all $\sigma\in T_{i}$ with
$\text{tail}(\sigma)=(t(\ybf),\Phi)$ and all formulas
$\varphi(\zbf)$ with $[\zbf]\subseteq [\ybf]$, we have: 
\[\varphi(\zbf)\in t(\ybf)\quad \text{iff}\quad \Amf_{i}\models
\varphi(v_\sigma(\zbf))\]
The induction base is given by the definition of
$\text{bag}_{i}(\sigma)$. If $\varphi$ is of
the shape $\neg\varphi'$, $\varphi'\wedge\varphi''$, or
$\varphi'\vee\varphi''$, the statement is immediate from the
hypothesis. Consider now $\varphi(\zbf)=\exists \xbf
(R(\zbf,\xbf)\wedge \lambda(\zbf,\xbf))$. 

\smallskip $(\Rightarrow)$ Since \Mmc is existentially saturated,
there is a $\Phi'\in \Mmc$ such that $\Phi,\Phi'$ are compatible and
$R(\zbf,\xbf')\wedge\lambda(\zbf,\xbf')\in t'(\ybf')$ for some
$t'(\ybf')\in \Phi'$ such that $t(\ybf)$ and $t'(\ybf')$ coincide on $[\ybf]\cap [\ybf']$. By definition of $T_{i}$ and compatibility of
$\Phi,\Phi'$, we have $\sigma'=\sigma\cdot(t'(\ybf'),\Phi')\in T_{i}$. Moreover,
by induction, we obtain that $\Amf_{i}$ satisfies
$R(\zbf,\xbf')\wedge\lambda(\zbf,\xbf')$ under $v_{\sigma'}$. By
definition of $\text{bag}_{i}(\sigma)$ and $\text{bag}_{i}(\sigma')$, we get
$\Amf_{i}\models\varphi(v_\sigma(\zbf))$.

\smallskip $(\Leftarrow)$ Conversely, assume
$\Amf_{i}\models\varphi(v_\sigma(\zbf))$. By construction, there is some
$\sigma'\in T_{i}$ such that $v_{\sigma}(\zbf)=v_{\sigma'}(\zbf)$ and
$\Amf_{i}$ satisfies $R(\zbf,\xbf')\wedge \lambda(\zbf,\xbf')$ under
$v_{\sigma'}$, for some $\xbf'$. By induction hypothesis,
$R(\zbf,\xbf')\wedge \lambda(\zbf,\xbf')\in t'(\ybf')$, where
$\text{tail}(\sigma')=(t'(\ybf'),\Phi')$.
Thus, $\exists
\xbf (R(\zbf,\xbf)\wedge \lambda(\zbf,\xbf))= \varphi(\zbf)\in t'(\ybf')$. As
$v_{\sigma}(\zbf)=v_{\sigma'}(\zbf)$, the construction of $T_{i}$
implies that $t'(\ybf')$ and $t(\ybf)$ coincide on all subformulas
over $\zbf$, hence $\varphi(\zbf)\in t(\ybf)$.

\medskip

For Point~2, observe first that the $p_{\sigma,\sigma',\zbf}$ are
partial $\tau$-isomorphisms between $\tau$-guarded tuples 
since all $\Phi\in \mathcal{M}$ are $\tau$-uniform. (In addition, the observation that $v_{\sigma}(\zbf)$ is $\tau$-guarded in $\Amf_{i}$ iff $v_{\sigma'}(\zbf)$ is $\tau$-guarded in $\Amf_{j}$ follows from the condition that $\Phi$ is $\tau$-uniform.) 
By symmetry,
it suffices to prove Condition~(i) for GF($\tau$)-bisimulations. 

Let $p\in H_{i,j}$. Then we have $\sigma\in T_i,\sigma'\in T_j$ with 
$\text{tail}(\sigma)=(t(\ybf),\Phi)$ and
$\text{tail}(\sigma')=(t'(\ybf'),\Phi)$ and we have a tuple $\zbf$ such that 
$[\zbf] \subseteq [\ybf]\cap [\ybf']$ and $v_{\sigma}(\zbf)$
is $\tau$-guarded in $\Amf_i$ and $p=p_{\sigma,\sigma',\zbf}$.
Consider any tuple $\bbf$ with $\Amf_i\models R(\bbf)$ for some $R\in \tau$.
We have to show that there exists a mapping $p_{\rho,\rho',\zbf'}\in
H_{i,j}$ with domain $[\bbf]$ which coincides with $p_{\sigma,\sigma',\zbf}$ on $[v_{\sigma}(\zbf)] \cap [\bbf]$. We distinguish on whether or not
that intersection is empty. 

\smallskip
\emph{Case 1.} $[v_{\sigma}(\zbf)] \cap [\bbf]=\emptyset$. The existence of such a mapping
follows from GF($\tau$)-bisimulation saturatedness: to see this, observe
that, as we have a tree decomposition, there exists $\rho_{0}\in
T_{i}$ such that $[\bbf] \subseteq \text{dom}(\text{bag}(\rho_{0}))$.
Let $\text{tail}(\rho_{0})=(s(\xbf_{0}),\Omega)$. Then there exists a
tuple $\ybf_{0}$ with $[\ybf_{0}] \subseteq [\xbf_{0}]$ such that
$v_{\rho_0}(\ybf_{0})= \bbf$. We have $R(\ybf_{0})\in s(\xbf_{0})$. As
$\hat{t}_{j}\in \Omega$, by GF($\tau$)-bisimulation saturatedness of
$\Omega$, there exists $s'(\ybf_{0}')\in \Omega$ such that
$\hat{t}_{j}\subseteq s'(\ybf_{0}')$ and $[\ybf_{0}']=[\ybf_{0}]$. But then
$R(\ybf_{0})\in s'(\ybf_{0}')$. Also $\rho= (\hat{t}_{j},\widehat{\Psi})\cdot
(s'(\ybf_{0}'),\Omega) \in T_{j}$.  Thus $p_{\rho_{0},\rho,\ybf_{0}'}$ is as
required.

\smallskip\textit{Case 2.} $[v_{\sigma}(\zbf)] \cap [\bbf]\not=\emptyset$. As we have a tree
decomposition, there exists $\rho_{0}\in T_{i}$ such that $[\bbf]
\subseteq \text{dom}(\text{bag}(\rho_{0}))$. Let
$\text{tail}(\rho_{0})=(s(\xbf_{0}),\Omega)$. Then there exists a tuple
$\zbf'$ with $[\zbf'] \subseteq [\xbf_{0}]$ such that
$v_{\rho_{0}}(\zbf')= \bbf$.  
We distinguish the following cases:
\begin{enumerate}[label=(\alph*)]
	
	\item $\rho_{0}=\sigma$;
	
	\item $\rho_0\neq \sigma$.
	%

\end{enumerate}
Assume first that (a) holds. Then $(s(\xbf_{0}),\Omega) = (t(\ybf),\Phi)$ and $\bbf = v_{\sigma}(\zbf')$. We use
GF($\tau$)-bisimulation saturatedness of $\Phi$. Consider the restriction
$\zbf''$ of $\zbf'$ to $[\zbf]\cap [\zbf']$ and the restriction
$t'(\ybf')_{|[\zbf'']}$ of $t'(\ybf')$ to $[\zbf'']$. Then there
exists $s'(\zbf_{0}')\in \Phi$ such that $t'(\ybf')_{|[\zbf'']} \subseteq
s'(\zbf_{0}')\in \Phi$ and $[\zbf_{0}']=[\zbf']$. Let $\sigma''= \sigma'\cdot (s'(\zbf_{0}'),\Phi)\in
T_j$.  Then $p_{\sigma,\sigma'',\zbf_{0}'}$ is as required, as $\Phi$ is 
$\tau$-uniform.

\medskip

Assume now that Point~(b) holds.
Consider the restriction $\zbf''$ of $\zbf'$ to $[\zbf]\cap [\zbf']$
and the restriction $t'(\ybf')_{|[\zbf'']}$ of $t'(\ybf')$ to
$[\zbf'']$. Consider the restriction $\Phi_{|[\zbf'']}$ of $\Phi$ to
$[\zbf'']$. By closure under restrictions, $\Phi_{|[\zbf'']}\in
\mathcal{M}$. Observe that $\Phi,\Phi_{|[\zbf'']}$ and
$\Phi_{|[\zbf'']}, \Omega$ are compatible:
indeed, in the tree
decomposition all bags on the path from $\sigma$ to $\rho_0$ 
have a
tail $(\cdot,\Omega')$ satisfying $\Phi_{|[\zbf'']}\subseteq \Omega'$. Thus
$t'(\ybf')_{|[\zbf'']}\in \Omega$.
Using the fact that $\Omega$ is GF($\tau$)-bisimulation saturated, 
one can now show that there exists $s'(\zbf_{0}')\in \Omega$ such that $t'(\ybf')_{|[\zbf'']} \subseteq  s'(\zbf_{0}')$ and
$[\zbf_{0}']=[\zbf']$. We then have 
\[
\rho =  \sigma' \cdot (t'(\ybf')_{|[\zbf'']},\Phi_{|[\zbf'']}) \cdot
(s'(\zbf_{0}'),\Omega)\in T_j
\]
and $p_{\rho_{0},\rho,\zbf_{0}'}$ is as required.
\end{proof}

\medskip
\noindent
{\bf Lemma~\ref{lem:readoff}}
{\it Let $\Amf_{1},\abf_{1}$ and $\Amf_{2},\abf_{2}$ be pointed
	structures with $\abf_{1}$ and $\abf_{2}$ tuples with pairwise
	distinct elements of length $m\leq \text{fv}(\Xi)$ and let $\tau$ be
	a signature. Consider assignments $\xbf_{0} \mapsto \abf_{i}$ with
	$[\xbf_{0}]\subseteq \{x_{0},\ldots,x_{2n}\}$. If $\Amf_{1},\abf_{1}
	\sim_{\text{GF},\tau} \Amf_{2},\abf_{2}$, then there exists an existentially saturated set $\mathcal{M}$ of $\tau$-mosaics and some $\Psi\in \Mmc$ such that
	\begin{itemize} 
		
		\item all $\Phi\in \mathcal{M}$ with $\Phi\neq \Psi$ use at most
		$\text{width}(\Xi)$ many free variables; 
		
		
		\item there exist types $t_1(\xbf_0),t_2(\xbf_0)\in \Psi$ such that
		$t_i(\xbf_0)=\text{tp}(\Amf_{i},\xbf_{0} \mapsto \abf_{i})$ for
		$i=1,2$ and
		all types $t(\ybf)\in \Psi\setminus\{t_1(\xbf_0),t_2(\xbf_0)\}$
		use at most $\text{width}(\Xi)$ free variables among
		$[\xbf_0]$.
		
	\end{itemize}
}

\medskip
\noindent
\begin{proof} Assume w.l.o.g. that $\Amf_{1}$ and
	$\Amf_{2}$ are disjoint. For any tuples $\bbf_{1}$ in $\Amf_{i}$ and
	$\bbf_{2}$ in $\Amf_{j}$ with $i,j\in \{1,2\}$, we use
	$\text{tp}(\xbf_{1}\mapsto \bbf_{1})$ to denote
	$\text{tp}(\Amf_{i},\xbf_{1} \mapsto \bbf_{1})$ and we write
	$\bbf_{1}\sim_{\text{GF},\tau} \bbf_{2}$ if
	$\Amf_{i},\bbf_{1}\sim_{\text{GF},\tau} \Amf_{j},\bbf_{2}$.  Define
	$\mathcal{M}$ as follows. Take any tuple $\abf$ of distinct elements
	in $\Amf_{i}$, $i\in \{1,2\}$. Take a tuple $\xbf$ from
	$\{x_{1},\ldots,x_{2n}\}$ such that $v: \xbf \mapsto \abf$ is a
	bijection. Then let $\Phi_{\abf,\xbf}$ contain all types
	$\text{tp}(v':\xbf_{|Y} \mapsto \bbf)$ with $Y\subseteq [\xbf]$
	and $\bbf$ in either $\Amf_{1}$ or $\Amf_{2}$ such that
	$v(\xbf_{|Y}) \sim_{\text{GF},\tau} v'(\xbf_{|Y})$.
	
	Let $\mathcal{M}$ contain all such $\Phi_{\abf,\xbf}$ with $\abf$ of length at most $\text{width}(\Xi)$ and $\xbf$ from
	$\{x_{1},\ldots,x_{2n}\}$. Moveover, if $m>\text{width}(\Xi)$, then
	add $\widehat{\Phi}_{\abf_{1},\xbf_{0}}$ to $\mathcal{M}$,
	where $\widehat{\Phi}_{\abf_{1},\xbf_{0}}$ is obtained from 
	$\Phi_{\abf_{1},\xbf_{0}}$ by removing all $t$ distinct from 
	$t_{1}(\xbf_{0})$ and $t_{2}(\xbf_{0})$ using more than $\text{width}(\Xi)$ many free variables. 
	
	We show that $\mathcal{M}$ is as required. By definition,
	$\text{tp}(\Amf_{1},\xbf_{0} \mapsto \abf_{1})$, $\text{tp}(\Amf_{2},\xbf_{0}\mapsto \abf_{2})\in \Phi_{\abf_{1},\xbf_{0}}
	\in \mathcal{M}$. 
	
	For the next steps we first assume that instead of  $\widehat{\Phi}_{\abf_{1},\xbf_{0}}$ we have $\Phi_{\abf_{1},\xbf_{0}}$ in $\mathcal{M}$. Then observe that if we have any $\Phi\in \mathcal{M}$ and $t(\xbf'),s(\xbf'')\in \Phi$, then we can assume that $\Phi=\Phi_{\abf,\xbf}$, we have a bijection $v$ from $\abf$ to $\xbf$, $\xbf'= \xbf_{|Y'}$ and $\xbf''= \xbf_{|Y''}$ for appropriate
	sets of variables $Y',Y''\subseteq [\xbf]$, and there are $v':\xbf_{|Y'} \mapsto \Amf_{i}$ and $v'':\xbf_{Y''}\mapsto \Amf_{j}$ such that 
	$v'(\xbf_{|Y'}) \sim_{\text{GF},\tau} v(\xbf_{|Y'})$
	and $v''(\xbf_{|Y''}) \sim_{\text{GF},\tau} v(\xbf_{|Y''})$.
	Then $v'(\xbf_{|Y'\cap Y''}) \sim_{\text{GF},\tau} v''(\xbf_{|Y'\cap Y''})$. We show that each $\Phi_{\abf,\xbf}$ is $\tau$-uniform and GF($\tau$)-bisimulation saturated.     
	\begin{enumerate}
		\item Every $\Phi_{\abf,\xbf}\in \mathcal{M}$ is $\tau$-uniform: let $t(\xbf'),s(\xbf'')\in \Phi_{\abf,\xbf}$ be as above and assume that $Q(\zbf)$ is a $\tau$-guard with
		$[\zbf]\subseteq [\xbf']\cap [\xbf'']$. Then $[\zbf]\subseteq Y'\cap Y''$ and so $Q(\vec z)\in t(\xbf')$ iff $Q(\zbf)\in s(\xbf'')$ since
		$v'(\xbf_{|Y'\cap Y''}) \sim_{\text{GF},\tau} v''(\xbf_{|Y'\cap Y''})$, as required.  
		\item To show GF($\tau$)-bisimulation saturatedness let $\Phi_{\abf,\xbf}\in \mathcal{M}$ and $t(\xbf'),s(\xbf'')\in \Phi_{\abf,\xbf}$ be as above
		and let $R(\ybf)\in t(\xbf')$
		with $[\xbf'']\subseteq [\ybf]$ be a strict $\tau$-guard. We have $Y''\subseteq [\ybf] \subseteq Y'$ and $v'(\xbf_{|Y''}) \sim_{\text{GF},\tau} v''(\xbf_{|Y''})$. Let $H$ be the  GF($\tau$)-bisimulation witnessing this. By the definition of  GF($\tau$)-bisimulations, there exists
		$p\in H$ with domain $v'(\xbf_{|[\ybf]})$ such that $p\circ v'_{|Y''} = v''$. Now we expand $v''$ to the domain $[\ybf]$ by setting $\hat{v}:= p\circ v'_{|{\xbf_{|[\ybf]}}}$.
		Let $\bbf'$ be the image of $\xbf_{|[\ybf]}$ under $\hat{v}$.
		Then the type $\text{tp}(\hat{v}:\xbf_{|[\ybf]}\mapsto \bbf')$ is as required.  		
	\end{enumerate} 
	Finally we show that every $\Phi\in \mathcal{M}$ is existentially saturated
	in $\mathcal{M}$. 
	Assume $\Phi_{\abf,\xbf}$ is given. 
	Assume $\exists \ybf (R(\xbf',\ybf)\wedge
	\lambda(\xbf',\ybf)) \in 
	t(\xbf_{|Y}) = \text{tp}(v':\xbf_{|Y} \mapsto \bbf)$
	with $Y\subseteq [\xbf]$ and $\bbf$ w.l.o.g. in $\Amf_{1}$.
	Then $\Amf_{1}\models_{v'} \exists \ybf (R(\xbf',\ybf)\wedge \lambda(\xbf',\ybf))$.
	Then we find an assignment $v''$ for the variables in $[\xbf'\ybf]$ 
	which coincides with $v'$ on $[\xbf']$ such that  $\Amf_{1}\models_{v''} R(\xbf',\ybf)\wedge \lambda(\xbf',\ybf)$. 
	Take a tuple $\cbf$ of distinct elements with $[\cbf]=[v''(\xbf'\ybf)]$
	and a tuple $\ybf'$ of variables in $\{x_{1},\ldots,x_{2n}\}$
	such that $[\xbf'] = [\xbf] \cap [\ybf']$ and we have a bijection $\rho:\ybf'\mapsto \cbf$ 
	which coincides with $v'$ on $[\xbf']$. Then $\rho(\ybf'_{|[\xbf']}) \sim_{\text{GF},\tau} v(\xbf_{|[\xbf']})$ and so $\Phi_{\abf,\xbf}$ and $\Phi_{\cbf,\ybf'}$
	are compatible and $\Phi_{\cbf,\ybf'}$ is as required.
	
	For the proof with $\widehat{\Phi}_{\abf_{1},\xbf_{0}}$ instead of $\Phi_{\abf_{1},\xbf_{0}}$ in $\mathcal{M}$ observe that $\widehat{\Phi}_{\abf_{1},\xbf_{0}}$ is $\tau$-uniform and GF($\tau$)-bisimulation saturated as $\widehat{\Phi}_{\abf_{1},\xbf_{0}}$ behaves in exactly the same way as $\Phi_{\abf_{1},\xbf_{0}}$ regarding $\tau$-guarded $Q(\ybf)$. For the same reason all elements of $\mathcal{M}$ are still existentially saturated in $\mathcal{M}$. 
\end{proof}

\subsection*{\TwoExpTime Lower Bound}

\lemcorrectgf*

\noindent
\begin{proof}
  $(\Rightarrow)$ If $M$ accepts $w$, there is a computation tree of $M$ on $w$. We construct a
single model $\Amf$ of $\varphi$ as follows. Let $\Amf^*$ be the infinite tree-shaped structure
that represents the computation tree of $M$ on $w$ as described above,
that is, configurations are represented by sequences of $2^n$ elements
linked by $S$. Moreover, all elements of a configuration are labeled
with $B_\forall$, $B_\exists^1$, or $B_\exists^2$ depending on whether
the configuration is universal or existential, and in the latter case
the superscript indicates which choice has been made for the
existential state. Finally, the first element of the first successor
configuration of a universal configuration is labeled with $Z$.
In particular, 
$\Amf^*$ only interprets the symbols in $\tau$ non-empty. Now, we
obtain structures $\Amf_k$, $k<2^n$ from $\Amf^*$ by interpreting
non-$\tau$-symbols as follows: 
\begin{itemize}
	
  \item the entire domain of $\Amf_{k}$ satisfies $I$;


  \item the $U$-counter starts at $0$ at the root and counts modulo
    $2^n$ along each $S$-path;

  \item the $V$-counter starts at $k$ at the root and
    counts modulo $2^n$ along each $S$-path;

  \item the auxiliary concept names of the shape $A_\sigma^i$ and
    $A_\sigma'$ are interpreted in a minimal way so as to satisfy the
    sentences listed above. Note that the sentences are Horn, thus
    there is no choice.

\end{itemize}
Now obtain $\Amf$ from $\Amf^*$ and the $\Amf_k$ as follows.  First,
create a both side infinite $R$-path \[\ldots
b_{-2}Rb_{-1}Rb_0Rb_1Rb_2\ldots \] and realize the corresponding
$A$-counter along the path and label every $b_{3k}$,
$k\in\mathbb{Z}$, with $X$. Then, add all
$\Amf_k^*$ to every node $b_{3k}$, $k\in \mathbb{Z}$, on the path by
identifying the roots of the $\Amf_k$ with the respective node on the
path.  Moreover, add to $\Amf$ three elements $a_0,a_1,a_2$ such that
$(a_0,a_1,a_2)\in G^\Amf$, $(a_0,a_1),(a_1,a_2),(a_2,a_0)\in R^\Amf$,
$a_0\in X^\Amf$, and $a_0\in A^\Amf$. Finally, add a copy of $\Amf^*$ to \Amf by identifying the
root of $\Amf^*$ with $a_0$. We claim that $\Amf$ is as required. In
particular, $\Amf,a_0$ is a model of $\varphi\wedge A(x)$,
$\Amf,b_0$ is a model of $\varphi\wedge \neg A(x)$, and the set $S$ of all
mappings
\begin{itemize}

  \item $(a_i,a_{i+1})\mapsto (b_{i+3k},b_{i+3k+1})$ with
    $k\in\mathbb{Z}$, $i\in\{0,1,2\}$, and $a_3:=a_0$,

%
%
%

  \item $(e,f)\mapsto (e',f')$ with $(e,f)\in S^\Bmf$ and $e',f'$
    copies of $e,f$ in some $\Amf_k$, and

  \item all restrictions of the above,

\end{itemize}
is a GF($\tau$)-bisimulation on $\Amf$ with $a_0\mapsto b_0\in S$.

$(\Leftarrow)$ Let $\Amf,\Bmf$ be a models of $\varphi$ such that
$\Amf,a\sim_{\text{GF},\tau}\Bmf,b$ for some elements $a,b$ with
$a\in A^\Amf$, $b\notin A^{\Bmf}$. As it was argued above, due to the
three-element $R$-loop enforced at $a$ via $\varphi_0$,
from $b$ there has to be an
outgoing infinite $R$-path on which all $S$-trees are guarded
$\tau$-bisimilar. (There is also an incoming infinite $R$-path with
this property, but it is not relevant for the proof.) All those
$S$-trees are additionally labeled with some auxiliary relation
symbols not in $\tau$, depending on the distance from $b$. However, it
can be shown using the arguments that accompanied the construction of
$\varphi$ that all $S$-trees contain a computation tree of $M$ on
input $w$. Hence, $M$ accepts $w$.
\end{proof}

\subsection*{\ThreeExpTime Lower Bound}

We show how to axiomatize the predicate $E$ as announced in the main
part, that is, for pairs $a,a'$ and
$b,b'$, where $b,b'$ represents a successor node of $a,a'$,
and for all
$\abf,\abf'\in\{a,a'\}^n$ and $\bbf,\bbf'\in \{b,b'\}^n$, we have
\begin{equation}
  E(\abf\abf'aa'\bbf\bbf'bb')\text{\quad iff\quad }
  r(\abf)=r(\bbf)\text{ and }r(\abf')=r(\bbf').\label{eq:axiomE}
\end{equation}
We abbreviate the tuples $\xbf\xbf'$ and $\ybf\ybf'$ with $\ubf$ and
$\vbf$, respectively; thus $\ubf = u_0\ldots u_{2n-1}$ and $\vbf =
v_0\ldots v_{2n-1}$ are tuples of length $2n$.  Moreover, let $\Sigma$
be the set of all substitutions $[u_i/x,v_i/y]$ and $[u_i/x',v_i/y']$,
for all $i<2n$. Now, add the following sentences: 
\begin{align*}
  & \forall xx'yy'\big( R(xx'yy')\rightarrow
  E(x^{2n}xx'y^{2n}yy')\big) \\
  & \forall \ubf xx'\vbf yy' \big( E(\ubf xx'\vbf
  yy')\rightarrow\bigwedge_{\sigma\in\Sigma}
  E(\sigma(\ubf)xx'\sigma(\vbf)yy')\big) \\
  & \forall \ubf xx'\vbf yy' \big( E(\ubf xx'\vbf
  yy')\rightarrow{} \\
  &\hspace{1cm} \bigwedge_{i<2n} (u_i=x\wedge
  v_i=y)\vee(u_i=x'\wedge v_i=y') \big)
\end{align*}
These sentences axiomatize $E$ as required, since the last sentence
enforces ``only if'' of Property~\eqref{eq:axiomE} while the first and
second sentence together enforce ``if''. 


We finish noting that $\text{Min}_U(xx')$ can be expressed by the
formula
\[\text{Min}_U(xx') = \forall \xbf \big( D_U(\xbf) \rightarrow
\bigvee_{i<n} (x_i\neq x\wedge x_i\neq x')\big).\]

\section*{Proofs for Section~\ref{sec:fo2}}

\subsection*{Proofs for the Upper Bound}

\begin{lemma}
  Joint FO$^{2}(\tau)$-consistency can be reduced in polynomial time
  to joint FO$^{2}(\tau)$-consistency for formulas using relation
  symbols of arity at most two.  
\end{lemma}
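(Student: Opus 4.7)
The key observation is that in an FO$^2$-formula every atom has the form $R(t_1,\ldots,t_k)$ with each $t_i\in\{x,y\}$, so it is completely determined by $R$ together with the \emph{pattern} $p=(t_1,\ldots,t_k)\in\{x,y\}^k$. There are at most $|\varphi|+|\psi|$ distinct patterns, hence polynomially many. My plan is: for each pair $(R,p)$ appearing in $\varphi$ or $\psi$, introduce a fresh relation symbol $R_p$ of arity $2$ if $p$ uses both variables and arity $1$ if $p$ is constant, intended meaning $R_p(a,b) \Leftrightarrow R(p[x/a,y/b])$. Translate $\varphi,\psi$ into $\varphi',\psi'$ by replacing each atom $R(p)$ by $R_p(x,y)$ (resp.\ $R_p(x)$ in the constant case), and conjoin polynomially many axioms of the form $\forall x\,(R_p(x,x) \leftrightarrow R_q(x,x))$ for every two patterns $p,q$ of the same $R$, to enforce consistency on the diagonal when the two variables are assigned equal values. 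Define $\tau'$ to consist of the unary and binary symbols already in $\tau$ together with all $R_p$ for $R\in\tau$ of arity greater than~$2$ and $p$ appearing with $R$. Clearly this transformation is computable in polynomial time.

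For correctness I would prove: $\varphi,\psi$ are jointly FO$^2(\tau)$-consistent iff $\varphi',\psi'$ are jointly FO$^2(\tau')$-consistent. Given witnesses $\Amf_1,\Amf_2,\abf_1,\abf_2$ in the original signature, a preprocessing step modifies each $\Amf_i$ so that every $R\in\tau$ is uniformly false on all tuples realising a pattern not appearing in $\varphi\cup\psi$; this preserves both $\Amf_i\models\chi_i(\abf_i)$ and the bisimulation $S$, because $\varphi,\psi$ do not mention such tuples and both structures agree trivially on them. Then set $R_p^{\Amf_i'}=\{(a,b)\mid\Amf_i\models R(p[a/x,b/y])\}$ and retain the unary/binary $\tau$-symbols. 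A straightforward induction on formula structure gives $\Amf_i'\models\varphi'(\abf_i)$, and the same relation $S$ remains an FO$^2(\tau')$-bisimulation since partial $\tau'$-isomorphisms between pairs coincide with partial $\tau$-isomorphisms restricted to the captured patterns. Conversely, from witnesses $\Amf_1',\Amf_2'$ I would reconstruct $\Amf_i$ by putting into $R^{\Amf_i}$ exactly those tuples $p[a/x,b/y]$ with $\Amf_i'\models R_p(a,b)$, and declaring all tuples with three or more distinct entries to be outside $R^{\Amf_i}$; such tuples are invisible to both FO$^2$-formulas and pair-based bisimulations.

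The main obstacle is checking that the bisimulation $S$ indeed transfers across the signature change in the converse direction: preservation of all $R_p\in\tau'$ along $(a,a')\mapsto(b,b')$ must force preservation of every $R$-atom at tuples in $\{a,a'\}^k$ mapped to $\{b,b'\}^k$, for every $R\in\tau$. Patterns in the chosen set are handled directly by the corresponding $R_p$; patterns outside are uniformly false on both sides by construction; and the degenerate case $a=a'$ is exactly where the consistency axioms intervene, ensuring that the single value $R^{\Amf_i}(a,\ldots,a)$ is unambiguously determined irrespective of which pattern one reads off. Once these three cases are dispatched, the reduction is correct and polynomial, proving the lemma.
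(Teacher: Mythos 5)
Your approach is essentially the paper's: fresh arity-$\leq 2$ symbols indexed by the pattern in $\{x,y\}^k$ with which $R$ occurs, plus coherence axioms. However, there is a genuine gap in the set of coherence axioms you impose: you only add the diagonal axioms $\forall x\,(R_p(x,x)\leftrightarrow R_q(x,x))$ and explicitly flag $a=a'$ as the degenerate case to be handled, but a second, non-degenerate source of ambiguity arises when a pattern $p$ and its \emph{swap} $\tilde p$ (obtained by exchanging $x$ and $y$) both occur in $\varphi$ or $\psi$. For $a\neq b$, the tuple $\tbf = p[x/a,y/b]$ is simultaneously equal to $\tilde p[x/b,y/a]$, so in the reconstruction step the membership of $\tbf$ in $R^{\Amf_i}$ is dictated both by $R_p(a,b)$ and by $R_{\tilde p}(b,a)$. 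Without the swap-coherence axiom $\forall x\forall y\,(R_p(x,y)\leftrightarrow R_{\tilde p}(y,x))$, a model of $\varphi'$ can make these two disagree, and then no reconstruction of a single ternary (or higher-arity) $R$ is compatible with both translated atoms. Concretely, if $\varphi$ contains $\neg R(y,x,y)$ and $\varphi'$ is satisfied with $R_p(a,b)$ true but $R_{\tilde p}(b,a)$ false, your union-style reconstruction places $(a,b,a)$ in $R^{\Amf_i}$, so $\Amf_i$ violates $\neg R(y,x,y)$ at the assignment $x\mapsto b,\,y\mapsto a$ even though $\Amf_i'$ satisfied $\neg R_{\tilde p}(x,y)$ there. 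So the converse direction of your correctness claim fails. The paper's proof adds exactly these swap axioms, $\forall x\forall y\,(R^{v_1,\ldots,v_n}(x,y)\leftrightarrow R^{v_1',\ldots,v_n'}(y,x))$ whenever the second atom is obtained from the first by transposing $x$ and $y$, together with your diagonal axioms, and this closes the gap; once you add them, the rest of your argument goes through.
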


\begin{proof}
	We show that the construction given in the finite model property proof in 
	\cite{DBLP:journals/bsl/GradelKV97} also works for joint FO$^{2}(\tau)$-consistency.
	
Consider FO$^{2}(\tau)$-formulas $\varphi$ and $\psi$. We may assume that $\text{sig}(\varphi)\cap \text{sig}(\psi)=\tau$. For any relation symbol $R$ of arity at least three that occurs in $\varphi$ or $\psi$ we do the following: for any atomic formula $R(v_{1},\ldots,v_{n})$ that occurs in $\varphi$ or $\psi$ introduce a fresh relation symbol $R^{v_{1},\ldots,v_{n}}$ of arity two if both $x$ and $y$ occur in $v_{1},\ldots,v_{n}$ and of arity one otherwise. 

If both $x$ and $y$ occur in $v_{1},\ldots,v_{n}$, then replace in $\varphi$ and $\psi$ every occurrence of $R(v_{1},\ldots,v_{n})$ in $\varphi,\psi$ by $R^{v_{1},\ldots,v_{n}}(x,y)$.
If only $x$ occurs in $v_{1},\ldots,v_{n}$, then replace $R(v_{1},\ldots,v_{n})$ by $R^{v_{1},\ldots,v_{n}}(x)$ and if only $y$ occurs in $v_{1},\ldots,v_{n}$ then replace $R(v_{1},\ldots,v_{n})$ by $R^{v_{1},\ldots,v_{n}}(y)$. Let $\varphi'$ and $\psi'$ be the resulting formulas.

It remains to capture the logical relationships between different formulas $R(v_{1},\ldots,v_{n})$ and $R(v_{1}',\ldots,v_{n}')$ using implications
between the fresh atomic formulas. For example, if $R(v_{1},\ldots,v_{n})$ and $R(v_{1}',\ldots,v_{n}')$ are both subformulas of $\varphi$ or $\psi$ and  
$R(v_{1}',\ldots,v_{n}')$ is obtained from $R(v_{1},\ldots,v_{n})$ by replacing $x$ by $y$ and $y$ by $x$, then we take the implication
$$
\forall x \forall y (R^{v_{1},\ldots,v_{n}}(x,y) \leftrightarrow R^{v_{1}',\ldots,v_{n}'}(y,x))
$$
We also take for any $R$ in $\varphi$ or $\psi$ and any two distinct  $R(v_{1},\ldots,v_{n})$ and $R(v_{1}',\ldots,v_{n}')$ occurring in $\varphi$ or $\psi$ the implication:
$$
\forall x (R^{v_{1},\ldots,v_{n}}(x,x) \leftrightarrow R^{v_{1}',\ldots,v_{n}'}(x,x))
$$
Let $\chi_{R}$ be the conjunction of all these implications between the fresh atomic formulas. Now let 
$$
\varphi^{\dagger} = \varphi' \wedge \bigwedge_{R \text{ occurs in } \varphi}\chi_{R},
\quad
\psi^{\dagger} = \psi' \wedge \bigwedge_{R \text{ occurs in } \psi}\chi_{R}
$$
and let $\tau'$ contain all relation symbols of arity at most two in $\tau$ and all fresh $R^{v_{1},\ldots,v_{n}}$ for $R\in \tau$.

We show that $\varphi$ and $\psi$ are jointly FO$^{2}(\tau)$-consistent iff  $\varphi^{\dagger}$ and $\psi^{\dagger}$ are jointly FO$^{2}(\tau')$-consistent. 

Assume $\Amf,\abf \sim_{\text{FO}^{2},\tau} \Bmf,\bbf$, $\Amf\models \varphi(\abf)$, and $\Bmf\models \psi(\bbf)$. Define the structure $\Amf'$ in the same way as $\Amf$ except that for relation symbols $R$ of arity $\geq 3$:
\begin{itemize}
	\item $(a,b)\in (R^{v_{1},\ldots,v_{n}})^{\Amf'}$ if $\Amf\models_{v} R(v_{1},\ldots,v_{n})$ for $v(x)=a$ and $v(y)=b$, if $x$ and $y$ occur in  $v_{1},\ldots,v_{n}$ and $R^{v_{1},\ldots,v_{n}}$ occurs in $\varphi$ or $\psi$.
    \item $a\in (R^{v_{1},\ldots,v_{n}})^{\Amf'}$ if $\Amf\models_{v} R(v_{1},\ldots,v_{n})$ for $v(x)=a$, if only $x$ occurs in $v_{1},\ldots,v_{n}$ and $R^{v_{1},\ldots,v_{n}}$ occurs in $\varphi$ or $\psi$. 
    \item $a\in (R^{v_{1},\ldots,v_{n}})^{\Amf'}$ if $\Amf\models_{v} R(v_{1},\ldots,v_{n})$ for $v(y)=a$, if only $y$ occurs in $v_{1},\ldots,v_{n}$ and $R^{v_{1},\ldots,v_{n}}$ occurs in $\varphi$ or $\psi$. 
\end{itemize} 
$\Bmf'$ is defined in the same way using $\Bmf$. It is readily checked that
$\Amf',\abf \sim_{\text{FO}^{2},\tau'} \Bmf',\bbf$, $\Amf'\models \varphi^{\dagger}(\abf)$, and $\Bmf'\models \psi^{\dagger}(\bbf)$.

Conversely, assume $\Amf,\abf \sim_{\text{FO}^{2},\tau'} \Bmf,\bbf$, $\Amf\models \varphi^{\dagger}(\abf)$, and $\Bmf\models \psi^{\dagger}(\bbf)$. We define the structure $\Amf'$ in the same way as $\Amf$ except that for relation symbols $R$ of arity $\geq 3$:
\begin{itemize}
	\item $(v(v_{1}),\ldots,v(v_{n}))\in R^{\Amf'}$ if $R(v_{1},\ldots,v_{n})$ occurs in $\varphi$ or $\psi$ such that for the assignment $v$
	it holds that $\Amf\models_{v} R^{v_{1},\ldots,v_{n}}(x,y)$ (or, if only $x$ or only $y$ occur in $v_{1},\ldots,v_{n}$, $\Amf\models_{v} R^{v_{1},\ldots,v_{n}}(x)$ or $\Amf\models_{v} R^{v_{1},\ldots,v_{n}}(y)$ respectively).
	\item no other tuples are in $R^{\Amf'}$.
\end{itemize} 
$\Bmf'$ is defined in the same way using $\Bmf$.
Using the conjuncts $\chi_{R}$ it can be shown that
$\Amf',\abf \sim_{\text{FO}^{2},\tau} \Bmf',\bbf$ and $\Amf'\models \varphi(\abf)$ and $\Bmf'\models \psi(\bbf)$.			
\end{proof}

We first show that one can achieve Condition~(P2) for links between
semi $i$-pawns. Recall  that for any king mosaics 
$m_{1}, m_{2}$ and $i$-pawns 
$t_{1}\in m_{1}$ and $t_{2}\in m_{2}$ 
\begin{eqnarray*}
	L_{i}((t_{1},m_{1}),(t_{2},m_{2})) & = & 
	\{ l_{\Amf_{i}}(d_{1},d_{2}) \mid d_{1}\not=d_{2}, \\
	& &  \;\;\; t_{1}= t_{\Amf_{i}}(d_{1}), m_{1}=m(d_{1}),\\
	& &  \;\;\; t_{2}= t_{\Amf_{i}}(d_{2}), m_{2}=m(d_{2})\}.	
\end{eqnarray*}
We aim to define link-types $l_{\Bmf_{i}}((t_{1}',m_{1}),(t_{2}',m_{2}))$, where $t_{1}'$ and $t_{2}'$ are copies of $t_{1}$ and $t_{2}$, in such a way that the following holds:
\begin{itemize}
	\item[(P2)] If $t_{1}'$ is a copy of $t_{1}$ and $l$ a link-type, then there exists a copy $t_{2}'$ of $t_{2}$ such that $l=l_{\Bmf_{i}}((t_{1}',m_{1}),(t_{2}',m_{2}))$ iff $l\in L_{i}((t_{1},m_{1}),(t_{2},m_{2}))$.   
\end{itemize}
In the construction, we use the fact that there are $\geq 4\times 2^{2s}$ many copies of any $i$-pawn and that the number of link-types does not exceed $2^{2s}$.
Assume first that $(t_{1},m_{1})\not=(t_{2},m_{2})$. Then partition, for $j=1,2$, the set $\{(t_{j},1),\ldots,(t_{j},k_{1})\}$ of copies of $t_{j}$ into two sets  $M_{1}^{j},M_{2}^{j}$ such that $|M_{1}^{j}|,|M_{2}^{j}| \geq 2^{2s}$.
Now we define the link-types between any pair $(t_{1}',m_{1})$ and $(t_{2}',m_{2})$ as follows

\begin{itemize}
	\item for every $t_{1}'\in M_{1}^{1}$ do the following: take for any 
	link-type $l\in L_{i}((t_{1},m_{1}),(t_{2},m_{2}))$ some $(t_{2}',m_{2})$ with $t_{2}'\in M_{1}^{2}$ and set
	$$
	l_{\Bmf_{i}}((t_{1}',m_{1}),(t_{2}',m_{2})):= l
	$$
	There are sufficiently many $(t_{2}',m_{2})$ with $t_{2}'\in M_{1}^{2}$
	since $|M_{1}^{2}|\geq 2^{2s}$.	
	\item for every $t_{1}'\in M_{2}^{1}$ do the following: take for any 
	link-type $l\in L_{i}((t_{1},m_{1}),(t_{2},m_{2}))$ some $(t_{2}',m_{2})$ with $t_{2}'\in M_{2}^{2}$ and set
	$$
	l_{\Bmf_{i}}((t_{1}',m_{1}),(t_{2}',m_{2})):= l
	$$
	There are sufficiently many $(t_{2}',m_{2})$ with $t_{2}'\in M_{2}^{2}$
	since $|M_{2}^{2}|\geq 2^{2s}$.	
	\item for every $t_{2}'\in M_{1}^{2}$ do the following: take for any 
	link-type $l\in L_{i}((t_{1},m_{1}),(t_{2},m_{2}))$ some $(t_{1}',m_{1})$ with $t_{1}'\in M_{2}^{1}$ and set
	$$
	l_{\Bmf_{i}}((t_{1}',m_{1}),(t_{2}',m_{2})):= l
	$$
	There are sufficiently many $(t_{1}',m_{1})$ with $t_{1}'\in M_{2}^{1}$
	since $|M_{2}^{1}|\geq 2^{2s}$.	
	\item for every $t_{2}'\in M_{2}^{2}$ do the following: take for any
	 link-type $l\in L_{i}((t_{1},m_{1}),(t_{2},m_{2}))$ some $(t_{1}',m_{1})$ with $t_{1}'\in M_{1}^{1}$ and set
	$$
	l_{\Bmf_{i}}((t_{1}',m_{1}),(t_{2}',m_{2})):= l
	$$
	There are sufficiently many $(t_{1}',m_{1})$ with $t_{1}'\in M_{1}^{1}$
	since $|M_{1}^{1}|\geq 2^{2s}$.	
\end{itemize}   
For semi $i$-pawns $(t_{1}',m_{1}),(t_{2}',m_{2})$ that have not yet been connected by any of the four steps above, choose an arbitrary link-type $l$ from $L_{i}((t_{1},m_{1}),(t_{2},m_{2}))$ and set
$l_{\Bmf_{i}}((t_{1}',m_{1}),(t_{2}',m_{2})):= l$. It is readily checked that
(P2) is satisfied. 

Now assume that $(t_{1},m_{1})=(t_{2},m_{2})$. Then partition the set $\{(t_{1},1),\ldots,(t_{1},k_{1})\}$ of copies of $t_{1}$ into four sets  $M_{1}^{j},M_{2}^{j}$ such that $|M_{1}^{j}|,|M_{2}^{j}| \geq 2^{2s}$, $j=1,2$,
and define $l_{\Bmf_{i}}((t_{1},k),m_{1}),((t_{1},k'),m_{1}))$ in exactly
the same way as above for $(t_{1},k),(t_{1},k')\in M_{r}^{1}\times M_{r'}^{2}$,
$r,r'\in \{1,2\}$. For any $((t_{1},k),m_{1}),((t_{1},k'),m_{1}))$ with $k\not=k'$ for which $l_{\Bmf_{i}}((t_{1},k),m_{1}),((t_{1},k'),m_{1}))$
has not yet been defined choose an arbitrary link-type $l$ from $L_{i}((t_{1},m_{1}),(t_{2},m_{2}))$ and set
$l_{\Bmf_{i}}(((t_{1},k),m_{1}),((t_{1},k'),m_{2})):= l$.
Then (P2) is satisfied.

\medskip

We now show that Conditions~(1) and (2) are satisfied, starting with Condition~(1). 

\begin{lemma}\label{lem:var2lem1}
	Let $t'$ be a copy of $t$ and $m'$ a copy of $m$. For $i=1,2$, all $(t',m')\in \text{dom}(\Bmf_{i})$, the witness $d$ of $(t',m')$ in $\Amf_{i}$, and all $\gamma(x)\in \text{cl}(\Xi)$: 
	$$
	\Bmf_{i} \models \gamma(t',m') \quad \Leftrightarrow 
	\quad \gamma(x) \in t_{\Amf_{i}}(d)\quad \Leftrightarrow \quad \gamma(x) \in t.
	$$
\end{lemma}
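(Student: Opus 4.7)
The plan is to prove both equivalences by induction on the structure of $\gamma(x)\in\text{cl}(\Xi)$. The second equivalence ($\gamma(x)\in t_{\Amf_i}(d)\iff\gamma(x)\in t$) is immediate from the choice of witness, since by construction $t'$ is a copy of $t_{\Amf_i}(d)$ and hence $t=t_{\Amf_i}(d)$ literally.

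For the first equivalence the base cases are the atomic formulas $R(x)$ and $R(x,x)$ (and $x=x$), which hold by the definition of $\Bmf_i$ on singletons (where the atoms of $(t',m')$ are interpreted exactly according to $t$). The Boolean cases are routine from the induction hypothesis. Thanks to the normal form, the only quantified formulas that occur as subformulas of $\varphi,\psi$ with at most $x$ free are of the form $\forall y\,\alpha(x,y)$ and, via closure under negation, $\exists y\,\chi(x,y)$ with $\alpha,\chi$ quantifier-free. Since a quantifier-free two-variable formula is fully determined by the $1$-type of $x$, the $1$-type of $y$, the link-type between them, and whether $x=y$, the existential case reduces to the following claim: for the witness $d$ of $(t',m')$ and for every pair $(l,s)$ consisting of a link-type $l$ and a $1$-type $s$, the pair $(l,s)$ is realized at $d$ in $\Amf_i$ iff there exists $(s',n')\in\text{dom}(\Bmf_i)$ with $s'$ a copy of $s$ and $l_{\Bmf_i}((t',m'),(s',n'))=l$.

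The ``only if'' direction is where the properties (P1)--(P5) of the construction are used. Given $d'\in\text{dom}(\Amf_i)$ realizing $(l,s)$ at $d$, the appropriate clause of the construction supplies a matching $(s',n')\in\text{dom}(\Bmf_i)$: if $m(d')$ is a king mosaic one appeals to (P1) when $(t',m')$ is a new king, to (P2) when $(t',m')$ is a semi-pawn, and to (P4)(b) when $(t',m')$ is a full pawn; if $m(d')$ is a pawn mosaic one uses (M) together with (P3), (P4)(a), or (P5) respectively. In every sub-case the witness $d''$ of the resulting $(s',n')$ satisfies $t_{\Amf_i}(d'')=s$ and $d''\sim_{\text{FO}^2,\tau}d'$, so by Lemma~\ref{lem:guardedbisim} the existentially quantified formula is still satisfied at $d''$, and the link $l_{\Bmf_i}((t',m'),(s',n'))$ equals $l_{\Amf_i}(d,d'')=l$ (either by direct definition in (P1)/(P3) or via the sets $L_i(\cdot,\cdot)$ together with (P2), (P4), (P5)). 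The ``if'' direction is immediate: every link-type in $\Bmf_i$ is defined as $l_{\Amf_i}(d_1,d_2)$ for witnesses $d_1,d_2$ of the connected nodes, so the pair $(l,s)$ realized at $(t',m')$ in $\Bmf_i$ is also realized at $d$ in $\Amf_i$.

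The main obstacle is the case split over (P1)--(P5), together with verifying in each sub-case that the chosen witness $d''$ indeed lies in the FO$^2(\tau)$-bisimilarity class that corresponds to the generator of $n'$ (and, for full pawns, that the copy budgets $k_1,k_2$ suffice to provide all required combinations of link-types and $1$-types). Once this bookkeeping is set up, the inductive step for the existential quantifier goes through via the quantifier-free characterization of two-variable formulas by $1$-types and link-types.
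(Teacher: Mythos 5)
Your plan matches the paper's proof: the second equivalence is definitional, atomic cases and Booleans are immediate, and the existential case reduces to relating $(l,s)$-pairs at $d$ in $\Amf_i$ to neighbors of $(t',m')$ in $\Bmf_i$, with the ``only if'' direction handled by a case split over (P1)--(P5)/(M) exactly as the paper's Claim~1 does.

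One small imprecision: you phrase the key step as an iff, ``$(l,s)$ is realized at $d$ iff there exists $(s',n')$ with $l_{\Bmf_i}((t',m'),(s',n'))=l$.'' The ``if'' direction of that iff is too strong: for pawn--pawn links the construction (via the sets $L_i$ and (P2), (P4), (P5)) only guarantees $l=l_{\Amf_i}(d_1,d_2)$ for \emph{some} $d_1,d_2$ with the right $1$-types and mosaics, not that $d_1$ is the witness $d$ of $(t',m')$. That still suffices, because $t_{\Amf_i}(d_1)=t=t_{\Amf_i}(d)$, hence $\exists y\,\beta(x,y)\in t$ follows from the realization at $d_1$. Also, the appeal to Lemma~\ref{lem:guardedbisim} is unnecessary here; all you need is that $d''$ has $1$-type $s$ and the link-type equals $l$, since the quantifier-free matrix is determined by these data.
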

\begin{proof}\
	The equivalence `$\gamma(x) \in t_{\Amf_{i}}(d)$ iff $\gamma(x) \in t$' follows from the definition of witnesses $d$ of $(t',m')$. We thus show the first equivalence.  
	For $\gamma(x)$ of the form $R(x)$ or $R(x,x)$ the equivalence holds by definition. It thus suffices to show the first equivalence for  existentially quantified $\gamma(x)=\exists y \beta(x,y)$ with $\beta(x,y)$ quantifier-free.
	
	\medskip
	
	($\Rightarrow$) It suffices to observe that the following holds for all $(t_{1}',m_{1}')\in \text{dom}(\Bmf_{i})$: if $l=l_{\Bmf_{i}}((t_{1}',m_{1}'),(t_{2}',m_{2}'))$ for some $(t_{2}',m_{2}')\in \text{dom}(\Bmf_{i})$, then there exist $d_{1},d_{2}$ with $m_{j}=m(d_{j})$
	and $t_{j}=t_{\Amf_{j}}(d_{j})$ for $j=1,2$ such that $l=l_{\Amf_{i}}(d_{1},d_{2})$.
	
	\medskip

	($\Leftarrow$) We show the following

\medskip
\noindent	
	Claim 1. Let $d_{1}$ be the witness for $(t_{1}',m_{1}')$. If  $l=l_{\Amf_{i}}(d_{1},d_{2})$ for some $d_{2}\in \text{dom}(\Amf_{i})$, then there exists $(t_{2}',m_{2}')$ such that $m_{2}=m(d_{2})$, $t_{2}= t_{\Amf_{i}}(d_{2})$, and $l=l_{\Bmf_{i}}((t_{1}',m_{1}'),(t_{2}',m_{2}'))$. 
	
	\medskip
	For the proof of Claim~1 let $d_{1}$ be the witness for $(t_{1}',m_{1}')$ and  $l=l_{\Amf_{i}}(d_{1},d_{2})$ for some $d_{2}\in \text{dom}(\Amf_{i})$.
	
	\medskip
	
	Case 1. $t_{\Amf_{i}}(d_{1})$ is an $i$-king in king mosaic $m=m(d_{1})$. 
	
	If $t_{\Amf_{i}}(d_{2})$ is an $i$-king in king mosaic $m(d_{2})$, then  $(t_{\Amf_{i}}(d_{2}),m(d_{2}))$ is as required. 
	
	If $t_{\Amf_{i}}(d_{2})$ is an $i$-pawn in king mosaic $m(d_{2})$, then by (P1) there exists $(t_{2}',m(d_{2}))$ such that $t_{2}= t_{\Amf_{i}}(d_{2})$ and $l=l_{\Bmf_{i}}((t_{\Amf_{i}}(d_{1}),m(d_{1})),(t_{2}',m(d_{2}))$.
	Then $(t_{2}',m(d_{2}))$ is as required.
	
	If $t_{\Amf_{i}}(d_{2})$ is an $i$-pawn in pawn mosaic $m(d_{2})$, then by (M) and (P3) there exists a full $i$-pawn $(t_{2}',m')$ such that $t_{2}= t_{\Amf_{i}}(d_{2})$ and $l=l_{\Bmf_{i}}((t_{\Amf_{i}}(d_{1}),m(d_{1})),(t_{2}',m'))$, as required.
	
	\medskip
	
	Case 2. $(t_{1}',m_{1}')$ is a semi $i$-pawn. The claim follows from (P1) if $t_{\Amf_{i}}(d_{2})$ is an $i$-king in king mosaic $m(d_{2})$. For $t_{\Amf_{i}}(d_{2})$ an $i$-pawn in king mosaic $m(d_{2})$, the claim follows from (P2). For $t_{\Amf_{i}}(d_{2})$ an $i$-pawn in a pawn mosaic $m(d_{2})$, the claim follows from (P4).

	\medskip
	
	Case 3. $(t_{1}',m_{1}')$ is a full $i$-pawn.
	The claim follows from (P3) if $t_{\Amf_{i}}(d_{2})$ is an $i$-king in king mosaic $m(d_{2})$. For $t_{\Amf_{i}}(d_{2})$ an $i$-pawn in king mosaic $m(d_{2})$, the claim follows from (P4). For $t_{\Amf_{i}}(d_{2})$ an $i$-pawn in a pawn mosaic $m(d_{2})$, the claim follows from (P5).
\end{proof}
We now prove Condition~(2). The \emph{restriction} $l_{|\tau}$ of a link-type $l$ to a signature $\tau$ is the set of all $R(x,y)$ and $R(y,x)$ in $l$ with $R\in \tau$. Any such restriction is called a \emph{$\tau$-link}.
%
%
%

Note first that for all $(t_{1}',m'), (t_{2}',m')\in \text{dom}(\Bmf_{1})\cup \text{dom}(\Bmf_{2})$ there exists a generator $g$ of $m'$ and witnesses $d_{1}$ for $(t_{1}',m')$ and $d_{2}$ for $(t_{2}',m')$. Thus $d_{1}\sim_{\text{FO}^{2},\tau} d_{2}$ and $t_{i}=d_{\Amf_{j}}(d_{i})$ for appropriate $j\in \{1,2\}$. Thus 	
$\chi(x)\in t_{1}$ iff $\chi(x)\in t_{2}$ for any formula $\chi(x)$ of the form $R(x)$ or $R(x,x)$ with $R\in \tau$. We obtain from Lemma~\ref{lem:var2lem1}
that $(t_{1}',m')$ and $(t_{2}',m')$ satisfy the same atomic formulas $R(x)$ and $R(x,x)$ with $R\in \tau$. To fully check Conditions~(i) and (ii) for FO$^{2}(\tau)$-bisimulations, we introduce some notation.
For $(t_{1}',m_{1}')\in \text{dom}(\Bmf_{i})$, any copy $m_{2}'$ of a mosaic, and any $\tau$-link $h$, we write 
$$
(t_{1}',m_{1}') \rightarrow_{h} m_{2}'
$$
if there exists $(t_{2}',m_{2}')\in \text{dom}(\Bmf_{i})$ such that 
$$
h=l_{\Bmf_{i}}((t_{1}',m_{1}'),(t_{2}',m_{2}'))_{|\tau} 
$$
Then it suffices to show the following
\begin{lemma}
	For $(t_{1}',m_{1}'), (t_{2}',m_{1}')\in \text{dom}(\Bmf_{1})\cup \text{dom}(\Bmf_{2})$, any copy $m_{2}'$ of a mosaic, and any $\tau$-link $h$: 
	$$
	(t_{1}',m_{1}') \rightarrow_{h} m_{2}' \quad \Leftrightarrow \quad
	(t_{2}',m_{1}') \rightarrow_{h} m_{2}'
	$$
\end{lemma}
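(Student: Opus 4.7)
The plan is to prove the forward direction only, by symmetry of the claimed equivalence. The first key observation I would establish is that the selected witnesses $d_1\in\text{dom}(\Amf_i)$ of $(t_1',m_1')$ and $d_2\in\text{dom}(\Amf_j)$ of $(t_2',m_1')$ are FO$^2(\tau)$-bisimilar: the construction chooses witnesses so that for a pawn-mosaic copy $m_1'$ both are FO$^2(\tau)$-bisimilar to the common generator of $m_1'$, while for a king mosaic $m_1'$ all elements with mosaic $m_1$ lie in a single FO$^2(\tau)$-bisimilarity class. In particular $d_1$ and $d_2$ realize the same $1$-type, so $t_1=t_2$.

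Given a witness $(s_1',m_2')\in\text{dom}(\Bmf_i)$ of $(t_1',m_1')\to_h m_2'$, my strategy is to produce some $e_2\in\text{dom}(\Amf_j)$ with $l_{\Amf_j}(d_2,e_2)_{|\tau}=h$, $m(e_2)=m_2$, and (when the $\Bmf_j$-side matching clause requires it) $e_2$ FO$^2(\tau)$-bisimilar to the chosen generator of $m_2'$, and then to use $e_2$ to instantiate the corresponding clause of the $\Bmf_j$ construction. I would split on how $l_{\Bmf_i}((t_1',m_1'),(s_1',m_2'))$ was defined. In the \emph{witness clauses} (P1), (P3), and their fallbacks (applicable whenever at least one endpoint is a new king), the link is exactly $l_{\Amf_i}(d_1,e_1)$ for the selected witness $e_1$ on the $m_2'$-side, so $h=l_{\Amf_i}(d_1,e_1)_{|\tau}$, and $e_2$ arises from a single one-step back-and-forth from $d_1\sim_{\text{FO}^2,\tau}d_2$ with pebble $e_1$. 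In the \emph{$L$-clauses} (P2), (P4), (P5) (applicable when both endpoints are pawns), the link lies in $L_i((t_1,m_1),(s_1,\widetilde m_2))_{|\tau}$, where $\widetilde m_2$ is $m_2'$ for (P4) and $m_2$ for (P2) and (P5). Here I would relocate any realizer of such a link from $\Amf_i$ to $\Amf_j$ by globality of the FO$^2(\tau)$-bisimulation between $\Amf_1$ and $\Amf_2$, then do a second back-and-forth to anchor the first coordinate at $d_2$ when needed. This anchoring is available for (P2) and (P4) because their $L_i$-definitions constrain the king-mosaic and generator-anchored coordinates to the bisimilarity class of $d_1$, while for (P5) I can skip the anchoring and instead appeal to the direct equality $L_i((t_1,m_1),(s_1,m_2))_{|\tau}=L_j((t_1,m_1),(s_1,m_2))_{|\tau}$, which also follows from globality.

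Once $e_2$ is obtained, the matching clause of the $\Bmf_j$ construction yields a copy $(s_2',m_2')\in\text{dom}(\Bmf_j)$ with $l_{\Bmf_j}((t_2',m_1'),(s_2',m_2'))_{|\tau}=h$: if either $(t_2',m_1')$ or the target side is a new $j$-king, then the iteration in (P1)/(P3) (or the fallback, which uses $d_2$ as the chosen witness of $(t_2',m_1')$) produces a copy of $t_{\Amf_j}(e_2)$ with the correct $\tau$-link; if both endpoints are pawns, then $l_{\Amf_j}(d_2,e_2)$ lies in the relevant $L_j$-set (or the $L_i=L_j$ identity applies in the (P5) subcase), so the ``iff''-direction of (P2)/(P4)/(P5) supplies the required copy.

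The main obstacle I expect is the case analysis across the various ``shapes'' of $(t_1',m_1')$, $(s_1',m_2')$, $(t_2',m_1')$, and $(s_2',m_2')$ — each can independently be a new king, semi pawn, or full pawn in its structure, and the clause used on the $\Bmf_i$ side need not coincide with the clause used on the $\Bmf_j$ side. The resolution is the careful anchoring to the generator in (P1), (P3), (P4) together with the direct $L_i=L_j$ identity for (P5), which together ensure that every $\tau$-link realizable in $\Bmf_i$ from $(t_1',m_1')$ is also realizable in $\Bmf_j$ from $(t_2',m_1')$.
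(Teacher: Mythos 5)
Your proposal follows the same basic strategy as the paper --- reduce realizability of a $\tau$-link $h$ from a node $(t',m_1')$ in $\Bmf_i$ to the existence of a realizer pair in $\Amf_i$, observe that the resulting condition is invariant under the relevant changes of $(t',m_1')$, and feed the realizer back through the construction clauses. The paper, however, does this more abstractly: it extracts from the clauses a single \textsf{iff}-characterization of $(t',m_1')\rightarrow_h m_2'$ (one for king/king-adjacent, one for pawn/pawn), then shows the type constraint on the first coordinate can be \emph{dropped} from it via a pebble move inside the bisimulation-class, after which the condition is manifestly independent of $t'$. Your version instead tries to carry a concrete $e_2$ with $l_{\Amf_j}(d_2,e_2)_{|\tau}=h$ across, clause by clause; that is workable but multiplies the case analysis.

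There are two genuine errors. First, ``$d_1$ and $d_2$ realize the same $1$-type, so $t_1=t_2$'' does not follow: $d_1\sim_{\text{FO}^2,\tau}d_2$ only guarantees agreement on the $\tau$-formulas of $\text{cl}(\Xi)$, while $\Amf_1$ and $\Amf_2$ interpret different non-$\tau$ symbols, so in general $t_1\neq t_2$. You happen not to use the equality later, but it silently underlies your second problematic step. Second, the ``direct equality $L_i((t_1,m_1),(s_1,m_2))_{|\tau}=L_j((t_1,m_1),(s_1,m_2))_{|\tau}$'' for the (P5) case is ill-formed: $L_j$ is defined via realizers whose $1$-types are taken in $\Amf_j$, so neither $t_1$ nor $s_1$ is the right argument there, and the two sets need not even be comparable as written. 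What does hold, and what you actually need, is $\bigcup_s L_i((t_1,m_1),(s,m_2))_{|\tau}=\bigcup_s L_j((t_2,m_1),(s,m_2))_{|\tau}$, established exactly by the paper's pebble-move trick: globality gives a realizer pair of $(m_1,m_2,h)$ in $\Amf_j$; since $t_2\in\Phi_j(m_1)$, some element bisimilar to that realizer has $1$-type $t_2$; one further back-and-forth step then yields a realizer anchored at $t_2$. Skipping the anchoring, as you suggest, is not enough because (P5) requires a realizer whose first coordinate has the specific type $t_2$.

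Your case (a) (witness-clause links, (P1)/(P3)/fallbacks) is correct: there $d_1\sim_{\text{FO}^2,\tau}d_2$ gives $e_2$ directly by one back-and-forth step, and the mosaic/generator constraints are preserved since bisimilar elements share their mosaic and bisimilarity class. Your anchoring observation for (P2) and (P4) is also sound, because there the king mosaic (resp.\ the generator of the pawn-mosaic copy) forces all realizers of the first coordinate into a single bisimilarity class containing $d_2$. It is precisely the all-pawn case (P5) where this fails and where the paper's ``drop the type constraint, then re-anchor'' step is needed.
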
 
%

\begin{proof}
	We make a case distinction and first assume that $m_{1}$ or $m_{2}$ is a king mosaic. By construction, then $(t_{1}',m_{1}') \rightarrow_{h} m_{2}'$ implies that for the generators $g_{1},g_{2}$ of $m_{1}',m_{2}'$
	there are $d_{1} \sim_{\text{FO}^{2},\tau} g_{1}$ and $d_{2} \sim_{\text{FO}^{2},\tau} g_{2}$ such that 
	$h= l_{\Amf_{i}}(d_{1},d_{2})_{|\tau}$, for appropriate $i$.
	But then, by the definition of mosaics and FO$^{2}(\tau)$-bisimulations, for all 
	$d_{1} \sim_{\text{FO}^{2},\tau} g_{1}$ there exists $d_{2} \sim_{\text{FO}^{2},\tau} g_{2}$ such that 
	$h= l_{\Amf_{i}}(d_{1},d_{2})_{|\tau}$, for appropriate $i$.
	But then $(t_{2}',m_{1}') \rightarrow_{h} m_{2}'$ for all 
	$(t_{2}',m_{1}')$.
	
	\medskip
	
	Now assume that $m_{1}$ and $m_{2}$ are both pawn mosaics. By construction,
	then $(t_{1}',m_{1}') \rightarrow_{h} m_{2}'$ iff 
	there are $d_{1},d_{2}$ with $m_{1}=m(d_{1})$ and $m_{2}=m(d_{2})$ such that 
	$t_{1}=t_{\Amf_{i}}(d_{1})$ and $h= l_{\Amf_{i}}(d_{1},d_{2})_{|\tau}$, for appropriate $i$.
	But by the definition of mosaics and FO$^{2}(\tau)$-bisimulations, the latter is the case 
	iff there are $d_{1},d_{2}$ with $m_{1}=m(d_{1})$ and $m_{2}=m(d_{2})$ such that $h= l_{\Amf_{i}}(d_{1},d_{2})_{|\tau}$, for appropriate $i$.
	This condition does not depend on $t_{1}'$ and so $(t_{2}',m_{1}') \rightarrow_{h} m_{2}'$ follows.
\end{proof}

\lemcorrectfotwo*

\begin{proof}
  The proof is essentially the same as the proof for
  Lemma~\ref{lem:correct2exp}; we give it here for the sake of
  completeness. 

  \smallskip
  $(\Rightarrow)$ If $M$ accepts $w$, there is a computation tree of $M$ on $w$. We construct a
single model $\Amf$ of $\varphi'$ as follows. Let $\Amf^*$ be the infinite tree-shaped structure
that represents the computation tree of $M$ on $w$ as described above,
that is, configurations are represented by sequences of $2^n$ elements
linked by $S$. Moreover, all elements of a configuration are labeled
with $B_\forall$, $B_\exists^1$, or $B_\exists^2$ depending on whether
the configuration is universal or existential, and in the latter case
the superscript indicates which choice has been made for the
existential state. Finally, the first element of the first successor
configuration of a universal configuration is labeled with $Z$.
In particular, 
$\Amf^*$ only interprets the symbols in $\tau$ non-empty. Now, we
obtain structures $\Amf_k$, $k<2^n$ from $\Amf^*$ by interpreting
non-$\tau$-symbols as follows: 
\begin{itemize}
	
  \item the entire domain of $\Amf_{k}$ satisfies $I$;


  \item the $U$-counter starts at $0$ at the root and counts modulo
    $2^n$ along each $S$-path;

  \item the $V$-counter starts at $k$ at the root and
    counts modulo $2^n$ along each $S$-path;

  \item the auxiliary concept names of the shape $A_\sigma^i$ and
    $A_\sigma'$ are interpreted in a minimal way so as to satisfy the
    sentences listed above. Note that the sentences are Horn, thus
    there is no choice.

\end{itemize}
Now obtain $\Amf$ from $\Amf^*$ and the $\Amf_k$ as follows.  First,
create a both side infinite $R$-path \[\ldots
b_{-2}Rb_{-1}Rb_0Rb_1Rb_2\ldots \] and realize the corresponding
$A$-counter along the path and label every $b_{3k}$, $k\in\mathbb{Z}$,
with $X$. Then, add all $\Amf_k^*$ to every node $b_{3k}$, $k\in
\mathbb{Z}$, on the path by identifying the roots of the $\Amf_k$ with
the respective node on the path.  Moreover, add to $\Amf$ three
elements $a_0,a_1,a_2$ such that 
$(a_0,a_1),(a_1,a_2),(a_2,a_0)\in R^\Amf$ $a_0\in X^\Amf$, $a_0\in
Y^\Amf$, and $a_0\in
A^\Amf$. Finally, add a copy of $\Amf^*$ to \Amf by identifying the
root of $\Amf^*$ with $a_0$. We claim that $\Amf$ is as required. In
particular, $\Amf,a_0$ is a model of $\varphi\wedge A(x)$, $\Amf,b_0$
is a model of $\varphi\wedge \neg A(x)$, and the set $S$ of all
pairs
\begin{itemize}

  \item $(a_i,b_{i+3k})$ with
    $k\in\mathbb{Z}$, $i\in\{0,1,2\}$, and 

  \item $(e,e')$ with $e'$ copy of $e$ in some $\Amf_k$,

\end{itemize}
is an FO$^2(\tau)$-bisimulation on $\Amf$ with $(a_0,b_0)\in S$.

$(\Leftarrow)$ Let $\Amf,\Bmf$ be a models of $\varphi$ such that
$\Amf,a\sim_{\text{FO}^2,\tau}\Bmf,b$ for some elements $a,b$ with
$a\in A^\Amf$, $b\notin A^{\Bmf}$. As it was argued above, due to the
three-element $R$-loop enforced at $a$ via $\varphi_0'$,
from $b$ there has to be an
outgoing infinite $R$-path on which every third element
FO$^2(\tau)$-bisimilar, and thus the $S$-trees starting at these
elements are also FO$^2(\tau)$-bisimilar.
(There is also an incoming infinite $R$-path with
this property, but it is not relevant for the proof.) All those
$S$-trees are additionally labeled with some auxiliary relation
symbols not in $\tau$, depending on the distance from $b$. However, it
can be shown using the arguments that accompanied the construction of
$\varphi'$ that all $S$-trees contain a computation tree of $M$ on
input $w$. Hence, $M$ accepts $w$.
\end{proof}
Recall the definition of $\varphi''$ and $\tau'$ from the paper. Then it suffices to prove the following.

\begin{restatable}{lemma}{lemcorrectfotwosig}\label{lem:correct2exp-fo2sig}
	$M$ accepts the input $w$ iff there exists models $\Amf,\Bmf$ of
	$\varphi''$
	and elements $a\in A^\Amf$, $b\notin A^\Bmf$ 
	such that $\Amf,a\sim_{\text{FO}^2,\tau'}\Bmf,b$.
\end{restatable}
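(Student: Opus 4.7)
The plan is to prove both directions of the equivalence via Lemma~\ref{lem:correct2exp-fo2}, exploiting that $\chi_E(d)$ is equivalent to the existence of a three-step directed path from $d$ first along $R_E$ and then twice along $N$, ending at an element of $A$: a property expressible using only symbols in $\tau' \cup \{A\}$, whose outcome can freely differ between two $\text{FO}^2(\tau')$-bisimilar pointed structures because $A \notin \tau'$.

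For the $(\Leftarrow)$ direction, given $\Amf,\Bmf\models\varphi''$ with $a\in A^\Amf$, $b\notin A^\Bmf$ and $\Amf,a\sim_{\text{FO}^2,\tau'}\Bmf,b$, I will expand $\Amf$ and $\Bmf$ to structures $\Amf',\Bmf'$ over the signature of $\varphi'$ by setting, for every replaced unary $E$ and every element $d$, $E^{\Amf'}(d) := 1$ iff $\Amf\models\chi_E(d)$, and symmetrically for $\Bmf'$. Since $\varphi''$ is obtained from $\varphi'$ by substituting $\chi_E(z)$ for $E(z)$ with $z\in\{x,y\}$, the expansions satisfy $\varphi'$. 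As $\tau\subseteq\tau'$, every partial $\tau'$-isomorphism is a partial $\tau$-isomorphism on the $\tau$-reducts, which are unaffected by the expansion, so the given bisimulation restricts to an $\text{FO}^2(\tau)$-bisimulation witnessing $\Amf',a\sim_{\text{FO}^2,\tau}\Bmf',b$, and Lemma~\ref{lem:correct2exp-fo2} then yields that $M$ accepts~$w$.

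For the $(\Rightarrow)$ direction, assume $M$ accepts $w$. Lemma~\ref{lem:correct2exp-fo2} supplies a model $\Cmf\models\varphi'$ with elements $a,b$ satisfying $a\in A^\Cmf$, $b\notin A^\Cmf$, and a witnessing bisimulation $S_0$ for $\Cmf,a\sim_{\text{FO}^2,\tau}\Cmf,b$. I plan to take $\Amf=\Bmf:=\Cmf^+$, where $\Cmf^+$ extends $\Cmf$ by adding, for each $d\in\text{dom}(\Cmf)$ and each replaced $E$, two fresh nodes $y_{d,E}$ and $x_{d,E}$ together with the atoms $R_E(d,y_{d,E})$, $N(y_{d,E},x_{d,E})$, and $N(x_{d,E},z_{d,E})$, where $z_{d,E}:=a$ if $E^\Cmf(d)$ and $z_{d,E}:=b$ otherwise. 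The interpretation of the symbols of $\tau$ on the new nodes is empty except for an $S$-self-loop on each, chosen to satisfy the unguarded axiom $\forall x\exists y\, S(x,y)$ of $\varphi''$. A direct computation yields $\chi_E^{\Cmf^+}(d) \Leftrightarrow A^{\Cmf^+}(z_{d,E}) \Leftrightarrow E^\Cmf(d)$ for every $d\in\text{dom}(\Cmf)$, and $\chi_E^{\Cmf^+}$ is false at every new node; hence $\Cmf^+\models\varphi''$, since at $\Cmf$-nodes each conjunct specialises to the corresponding conjunct of $\varphi'$, while at new nodes all unary antecedents $\chi_E,X,Y,A$ are false. I then extend $S_0$ to $S := S_0 \cup \{(y_{d,E},y_{d',E}),(x_{d,E},x_{d',E}) : (d,d')\in S_0,\ E\text{ replaced}\}$; the endpoint pairs $(z_{d,E},z_{d',E})\in\{a,b\}^2$ already lie in $S_0$ thanks to $\Cmf,a\sim_{\text{FO}^2,\tau}\Cmf,b$.

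The main obstacle will be to verify that $S$ is indeed an $\text{FO}^2(\tau')$-bisimulation, which amounts to a case analysis on zigzag moves involving the new nodes. The only slightly delicate subcase is to find a partner $d_2$ for a gadget node $y_{e,E}$ starting from a pair $(d,d')\in S_0$ with $e=d'\neq d$: setting $d_2 := y_{d,E}$ works because $(d',d)\in S_0$ places $(y_{d',E},y_{d,E})$ in $S$, while $R_E(d',y_{d,E}) = R_E(d,y_{d',E}) = 0$ ensures that the partial $\tau'$-isomorphism conditions are satisfied. The central conceptual point is that the $A$-labels at the endpoints $z_{d,E}$, which are exactly what allow $\chi_E$ to faithfully track $E^\Cmf$, are entirely invisible to the $\tau'$-bisimulation, so the single extended structure $\Cmf^+$ can play simultaneously the role of $\Amf$ with distinguished $a\in A$ and of $\Bmf$ with distinguished $b\notin A$.
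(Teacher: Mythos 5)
Your $(\Leftarrow)$ direction is fine and simply unpacks what the paper dismisses as ``an immediate consequence of Lemma~\ref{lem:correct2exp-fo2}'': expand $\Amf,\Bmf$ by interpreting each replaced $E$ via $\chi_E$, observe that this yields models of $\varphi'$, and restrict the $\tau'$-bisimulation to a $\tau$-bisimulation since $\tau\subseteq\tau'$.

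The $(\Rightarrow)$ direction, however, has a genuine gap. In your $\Cmf^+$ every $d$ receives a single private gadget whose $N$-endpoint $z_{d,E}$ is $a$ or $b$ according to whether $E^\Cmf(d)$ holds. Since $N\in\tau'$, the incoming $N$-edges at $a$ and at $b$ are $\tau'$-observable, and your bisimulation $S$ forces the partner of $x_{d,E}$ to be some $x_{e,E}$ with $(d,e)\in S_0$. Consequently, for the pair $(a,b)\in S_0$ and the element $x_{d,E}$, the back-and-forth conditions require the existence of some $e$ with $(d,e)\in S_0$ and $E^\Cmf(e)\neq E^\Cmf(d)$: only then can you match the presence/absence of the $N$-edge into $a$ against that into $b$. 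This fails whenever an entire $\text{FO}^2(\tau)$-bisimulation class of $\Cmf$ is uniform in $E$, which does happen in the model supplied by Lemma~\ref{lem:correct2exp-fo2} (for instance, no element in the bisimulation class of $a_1$ satisfies the counter bit $V_0$, so for $E=V_0$ the endpoint is $b$ for that whole class, and then $a$ has no incoming $N$-edge from any $S$-related gadget while $b$ does). Your ``delicate subcase'' discussion concerns a different and easier issue and does not touch this one. The paper avoids the problem by a more careful construction: gadget chains are attached only at $X$-elements, $R_E$-edges go from \emph{every} $d$ with $E^\Amf(d)$ to the single distinguished gadget at $a_0$ (whose $A$-endpoint is what makes $\chi_E$ true), and then additional ``decoy'' $R_E$-edges to the gadgets at other $X$-elements are distributed according to whether each bisimulation class $\Delta_i$ is contained in $E^\Amf$, disjoint from it, or mixed, precisely so that $\tau'$-bisimilarity of each $\Delta_i$ is preserved. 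Without this case split on $\Delta_i$ versus $E^\Amf$ (or some equivalent device), the construction does not yield an $\text{FO}^2(\tau')$-bisimulation.
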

\begin{proof}
	($\Leftarrow$) is an immediate consequence of Lemma~\ref{lem:correct2exp-fo2}.
	
	\medskip
	
	($\Rightarrow$) We expand the model $\Amf$ constructed in the proof of Lemma~\ref{lem:correct2exp-fo2} and obtain a model $\Amf'$ of $\varphi''$
	such that $a_0\in A^{\Amf'}$, $b_0\notin A^{\Amf'}$ 
	and $\Amf',a_0\sim_{\text{FO}^2,\tau'}\Amf',b_0$.
	
In detail, to define $\Amf'$ we keep $\Amf$ but attach to every $X$-element $d$ of $\Amf$ an $N^{-1}$ path of length $2$ from $d$ to a fresh node $(d,E)$,
for every $E\in \text{sig}(\varphi')\setminus (\tau \cup \{A\})$. 
Now we proceed as follows for every $E\in
\text{sig}(\varphi')\setminus (\tau \cup \{A\})$: add
$(d,(a_{0},E))$ to $R_{E}^{\Amf'}$ iff $d\in E^{\Amf}$, for all $d\in \text{dom}(\Amf)$. 
This ensures that $\Amf'\models \chi_{E}(d)$ iff $d\in E^{\Amf}$ for all such $d$.
Let $\Delta_{0},\Delta_{1},\ldots$ be the maximal subsets of $\text{dom}(\Amf)$ such that all elements of $\Delta_{i}$ are 
FO$^{2}(\tau)$-bisimilar in $\Amf$. We add additional pairs to $R_{E}^{\Amf'}$ in such a way 
that all elements of any $\Delta_{i}$ are also FO$^{2}(\tau')$-bisimilar in $\Amf'$:
\begin{itemize}
	\item if $\Delta_{i}\supseteq E^{\Amf}$ then also add
	$(d,(d',E))$ to $R_{E}^{\Amf'}$ for all $X$-elements $d'$ and $d\in \Delta_{i}$;
	\item if $\Delta_{i}\cap  E^{\Amf}=\emptyset$, then we do not add any $(d,(d',E))$ to $R_{E}^{\Amf'}$, for $X$-elements $d'$ and $d\in \Delta_{i}$;
	\item otherwise we make sure that for every $X$-element $d'$ there exist both
	$d\in \Delta_{i}$ with $(d,(d',E))\in R_{E}^{\Amf'}$ and $e\in
	\Delta_{i}$ with $(e,(d',E))\not\in R_{E}^{\Amf}$ and we make
	sure for every $d\in \Delta_{i}$ there exist both an
	$X$-element $d'$ with $(d,(d',E))\in R_{E}^{\Amf'}$ and an
	$X$-element $e'$ with $(d,(e',E))\not\in R_{E}^{\Amf'}$. This
	is easily achieved without adding any additional pairs of the
	form $(d,(a_{0},E))$ to $R_{E}^{\Amf'}$.
\end{itemize}    
This finishes the definition of $\Amf'$. It is easy to see that $\Amf'$ is as required.   
\end{proof}

\end{document}